\newtheorem{theorem}{Theorem}
\newtheorem{corollary}{Corollary}
\newtheorem{lemma}{Lemma}
\newtheorem{assumption}{Assumption}
\newtheorem{example-non*}{Example}
\newtheorem{remark}{Remark}
\DeclareMathOperator*{\argmax}{arg\,max}
\newcommand{\delete}[1]{}
\newcommand{\delete}[1]{#1}
\newcommand{\comment}[1]{}
\newcommand{\remove}[1]{}
\newcommand{\add}[1]{#1}
\newcommand{\remove}[1]{#1}
\newcommand{\add}[1]{}
\newcommand{\bremove}[1]{}
\newcommand{\badd}[1]{#1}
\newcommand{\bremove}[1]{#1}
\newcommand{\badd}[1]{}
\newcommand{\rmv}[1]{}
\newcommand{\rmv}[1]{{\color{red}#1}}
\newcommand{\newc}[1]{{\color{blue}#1}} 
\newcommand{\newc}[1]{#1}
\newcommand{\rev}[1]{{\color{blue}#1}} 
\newcommand{\clar}[1]{\textbf{\color{green}(NEED CLARIFICATION: #1)}}
\newcommand{\rev}[1]{#1}
\newcommand{\clar}[1]{}
\newcommand{\com}[1]{\textbf{\color{red}(COMMENT: #1)}} 
\newcommand{\com}[1]{}
\begin{document}
\title{Distributed Online Learning in Social Recommender Systems}
\author{\IEEEauthorblockN{Cem Tekin,~\IEEEmembership{Member,~IEEE}, Simpson Zhang, Mihaela van der Schaar,~\IEEEmembership{Fellow,~IEEE}\\}
\thanks{This work is partially supported by the grants NSF CNS 1016081 and AFOSR DDDAS.}
}
\maketitle

\begin{abstract}
In this paper, we consider decentralized sequential decision making in distributed online recommender systems, where items are recommended to users based on their search query as well as their specific background including history of bought items, gender and age, all of which comprise the context information of the user.
In contrast to centralized recommender systems, in which there is a single centralized seller who has access to the complete inventory of items as well as the complete record of sales and user information, in decentralized recommender systems each seller/learner only has access to the inventory of items and user information for its own products and not the products and user information of other sellers, but can get commission if it sells an item of another seller. Therefore the sellers must distributedly find out for an incoming user which items to recommend (from the set of own items or items of another seller), in order to maximize the revenue from own sales and commissions.
We formulate this problem as a cooperative contextual bandit problem, analytically bound the performance of the sellers compared to the best recommendation strategy given the complete realization of user arrivals and the inventory of items, as well as the context-dependent purchase probabilities of each item, and verify our results via numerical examples on a distributed data set adapted based on Amazon data.
%
We evaluate the dependence of the performance of a seller on the inventory of items the seller has, the number of connections it has with the other sellers, and the commissions which the seller gets by selling items of other sellers to its users.
%
\end{abstract}

\begin{IEEEkeywords}
Multi-agent online learning, collaborative learning, distributed recommender systems, contextual bandits, regret.
\end{IEEEkeywords}

\vspace{-0.1in}
\section{Introduction}\label{sec:intro}

One of the most powerful benefits of a social network is the ability for cooperation and coordination on a large scale over a wide range of different agents \cite{chen2012technological}. By forming a network, agents are able to share information and opportunities in a mutually beneficial fashion. For example, companies can collaborate to sell products, charities can work together to raise money, and a group of workers can help each other search for jobs. Through such cooperation, agents are able to attain much greater rewards than would be possible individually. But sustaining efficient cooperation can also prove extremely challenging. First, agents operate with only incomplete information, and must learn the environment parameters slowly over time. Second, agents are decentralized and thus uncertain about their neighbor's information and preferences. \newc{Finally, agents are selfish in the sense that, they don't want to reveal their inventory, observations and actions to other agents, unless they benefit.
This paper produces a class of algorithms that effectively addresses all of these issues: at once allowing decentralized agents to take near-optimal actions in the face of incomplete information, while still incentivizing them to fully cooperate within the network.}

The framework we consider is very broad and applicable to a wide range of social networking situations. We analyze a group of agents that are connected together via a fixed network, each of whom experiences inflows of users to its page. Each time a user arrives, an agent chooses from among a set of items to offer to that user, and the user will either reject or accept each item. These items can represent a variety of things, from a good that the agent is trying to sell, to a cause that the agent is trying to promote, to a photo that the agent is trying to circulate. In each application, the action of accepting or rejecting by the user will likewise have a distinct meaning. When choosing among the items to offer, the agent is uncertain about the user's acceptance probability of each item, but the agent is able to observe specific background information about the user, such as the user's gender, location, age, etc. Users with different backgrounds will have different probabilities of accepting each item, and so the agent must learn this probability over time by making different offers.

We allow for cooperation in this network by letting each agent recommend items of neighboring agents to incoming users, in addition to its own items. Thus if the specific background of the incoming user makes it unlikely for him to accept any of the agent's items, the agent can instead recommend him some items from a neighboring agent with more attractive offerings. By trading referrals in this fashion, all of the agents that are connected together can benefit. To provide proper incentives, a commission will be paid to the recommending agent every time an item is accepted by a user from the recommended agent. When defined appropriately, this commission ensures that both sides will benefit each time a recommendation occurs and thus is able to sustain cooperation.

However, since agents are decentralized, they do not directly share the information that they learn over time about user preferences for their own items.  So when the decision to recommend a neighboring agent occurs, it is done based solely on the previous successes the agent had when recommending that neighbor. Thus agents must learn about their neighbor's acceptance probabilities through their own trial and error, unlike in other social learning papers such as \cite{krishnamurthy2012quickest, krishnamurthy2013social, jadbabaie2012non, nedic2009distributed}, where agents share information directly with their neighbors. 

Another key feature of our algorithms is that they are non-Bayesian unlike \cite{krishnamurthy2012quickest, krishnamurthy2013social}. Instead we model the learning through contextual bandits, where the context is based on the user's background. By building upon the theory of contextual bandits, we produce a class of algorithms that allows agents to take near-optimal actions even with decentralized learning. 
\newc{We prove specific bounds for the regret, which is the difference between the total expected reward of an agent using a learning algorithm and the total expected reward of the optimal policy for the agent, which is computed given perfect knowledge about acceptance probabilities for each context. We show that the regret is sublinear in time in all cases.}
\newc{We further show that our algorithms can operate regardless of the specific network topology, including the degree of connectivity, degree distribution, clustering coefficient, etc., although the performance is better if the network is more connected since each agent will have access to more items of other agents.} 

The rest of the paper is organized as follows.
%
Related work is given in Section \ref{sec:related}. The problem formulation is given in Section \ref{sec:probform}. In Section \ref{sec:mainalgo}, we consider the online learning problem involving multiple decentralized agents and analyze its regret. In Section \ref{sec:contextdep} we develop an algorithm to achieve sublinear regret when the purchase probability of the items depend on the other recommended items, and in Section \ref{sec:contextindep} we develop a faster learning algorithm when item purchase probabilities are independent of each other. 
The effect of connectivity between the agents is given in Section \ref{sec:discuss}. In Section \ref{sec:num}, numerical results demonstrating the effects of commissions, size of the set of items of agents and connectivity of agents are given, using an artificial data set which is based on a real data set. Finally, we conclude the paper in Section \ref{sec:conc}.

\section{Related Work}\label{sec:related}

This work represents a significant departure from the other works in contextual bandits, which consider only centralized agents, single arms played at once, and no incentive issues. Most of the prior work on contextual bandits is focused on a single agent choosing one arm at a time based on the context information provided to it at each time slot \cite{slivkins2009contextual, dudik2011efficient, langford2007epoch, chu2011contextual}.
In these works the system is centralized, so the agent can directly access all the arms. Our framework in this paper differs from the centralized contextual bandit framework in two important aspects. First, multiple agents who can only access a subset of arms, and who only get feedback about this subset, cooperate to maximize their total reward. Second, each agent can choose multiple arms at each time slot, which makes the arm selection problem combinatorial in the number of arms.
To the best of our knowledge, our work is the first to provide rigorous solutions for online learning by multiple cooperative agents selecting multiple arms at each time step when context information is present. We had previously proposed a multi-agent contextual bandit framework in \cite{tekin2013bigdata} where each agent only selects a single arm at each time slot. Different from this work, in this paper we assume that an agent can select multiple arms, and the expected reward of the agent from an arm may depend on the other selected arms. This makes the problem size grow combinatorially in the arm space, which requires the design of novel learning algorithms to quicken the learning process. 
Combinatorial bandits \cite{gai2012combinatorial} have been studied before in the multi-armed bandit setting, but to the best of our knowledge we are the first to propose the 
decentralized contextual combinatorial bandit model studied in this paper. 
This decentralization is important because it allows us to analyze a social network framework and the fundamental challenges associated with it including commissions, third-party sellers, etc. We are also able to consider the specific effects of the network structure on the regret in our model. 
In contrast, our approach in \cite{tekin2013bigdata} does not address the network structure concerns.
Several other examples of related work in contextual bandits are \cite{li2010contextual}, in which a contextual bandit model is used for recommending personalized news articles based on a variant of the UCB1 algorithm in \cite{auer} designed for linear rewards, and \cite{crammer2011multiclass} in which  the authors solve a classification problem using contextual bandits, where they propose a perceptron based algorithm that achieves sublinear regret.

%
\newc{Apart from contextual bandits, there is a large set of literature concerned in multi-user learning using a multi-armed bandit framework \cite{anandkumar, liu2010distributed, tekin2012online, 6362216, tekin2012sequencing}.}
%
%
%
%
%
We provide a detailed comparison between our work and related work in multi-armed bandit learning in Table \ref{tab:comparison2}. Our cooperative contextual learning framework can be seen as an important extension of the centralized contextual bandits framework \cite{slivkins2009contextual}. The main differences are that: (i) a three phase learning algorithm with {\em training}, {\em exploration} and {\em exploitation} phases is needed instead of the standard two phase algorithms with {\em exploration} and {\em exploitation} phases that are commonly used in centralized contextual bandit problems; (ii) the adaptive partitions of the context space should be formed in a way that each learner can efficiently utilize what is learned by other learners about the same context; (iii) since each agent has multiple selections at each time slot, the set of actions for each agent is very large, making the learning rate slow. Therefore, the correlation between the arms of the agents should be exploited to quicken the learning process.
In our distributed multi-agent multiple-play contextual bandit formulation, the training phase, which balances the learning rates of the agents, is necessary since the context arrivals to agents are different which makes the learning rates of the agents for various context different.

\begin{table}[t]
\centering
{\fontsize{10}{10}\selectfont
\setlength{\tabcolsep}{.25em}
\begin{tabular}{|l|c|c|c|c|c|}
\hline
&\cite{slivkins2009contextual, dudik2011efficient, langford2007epoch, chu2011contextual} &  \cite{6362216, anandkumar}   &  \cite{tekin2013bigdata} & This work \\
& & \cite{tekin2012sequencing} & & \\
\hline
Multi-agent & no & yes  & yes & yes \\
\hline
Cooperative & N/A & yes  & yes & yes \\
\hline
Contextual & yes & no  & yes & yes\\
\hline
Context arrival  & arbitrary & N/A  & arbitrary & arbitraty\\
process & & & & \\
\hline
(syn)chronous,& N/A & syn & both & both \\
(asyn)chronous& & & & \\
\hline
Regret & sublinear & log  & sublinear & sublinear\\
\hline
Multi-play for  & no & no & no & yes \\
each agent & & & & \\
\hline
Action set size  & no & no & no & yes \\
combinatorial in & & &  &  \\
number of agents &  &  &  &  \\
and items &  & &  &  \\
\hline
Action sets of  & N/A & same & different & different \\
the agents & & & & \\
\hline
\end{tabular}
}
\caption{Comparison with related work in multi-armed bandits.}
\vspace{-0.35in}
\label{tab:comparison2}
\end{table}

There is also an extensive literature on recommender systems that incorporates a variety of different methods and frameworks.  Table \ref{tab:reccomparison} provides a summary of how our work is related to other work. Of note, there are several papers that also use a similar multi-armed bandit framework for recommendations.
For example, \cite{ deshpande2012linear } considers a bandit framework where a recommender system learns about the preferences of users over time as each user submits ratings. It uses a linear bandit model for the ratings, which are assumed to be functions of the specific user as well as the features of the product.
%
\cite{kohli2013fast} is another work that utilizes multi-armed bandits in a recommendation system. It considers a model that must constantly update recommendations as both preferences and the item set changes over time.
%
	
There are also numerous examples of works that do not use a bandit framework for recommendations. One of the most commonly used methods for recommendations are collaborative filtering algorithms such as  \cite{hofmann2004latent, sahoo2012hidden, linden2003amazon, panniello2012comparing, miyahara2000collaborative, o1999clustering, shani2002mdp}, which make recommendations by predicting the user's preferences based on a similarity measure with other users. Items with the highest similarity score are then recommended to each user; for instance items may be ranked based on the number of purchases by similar users. There are numerous ways to perform the similarity groupings, such as the cluster model in \cite{linden2003amazon, o1999clustering} that groups users together with a set of like-minded users and then makes recommendations based on what the users in this set choose. Another possibility is presented in \cite{panniello2012comparing}, which pre-filters ratings based on context before the recommendation algorithm is started. 
	
	
An important difference to keep in mind is that the recommendation systems in other works are a single centralized system, such as Amazon or Netflix. Thus the system has complete information at every moment in time, and does not need to worry about incentive issues or pay commissions. However, in this paper each agent is in effect its own separate recommendation system, since agents do not directly share information with each other. Therefore the mechanism we propose must be applied separately by every agent in the system based on that agent's history of user acceptances. So in effect our model is a giant collection of recommendation systems that are running in parallel. The only cross-over between these different systems is when one agent suggests which of its items should be recommended by another agent. This allows for an indirect transfer of information, and lets that other agent make better choices than it could without this suggested list of items. 

Also, it is important to note that decentralization in the context of our paper does not mean the same thing as in other papers such as \cite{ziegler2005towards}. Those papers assume that the $users$ are decentralized, and develop mechanisms based on trust and the similarity of different users in order to allow $users$ to recommend items to each other. We are assuming that the $agents$ are decentralized, and so each user still only gets a recommendation from one source, it is just that this source may be different depending on which agent this user arrives at. Thus this paper is fundamentally different from the works in that literature.

\begin{table}[t]
\centering
{\fontsize{8}{7}\selectfont
\setlength{\tabcolsep}{.25em}
\begin{tabular}{|l|c|c|c|c|c|c|}
\hline
   & Item-  & Memory- & Uses  & Performan- & Similarity & Central- \\
   & based  & based, & context & ce  & distance & ized(C),   \\
   & (IB), & model- & info. & measure & & Decent- \\
   & user- & based & &  & & ralized(D) \\
   & based & & & & & \\
   & (UB) & & & & & \\
\hline
\cite{ balabanovic1997fab }  & UB & Memory- & No & Ranking  & - & C \\
& & based & & precision & & \\
\hline
\cite{ hofmann2004latent }  & UB & Bayesian- & No & MAE,  & Pearson  & C \\
  & & based latent & & RMS,  & correlation & \\
  & & semantic  & & 0/1 loss & & \\
  & & model & & & & \\
\hline
\cite{ sahoo2012hidden } & UB & Bayesian- & No & Precision\& & Pearson  & C\\
  & & based  & & Recall & correlation & \\
  & & Markov & & & & \\
  & & model & & & & \\
  
\hline
\cite{linden2003amazon} & IB & Cluster model & No & - & Cosine & C \\
\hline
\cite{panniello2012comparing}  & UB & Memory- & Yes & Precision\&  & - & C\\
& & based & & Recall & & \\
\hline
\cite{ miyahara2000collaborative }  & UB & Bayesian  & No & Precision\& & Pearson  & C\\
&  & classifier & & Recall & correlation & \\
& & model & & & & \\
\hline
\cite{o1999clustering}  & UB & Cluster model & No & MAE\& & Pearson  & C\\
& & & & Coverage  & correlation & \\
\hline
\cite{shani2002mdp}  & UB & MDP model & No & Recall & Self-defined  & C\\
& & & & & similarity & \\
\hline
\cite{deshpande2012linear}  & UB & MAB model & No & Reward & Lipschitz   & C \\
& & & & & continuous & \\
\hline
\cite{kohli2013fast}  & UB & MAB model & Yes & Regret & Lipschitz & C \\
& & & & & continuous & \\
\hline
Our  & UB & MAB model & Yes & Regret & H\"{o}lder  & D \\
work & & & & & continuous & \\
\hline
\end{tabular}
}
\caption{Comparison with prior work in recommender systems.}
\vspace{-0.2in}
\label{tab:reccomparison}
\end{table}

\vspace{-0.2in}
\section{Problem Formulation}\label{sec:probform}

There are $M$ decentralized agents/learners which are indexed by the set ${\cal M} := \{1,2,\ldots,M\}$. Let ${\cal M}_{-i} := {\cal M} - \{ i\}$. Each agent $i$ has an inventory of items denoted by ${\cal F}_i$, which it can offer to its users and the users of other agents by paying some commission.
Users arrive to agents in a discrete time setting ($t=1,2,\ldots$).
Agent $i$ recommends $N$ items to its user at each time slot. For example, $N$ can be the number of recommendation slots the agent has on its website, or it can be the number of ads it can place in a magazine. We assume that $N$ is fixed throughout the paper. 
\newc{Each item $f \in {\cal F}_i$ has a fixed price $p^i_{f}>0$. 
We assume that the inventories of agents are mutually disjoint.\footnote{Even when an item is in the inventory of more than one agent, its price can be different among these agents, thus different IDs can be assigned to these items. We do not assume competition between the agents, hence our methods in this paper will work even when the inventories of agents are not mutually disjoint.}} 
\newc{For now, we will assume that all the agents in the network are directly connected to each other, so that any agent can sell items to the users of any other agent directly without invoking intermediate agents. We will discuss the agents in more general network topologies in Section \ref{sec:discuss}.} Let ${\cal F} := \cup_{i \in {\cal M}} {\cal F}_i$ be the set of items of all agents.
We assume that there is an unlimited supply of each type of item. This assumption holds for digital goods such as e-books, movies, videos, songs, photos, etc. 
%
%
An agent does not know the inventory of items of the other agents but knows an upper bound on $|{\cal F}_j|$,\footnote{For a set $A$, $|A|$ denotes its cardinality.} $j \in {\cal M}_{-i}$ which is equal to $F_{\max}$. 

We note that our model is suitable for a broad range of applications. The agent can represent a company, an entrepreneur, a content provider, a blogger, etc., and the items can be goods, services, jobs, videos, songs, etc. The notion of an item can be generalized even further to include such things as celebrities that are recommended to be followed in Twitter, Google+, etc. And the prices that we introduce later can also be generalized to mean any type of benefit the agent can receive from a user. For expositional convenience, we will adopt the formulation of firms selling goods to customers for most of the paper, but we emphasize that many other interpretations are equally valid and applicable.
\vspace{-0.1in}
\subsection{Notation} \label{sec:notation}

Natural logarithm is denoted by $\log(\cdot)$.
For sets ${\cal A}$ and ${\cal B}$, ${\cal A} - {\cal B}$ denotes the elements of ${\cal A}$ that are not in  ${\cal B}$.
$P(\cdot)$ is the probability operator, $E[\cdot]$ is the expectation operator.
For an algorithm $\alpha$, $E_{\alpha}[\cdot]$ denotes the expectation with respect to the distribution induced by $\alpha$.
Let $[t] := \{1,\ldots,t-1\}$. Let $\beta_2 := \sum_{t=1}^\infty 1/t^2$.

\vspace{-0.1in}
\subsection{Definition of users with contexts} \label{sec:regulator}

At each time slot $t=1,2,\ldots$, a user with a specific search query indicating the type of item the user wants, or other information including a list of previous purchases, price-range, age, gender etc., arrives to agent $i$.
We define all the properties of the arriving user known to agent $i$ at time $t$ as the context of that user, and denote it by $x_i(t)$. We assume that the contexts of all users belong to a known space ${\cal X}$, which without loss of generality is taken to be $[0,1]^d$ in this paper, where $d$ is the dimension of the context space.
%
%
Although the model we propose in this paper has synchronous arrivals, it can be easily extended to the asynchronous case where agents have different user arrival rates, and even when no user arrives in some time slots. 
The only difference of this from our framework is that instead of keeping the same time index $t$ for every agent, we will have different time indices for each agent depending on the number of user arrivals to that agent. 
%
%

\vspace{-0.1in}
\subsection{Definition of commissions}\label{sec:regulator}

In order to incentivize the agents to recommend each other's items, they will provide commissions to each other. 
%
%
%
\newc{In this paper we focus on {\em sales commission}, which is paid by the recommended agent to the recommending agent every time a user from the recommending agent purchases an item of the recommended agent. We assume that these commissions are fixed at the beginning and do not change over time.}
\rev{The system model is shown in Fig. \ref{fig:system}.} 
Basically, if agent $i$ recommends an item $f$ of agent $j$ to its user, and if that user buys the item of agent $j$, then agent $i$ obtains a fixed commission which is equal to $c_{i,j}>0$. All of our results in this paper will also hold for the case when the commission is a function of the price of the item $f$ sold by agent $j$, i.e., $c_{i,j}(p^j_{f})$. However we use the fixed commission assumption, since privacy concerns may cause the agent $j$ or the user to not want to reveal the exact sales price to agent $i$. We assume that $c_{i,j} \leq p^j_f$ for all $i,j \in {\cal M}$, $f \in {\cal F}_j$.
Otherwise, full cooperation can result in agent $j$ obtaining less revenue than it would have obtained without cooperation.
Nevertheless, even when $c_{i,j} > p^j_f$ for some $f \in \bar{{\cal F}}_j \subset {\cal F}_j$, by restricting the set of items that agent $j$ can recommend to agent $i$ to ${\cal F}_j-\bar{{\cal F}}_j$, all our analysis for agent $i$ in this paper will hold.
%

Agents may also want to preserve the privacy of the items offered to the users, hence the privacy of their inventory in addition to the privacy of prices. 
This can be done with the addition of a {\em regulator} to the system. 
The regulator is a non-strategic entity whose goal is to promote privacy among the agents. If agent $i$ wants to recommend an item from agent $j$, it can pass this request to the regulator and the regulator can create private keys for agent $j$ and the user of agent $i$, such that agent $j$ sends its recommendation to agent $i$'s website in an encrypted format so that recommendation can only be viewed by the user, who has the key. The regulator can also control the transaction between agent $j$ and agent $i$'s user such that agent $j$ pays its commission to agent $i$ if the item is sold. Note that the regulator does not need to store the previous purchase histories or the inventories of the agents. Also it only regulates transactions between the agents but not the recommendations of an agent to its own users.
%
%
%

\begin{figure}
\begin{center}
\includegraphics[width=0.93\columnwidth]{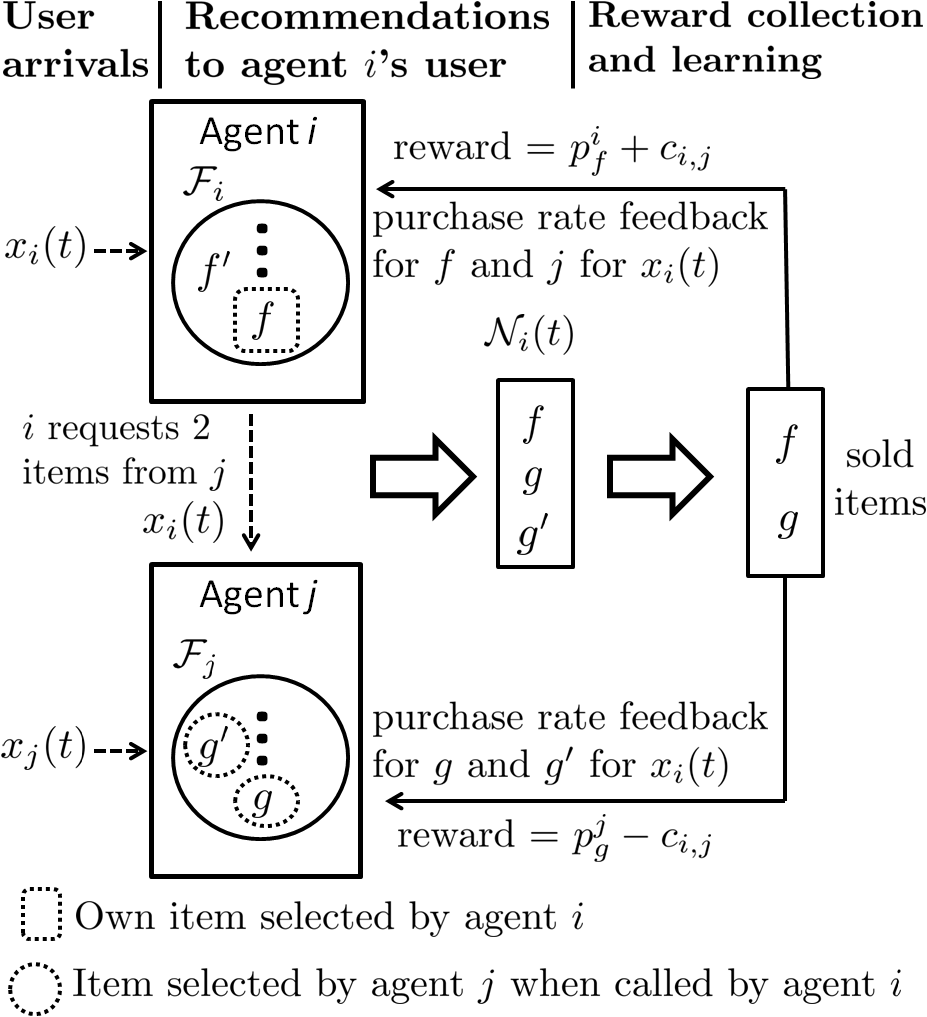}
\caption{Operation of the system for agent $i$ for $N=3$ recommendations. At each time a user arrives to agent $i$ with context $x_i(t)$, agent $i$ recommends a set of its own items and items from other agents, which is denoted by ${\cal N}_i(t)$.} 
\vspace{-0.2in}
\label{fig:system}
\end{center}
\end{figure}

\vspace{-0.1in}
\subsection{Recommendations, actions and purchase probabilities}

The $N$ items that agent $i$ recommends to its user are chosen in the following way:
An item can be chosen from the inventory ${\cal F}_i$ of agent $i$, or agent $i$ can call another agent $j$ and send the context information of $x_i(t)$ of its user, then agent $j$ returns back an item $f'$ with price $p^j_{f'}$\footnote{Another method to preserve the privacy of prices is the following: Agent $j$'s item will be recommended by agent $i$ without a price tag, and when the user clicks to agent $j$'s item it will be directed to agent $j$'s website where the price will be revealed to the user.} to be recommended to agent $i$ based on $x_i(t)$. Let ${\cal N}_i(t)$ be the set of items recommended by agent $i$ to its user at time $t$.
%
%
%
%
Let ${\cal A}_N$ be the set of subsets of ${\cal F}$ with $N$ items. Let ${\cal N} \in {\cal A}_N$ be a set of $N$ recommended items.

Consider agent $i$.  Let ${\cal U}_i := {\cal M}_{-i} \cup {\cal F}_i$. For $u_i \in {\cal F}_i$, let $(u_i,1)$ ($(u_i,0)$) denote the event that $u_i$ is recommended (not recommended) to agent $i$'s user.
For $u_i \in {\cal M}_{-i}$, let $(u_i,n_{u_i})$ denote the event that agent $i$ requests $n_{u_i}$ (distinct) items to be recommended to its user from agent $u_i$.
Let $(\boldsymbol{u}_i,\boldsymbol{n}_{i}) : = \{(u_i, n_{u_i})\}_{u_i \in {\cal U}_{i}}$ be an action for agent $i$.
%
Based on this, let
\begin{align*}
{\cal L}_i &:= \left\{ (\boldsymbol{u}_i,\boldsymbol{n}_{i}) \textrm{ such that }  n_{u_i}
 \in \times \{0,1\} \textrm{ for } u_i \in {\cal F}_i \textrm{ and } \right. \\ 
  &\left. n_{u_i} \in  \{0,\ldots, N\}  \textrm{ for } u_i \in {\cal M}_{-i} \textrm{ and }  \sum_{u_i \in {\cal U}_i} n_{u_i} = N   \right\},
\end{align*}
be the set of actions available to agent $i$.
%
%
We assume that $|{\cal F}_j| \geq N$ for all $j \in {\cal M}$.  
Let $k_i$ be the components of vector $(\boldsymbol{u}_i,\boldsymbol{n}_{i}) \in {\cal L}_i$ with nonzero $n_{u_i}$.
Since $n_{u_i} = 0$ means the corresponding item will not be recommended for $u_i \in {\cal F}_i$, and the corresponding agent will not recommend any items to agent $i$ for $u_i \in {\cal M}_{-i}$, $k_i$ is enough to completely characterize the action of agent $i$.
Thus, for notational simplicity we denote an action of agent $i$ by $k_i$. With an abuse of notation $k_i \in {\cal L}_i$ means that $(\boldsymbol{u}_i,\boldsymbol{n}_{i})$ corresponding to $k_i$ is in ${\cal L}_i$.
For $u_i \in {\cal U}_i$, we let $n_{u_i}(k_i)$ be the value of $n_{u_i}$ in the vector $(\boldsymbol{u}_i,\boldsymbol{n}_{i}) $ corresponding to $k_i$.
Let ${\cal M}_i(k_i)$ be the set of agents in ${\cal M}_{-i}$ that recommend at least one item to agent $i$, when it chooses action $k_i$. We define $m_j(k_i)$ as the number of items agent $j$ recommends to agent $i$ when agent $i$ chooses action $k_i \in {\cal L}_i$.
Clearly we have $m_j(k_i) = n_j(k_i)$ for $j \in {\cal M}_{-i}$, and $m_i(k_i) = \sum_{u_i \in {\cal F}_i} n_{u_i} (k_i)$.
Let $\boldsymbol{m}(k_i) := (m_1(k_i), \ldots, m_M(k_i))$ be the {\em recommendation vector} given action $k_i$,
\begin{align*}
\boldsymbol{m}_{-j}(k_i) := (m_1(k_i), \ldots, m_{j-1}(k_i), m_{j+1}(k_i), \ldots, m_M(k_i)),
\end{align*}
and ${\cal F}_i(k_i) \subset {\cal F}_i$ be the set of own items recommended by agent $i$ to its user when it takes action $k_i \in {\cal L}_i$.
For a set of recommended items ${\cal N} \in {\cal A}_N$, let ${\cal G}_j({\cal N}) := {\cal N} \cap {\cal F}_j$ and $g_j({\cal N}) := |{\cal G}_j({\cal N})|$.
Let $\boldsymbol{g}_{-i}({\cal N}) := \{g_j({\cal N})\}_{j \in {\cal M}_{-i}}$.
Below, we define the purchase probability of an item $f$ that is recommended to a user with context $x$ along with the set of items ${\cal N} \in {\cal A}_N$.
\begin{assumption}\label{ass:dependent}
\textbf{Group-dependent purchase probability:}
We assume that the inventory of each agent forms a separate group (or type) of items. For example, if agents are firms, then firm $i$ may sell televisions while firm $j$ sells headphones.
For each item $f$ recommended together with the set of items ${\cal N} \in {\cal A}_N$, if $f$ is an item of agent $j$, i.e., it belongs to group ${\cal F}_j$, then its acceptance probability will depend on other recommended items in the same group, i.e., ${\cal G}_j({\cal N})$. The acceptance probability of an item $f \in {\cal F}_j$ may also depend on groups of other items offered together with $f$ and also on their number, i.e., $\boldsymbol{g}_{-j}({\cal N})$, but not on their identities.  
For $f \in {\cal F}_j$, a user with context $x \in {\cal X}$ will buy the item with an unknown probability $q_f(x, {\cal N}) := q_{f}(x, {\cal G}_j({\cal N}), \boldsymbol{g}_{-j}({\cal N}))$.
For all $f \in {\cal F}$, users with contexts {\em similar} to each other have {\em similar} purchase probabilities.
This similarity is characterized by a H\"{o}lder condition that is known by the agents, i.e., there exists $L>0$, $\alpha>0$ such that for all $x,x' \in {\cal X} = [0,1]^d$, we have
\begin{align*}
|q_{f}(x, {\cal N}) - q_{f}(x', {\cal N})| \leq L ||x-x'||^\alpha,
\end{align*}
where $||\cdot||$ denotes the Euclidian norm in $\mathbb{R}^d$.
\end{assumption}

In Assumption \ref{ass:dependent}, we do not require the purchase probabilities to be similar for different sets of recommendations from the same group, i.e., for $f \in {\cal N}$ and $f \in{\cal N}'$ such that ${\cal G}_j({\cal N}) \neq {\cal G}_j({\cal N}')$, the purchase probability of $f$ can be different for context $x \in {\cal X}$. Even though the H\"{o}lder condition can hold with different constants $L_{f}$ and $\alpha_{f}$ for each item $f \in {\cal F}$, taking $L$ to be the largest among $L_{f}$ and $\alpha$ to be the smallest among $\alpha_{f}$ we get the condition in Assumption \ref{ass:dependent}.
For example, the context can be the (normalized) age of the user, and users with ages similar to each other can like similar items.
%
%
We will also consider a simplified version, in which the purchase probability of an item is independent of the set of other items recommended along with that item. 

\begin{assumption}\label{ass:independent}
\textbf{Independent purchase probability:} For each item $f$ offered along with the items in the set ${\cal N} \in {\cal A}_N$, a user with context $x$ will buy the item with an unknown probability $q_f(x, {\cal N}) :=q_f(x)$, independent of the other items in ${\cal N}$, for which there exists $L>0$, $\alpha>0$ such that for all $x,x' \in {\cal X}$, we have
%
$|q_{f}(x) - q_{f}(x')| \leq L ||x-x'||^\alpha$.
%
\end{assumption}

When Assumption \ref{ass:independent} holds, the agents can estimate the purchase probability of an item by using the empirical mean of the number of times the item is purchased by users with similar context information. The purchase probabilities of items can be learned much faster in this case compared with Assumption \ref{ass:dependent}, since the empirical mean will be updated every time the item is recommended.
However, when the purchase probability of an item is group-dependent, the agents need to learn its purchase probability separately for each group the item is in.
%
%
\subsection{Objective, rewards and the regret}

The goal of agent $i$ is to maximize its total expected reward (revenue). However, since it does not know the purchase probability of any item or the inventories of other agents a-priori, it must learn how to make optimal recommendations to its users over time.
In this subsection we define the optimal actions for the agents and the ``oracle" benchmark (optimal) solution the agents' learning algorithms compete with, which is derived from the unknown purchase probabilities and inventories.

The expected reward of agent $i$ at time $t$ from recommending a set of items ${\cal N}_i$ to its user with context $x_i(t) =x$ is given by
\begin{align*}
\sigma_{i}({\cal N}_i, x) := \hspace{-0.1in} \sum_{f \in ({\cal N}_i - {\cal F}_i)} c_{i,j(f)} q_f(x, {\cal N}_i) + \hspace{-0.1in} \sum_{f \in {\cal N}_i \cap  {\cal F}_i} p^i_f q_f(x, {\cal N}_i),
\end{align*}
where $j(f)$ is the agent who owns item $f$.
Based on this, given a user with context $x$, the set of items which maximizes the one-step expected reward of agent $i$ is
\begin{align}
{\cal N}^*_i(x) &:= \argmax_{{\cal N} \in {\cal A}_N} \left( \sigma_{i}({\cal N}, x) \right). \label{eqn:bestonestep} 
\end{align}
Since the inventory of other agents and $q_f(x, {\cal N})$, $x \in {\cal X}$, ${\cal N} \in {\cal A}_N$ are unknown a priori to agent $i$, ${\cal N}^*_i(x)$ is unknown to agent $i$ for all contexts $x \in {\cal X}$.
%
For a {\em recommendation vector} $\boldsymbol{m}=(m_1,\ldots,m_M)$, agent $j$ has ${|{\cal F}_j| \choose m_j}$ actions $k_j \in {\cal L}_j$ for which $m_{j'}(k_j) = m_{j'}$ for all $j' \neq j$. 
Denote this set of actions of agent $j$ by ${\cal B}_j(\boldsymbol{m})$. For an action $k_j \in {\cal B}_j(\boldsymbol{m})$, let 
\begin{align*}
\pi_{j,k_j}(x) := \sum_{f \in {\cal F}_j(k_j)} q_f(x,{\cal F}_j(k_j),\boldsymbol{m}_{-j}(k_j)),
\end{align*}
be the {\em purchase rate} of action $k_j$\footnote{If ${\cal F}_j(k_j) = \emptyset$, then $\pi_{j,k_j}(x)=0$.} for a user with context $x$. 
The {\em purchase rate} is equal to the expected number of items in ${\cal F}_j(k_j)$ sold to a user with context $x$, when the number of items recommended together with ${\cal F}_j(k_j)$ from the other groups is given by the vector $\boldsymbol{m}_{-j}(k_j)$.
For actions of agent $i$ such that $m_j(k_i) >0$, the best set of items agent $j$ can recommend to agent $i$ (which maximizes agent $i$'s expected commission from agent $j$) for a user with context $x$ is ${\cal F}_j(k^*_j(k_i,x))$, where 
\begin{align*}
k^*_j(k_i,x) := \argmax_{k_j \in {\cal B}_j(\boldsymbol{m}(k_i))} \pi_{j,k_j}(x).
\end{align*}
According to the above definitions, the expected reward of action $k_i \in {\cal L}_i$ to agent $i$ for a user with context $x$ is defined as 
\begin{align}
\mu_{i,k_i}(x) &:=  \sum_{f \in {\cal F}_i(k_i)}  p^i_f q_f(x,{\cal F}_i(k_i),\boldsymbol{m}_{-i}(k_i)) \notag  \\
&+ \sum_{j \in {\cal M}_{i}(k_i)}  c_{i,j} \pi_{j, k^*_j(k_i,x)}(x). \label{eqn:rewarddef}
\end{align}
Then, given a user with context $x$, the best action of agent $i$ is 
\begin{align}
k^*_i(x) := \argmax_{k_i \in {\cal L}_i} \mu_{i,k_i}(x). \label{eqn:armoptimal}
\end{align}
%


Note that for both Assumptions \ref{ass:dependent} and \ref{ass:independent}, $\mu_{i,k^*_i(x)}(x) = \sigma_{i}({\cal N}^*_i(x), x)$.
Let $\alpha_i$ be the recommendation strategy adopted by agent $i$ for its own users, i.e., based on its past observations and decisions, agent $i$ chooses a vector $\alpha_i(t) \in {\cal L}_i$ at each time slot. Let $\beta_i$ be the recommendation strategy adopted by agent $i$ when it is called by another agent to recommend its own items. Let $\boldsymbol{\alpha} = (\alpha_1, \ldots, \alpha_M)$ and $\boldsymbol{\beta} = (\beta_1, \ldots, \beta_M)$.
Let $Q^i_{\boldsymbol{\alpha}, \boldsymbol{\beta}}(T)$ be  
the expected total reward of agent $i$  by time $T$ from item sales to its own users and users of other agents and commissions it gets from item sales of other agents to its own users. 
For $f \in {\cal F}_i$, let $Y^{\alpha_i}_f(t)$ be a random variable which is equal to 1 when agent $i$ recommends its own item $f$ to its user at time $t$ (0 otherwise), let $Y^{\beta_i, \alpha_j}_f(t)$ be a random variable which is equal to 1 when agent $i$ recommends its own item $f$ to agent $j$ when it is called by agent $j$ at time $t$ (0 otherwise). For $j \in {\cal M}_{-i}$, let $Y^{\alpha_i}_j(t)$ be a random variable that is equal to 1 when agent $i$ asks for recommendations from agent $j$ at time $t$ for its own user (0 otherwise). 
Let the (random) reward agent $i$ gets from the set of recommendations made by $(\boldsymbol{\alpha}, \boldsymbol{\beta})$ to its user at time $t$ be
\begin{align}
& O^i_{\boldsymbol{\alpha}, \boldsymbol{\beta}}(t, x_i(t)) := \sum_{f \in {\cal F}_i} Y^{\alpha_i}_{f}(t) p^i_{f} q_{f}(x_i(t), {\cal N}_i(t)) \notag \\
& + \hspace{-0.1in} \sum_{j \in {\cal M}_{-i}} Y^{\alpha_i}_{j}(t) 
\left(\sum_{f \in {\cal F}_{j}} Y^{\beta_{j}, \alpha_i}_{f}(t) c_{i,j}  q_{f}(x_i(t), {\cal N}_i(t))  \right), \label{eqn:thisreward}
\end{align}
and let
%
\begin{align*}
 S^i_{\boldsymbol{\alpha}, \boldsymbol{\beta}}(T)  
&:= E_{\boldsymbol{\alpha}, \boldsymbol{\beta}}\left[\sum_{t=1}^T O^i_{\boldsymbol{\alpha}, \boldsymbol{\beta}}(t, x_i(t))\right],
\end{align*}
%
be the total expected reward agent $i$ can get based only on recommendations to its own users by time $T$. Let 
\begin{align}
U^{i}_{\boldsymbol{\alpha}, \boldsymbol{\beta}}(t, x_j(t)) 
:= Y^{\alpha_j}_i(t)\sum_{f \in {\cal F}_i} Y^{\beta_i, \alpha_j}_f(t), \label{eqn:thiscom}
\end{align}
be the number of items of agent $i$ purchased by agent $j$'s user at time $t$. Clearly we have
\begin{align*}
Q^i_{\boldsymbol{\alpha}, \boldsymbol{\beta}}(T) 
=  S^i_{\boldsymbol{\alpha}, \boldsymbol{\beta}}(T)   + E_{\boldsymbol{\alpha}, \boldsymbol{\beta}}\left[\sum_{t=1}^T  \sum_{j \in {\cal M}_{-i}} c_{i,j} U^{i}_{\boldsymbol{\alpha}, \boldsymbol{\beta}}(t, x_j(t))   \right].
\end{align*}
We can see that $Q^i_{\boldsymbol{\alpha}, \boldsymbol{\beta}}(T) \geq S^i_{\boldsymbol{\alpha}, \boldsymbol{\beta}}(T)$. Agent $i$'s goal is to maximize its total reward $S^i_{\boldsymbol{\alpha}, \boldsymbol{\beta}}(T)$ from its own users for any $T$. \rev{Since agents are cooperative, agent $i$ also helps other agents $j \in {\cal M}_{-i}$ to maximize $S^j_{\boldsymbol{\alpha}, \boldsymbol{\beta}}(T)$ by recommending its items to them. Hence the total reward the agent gets, $Q^i_{\boldsymbol{\alpha}, \boldsymbol{\beta}}(T)$, is at least $S^i_{\boldsymbol{\alpha}, \boldsymbol{\beta}}(T)$.}

We assume that user arrivals to the agents are independent of each other. Therefore, agent $j$ will also benefit from agent $i$ if its item can be sold by agent $i$. In this paper, we develop distributed online learning algorithms $(\boldsymbol{\alpha}, \boldsymbol{\beta})$ for the agents in ${\cal M}$ such that the expected total reward $S^i_{\boldsymbol{\alpha}, \boldsymbol{\beta}}(T)$ for any agent $i \in {\cal M}$ is maximized. This corresponds to minimizing the regret, which is given for agent $i$ at time $T$ as
\begin{align}
%
%
& R^i_{\boldsymbol{\alpha}, \boldsymbol{\beta}}(T) := \sum_{t=1}^T \mu_{i,k^*_i(x_i(t))}(x_i(t))- S^i_{\boldsymbol{\alpha}, \boldsymbol{\beta}}(T). \label{eqn:regret}
\end{align}
Note that the regret is calculated with respect to the highest expected reward agent $i$ can obtain from its own users, but not the users of other agents. Therefore, agent $i$ does not act strategically to attract the users of other agents, such as by cutting its own prices or paying commissions even when an item is not sold to increase its chance of being recommended by another agent. We assume that agents are fully cooperative and follow the rules of the proposed algorithms.
We provide a comparison between cooperative and strategic agents in the following remark. 
\begin{remark}
In our cooperative algorithms, an agent $j$ when called by agent $i$ recommends a set of items that has the highest probability of being purchased by agent $i$'s user. 
This recommendation does not decrease the reward of agent $j$, compared to the case when agent $j$ does not cooperate with any other agent, since we assume that $p^j_{f} \geq c_{i,j}$ for all $f \in {\cal F}_j$.
%
%
However, when the commission is fixed, recommending the set of items with the highest purchase rate does not always maximize agent $j$'s reward. For example, agent $j$ may have another item which has a slightly smaller probability of being purchased by agent $i$'s user, but has a much higher price than the item which maximizes the purchase rate. Then, it is of interest to agent $j$ to recommend that item to agent $i$ rather than the item that maximizes the purchase rate. This problem can be avoided by charging a commission which is equal to a percentage of the sales price of the item, i.e., $c_{i,j}(p^j_{f}) = c_{i,j} p^j_{f}$ for some $0<c_{i,j}<1$ for $f \in {\cal F}_j$. Our theoretical results will hold for this case as well. 
\newc{
We will numerically compare the effects of a fixed commission and a commission which is equal to a percentage of the sales price on the learning rates and set of items recommended by the agents in Section \ref{sec:num}.}
\end{remark}

We will show that the regret of the algorithms proposed in this paper will be sublinear in time, i.e., $R^i_{\boldsymbol{\alpha}, \boldsymbol{\beta}}(T) = O(T^\phi)$, $\phi<1$, which means that the distributed learning scheme converges to the average reward of the best recommender strategy ${\cal N}^*_i(x)$ given in (\ref{eqn:bestonestep}) (or equivalently $k^*_i(x)$ given in (\ref{eqn:armoptimal})) for each $i \in {\cal M}$, $x \in {\cal X}$, i.e., $R^i_{\boldsymbol{\alpha}, \boldsymbol{\beta}}(T) /T \rightarrow 0$. Moreover, the regret also provides us with a bound on how fast our algorithm converges to the best recommender strategy. 

\comment{
\rev{A summary of the list of notations used in this paper is given in Table \ref{tab:notation}.}

\begin{table}[t]
\centering
{\fontsize{9}{9}\selectfont
\setlength{\tabcolsep}{.25em}
\vspace{-0.2in}
\begin{tabular}{|l|l|}
\hline
${\cal M}$ & Set of agents. \\
\hline
${\cal F}_i$ & Set of items of agent $i$.    \\
\hline
${\cal F}$ & $\cup_{j \in {\cal M}} {\cal F}_j$, set of all items. \\
\hline
${\cal M}_{-i}$ & Set of agents other than $i$. \\
\hline
${\cal K}_i$  & ${\cal F}_i \cup {\cal M}_{-i}$, set of arms of agent $i$. \\
\hline
${\cal L}_i$ & Set of actions of agent $i$. \\
\hline
$f$ & Index that denotes an item in ${\cal F}$. \\
\hline
$f_i$ & Index that denotes an item in ${\cal F}_i$. \\
\hline
$u_i$ & Index that denotes an arm in ${\cal K}_i$. \\
\hline
$i,j$ & Indices that denote an agent in ${\cal M}$. \\
\hline
$k_j$ & Index that denotes an action in ${\cal L}_j$, $j \in {\cal M}$. \\
\hline
${\cal M}_i(k_i)$ & Set of agents in ${\cal M}_{-i}$ who are called by agent $i$ to submit recommendation when agent $i$ takes action $k_i$. \\
\hline
$j(f)$ & Agent $j \in {\cal M}$ for which $f \in {\cal F}_j$. \\
\hline
${\cal N}_i(t)$ & Set of recommendations of agent $i$ at time $t$. \\
\hline
${\cal N}^*_i(x)$ & Set of $N$ items in ${\cal F}$, which maximizes the expected reward of agent $i$ for user context $x$. \\
\hline
${\cal N}^*_{i,k_i}(x)$ & Set of $N$ items in ${\cal F}$, which maximizes the expected reward of agent $i$ when agent $i$ takes action $k_i \in {\cal L}_i$ for user context $x$. \\
\hline
${\cal A}_i(k)$ & Set of own items of agent $i$ that is recommended when agent $i$ chooses action $k_i \in {\cal L}_i$. \\
\hline
$n_j(k_i)$ & Number of recommended items of $j \in {\cal M}$ for action $k_i \in {\cal L}_i$.  \\
\hline
${\cal B}_i(n)$ & Set of subsets of ${\cal F}_i$ with $n$ items. \\
\hline
${\cal B}^*_i(n,x)$ & Set of $n$ items in ${\cal F}_i$ with highest purchase probabilities for user context $x$. \\
\hline
$x_i(t)$ & Context of the user arriving to agent $i$ at time $t$.  \\
\hline
$c_{i,j}$ & Commission agent $i$ gets when it sells an item of agent $j$. \\
\hline
${\cal A}_N$ & Set of subsets of ${\cal F}$ with $N$ items. \\
\hline
$q_f(x, {\cal N})$ & Purchase probability of item $f \in {\cal F}$ when context is $x$ and the set of recommended items is ${\cal N}$. \\
\hline
$Q^i_{\boldsymbol{\alpha}, \boldsymbol{\beta}}(T)$ & Expected total reward of agent $i$ from its own users by time $T$ when the agents use the set of distributed algorithms $\boldsymbol{\alpha}, \boldsymbol{\beta}$. \\
\hline
$S^i_{\boldsymbol{\alpha}, \boldsymbol{\beta}}(T)$ & Expected total reward of agent $i$ from other agents' users by time $T$ when the agents use the set of distributed algorithms $\boldsymbol{\alpha}, \boldsymbol{\beta}$. \\
\hline
$n^{\beta_j}_{f_j}(t,k_i)$ & 1 if item $f_j$ of agent $j$ is recommended to agent $i$ at time $t$ when agent $i$ chooses action $k_i$. \\
\hline
${\cal I}_T$ & Index set for the sets in of the partition of $[0, 1]^d$. \\
\hline
$\tilde{{\cal L}}_i$ & Set of actions of agent $i$ for which agent $i$ recommends at least one other agent. \\
\hline
$\hat{{\cal L}}_i$ & Set of actions of agent $i$ for which agent $i$ only recommends its own items. \\
\hline
$S_i(t)$ & Reward agent $i$ gets at time $t$ from its own user from the set of recommendations ${\cal N}_i(t)$. \\
\hline
$S^{{\cal N}}_i(x)$ & Expected reward agent $i$ gets from its own user with context $x$ from the set of recommendations ${\cal N} \in {\cal A}_N$. \\
\hline
$O_i(t)$ & Reward agent $i$ gets at time $t$ from its items recommended to other agents. \\
\hline
$N^i_{k,l}(t)$ & Number of times action $k \in \hat{{\cal L}}_i$ is selected by agent $i$ and agent $i$'s user's context is in \\
&the $l$th set in the partition of $[0,1]^d$ by time $t$. \\
\hline
$N^i_{1,k,l}(t)$ & Number of times action $k \in \tilde{{\cal L}}_i$ is trained by agent $i$ and agent $i$'s user's context is in \\
& the $l$th set in the partition of $[0,1]^d$ by time $t$. \\
\hline
$N^i_{2,k,l}(t)$ & Number of times action $k \in \tilde{{\cal L}}_i$ is explored and exploited by agent $i$ and agent $i$'s user's context is in \\
& the $l$th set in the partition of $[0,1]^d$ by time $t$. \\
\hline
\end{tabular}
}
\caption{A list of notations used in this paper.}
\vspace{-0.35in}
\label{tab:notation}
\end{table}
}

\vspace{-0.1in}
\section{Contextual Partitioning Algorithms for Multiple Recommendations} \label{sec:mainalgo}

In this section we propose a series of distributed online learning algorithms called {\em Context Based Multiple Recommendations} (CBMR). We denote the part of CBMR which gives an action to agent $i$ at every time slot for its own user with $\alpha^{\textrm{CBMR}}_i$, and the part of CBMR which gives an action to agent $i$ at every time slot for the recommendation request of another agent with $\beta^{\textrm{CBMR}}_i$. When clear from the context, we will drop the superscript.
Basically, an agent using CBMR forms a partition of the context space $[0,1]^d$, depending on the final time $T$, consisting of $(m_T)^d$ sets where each set is a $d$-dimensional hypercube with dimensions $1/m_T \times 1/m_T \times \ldots \times 1/m_T$. The sets in this partition are indexed by  ${\cal I}_T = \{1,2,\ldots, (m_T)^d  \}$. We denote the set with index $l$ with $I_l$.
Agent $i$ learns about the rewards and {\em purchase rates} of its actions in each set in the partition independently from the other sets in the partition based on the context information of the users that arrived to agent $i$ and the users for which agent $i$ is recommended by another agent. Since users with similar contexts have similar purchase probabilities (Assumptions \ref{ass:dependent} and \ref{ass:independent}), it is expected that the optimal recommendations are similar for users located in the same set $I_l \in {\cal I}_T$. 
Since the best recommendations are learned independently for each set in ${\cal I}_T$, there is a tradeoff between the number of sets in ${\cal I}_T$ and the estimation of the best recommendations for contexts in each set in ${\cal I}_T$. 
We will show that in order to bound regret sublinearly over time, the parameter $m_T$ should be non-decreasing in $T$.

There are two versions of CBMR: one for group-dependent purchase probabilities given by Assumption \ref{ass:dependent}, which is called CBMR-d, and the other for independent purchase probabilities given by Assumption \ref{ass:independent}, which is called CBMR-ind. 
The difference between these two is that CBMR-d calculates the expected reward from each action in ${\cal L}_i$ for agent $i$ separately, while CBMR-ind forms the expected reward of each action in ${\cal L}_i$ based on the expected rewards of the items recommended in the chosen action.
\rev{We have
\begin{align*}
|{\cal L}_i| = {|{\cal F}_i| \choose N} + \sum_{n=0}^{N-1} {|{\cal F}_i| \choose n}  {N-n+M-2 \choose M-1},
\end{align*}
which grows combinatorially in $N$ and $M$ and polynomially in $|{\cal F}_i|$}. 
We will show that CBMR-ind provides much faster learning than CBMR-d when Assumption \ref{ass:independent} holds. We explain these algorithms in the following subsections.

\vspace{-0.1in}
\subsection{Definition of CBMR-d} \label{sec:contextdep}

The pseudocode of CBMR-d is given in Fig. \ref{fig:CBMR}.
At any time $t$ in which agent $i$ chooses an action $k_i \in {\cal L}_i$, for any agent $j \in {\cal M}_i(k_i)$, it should send a request to that agent for $m_j(k_i)$ recommendations along with the context of its user $x_i(t)$ and the recommendation vector $\boldsymbol{m}(k_i)$.
If the agents do not want to share the set of recommendations and prices they made for agent $i$'s user with agent $i$, they can use the regulator scheme proposed in Section \ref{sec:regulator}. 
%
%
%
CBMR-d can be in any of the following three phases at any time slot $t$: the
{\em training} phase in which agent $i$ chooses an action $k_i$ such that it trains another agent $j$ by asking it for recommendations and providing $i$'s user's context information $x$, 
so that agent $j$ will learn to recommend its set of $m_j(k_i)$ items with the highest {\em purchase rate} for a user with context $x$, the {\em exploration} phase in which agent $i$ forms accurate estimates of the expected reward of actions $k_i \in {\cal L}_i$ by selecting $k_i$ and observing the purchases, and the {\em exploitation} phase in which agent $i$ selects the action in ${\cal L}_i$ with the highest estimated reward to maximize its total reward. The pseudocodes of these phases are given in Fig. \ref{fig:mtrain}.
%

At each time $t$, agent $i$ first checks which set in the partition ${\cal I}_T$ context $x_i(t)$ belongs to. 
%
We separate the set of actions in ${\cal L}_i$ into two. Let 
\begin{align*}
\hat{{\cal L}}_i &= \left\{ (\boldsymbol{u}_i,\boldsymbol{n}_{i}) \in {\cal L}_i \textrm{ such that }    n_{u_i} =0, ~\forall u_i \in {\cal M}_{-i}  \right\},
\end{align*}
be the set of actions in which all recommendations are from ${\cal F}_i$, and 
%
$\tilde{{\cal L}}_i = {\cal L}_i - \hat{{\cal L}}_i$
%
be the set of actions in which at least one recommendation is from an agent in ${\cal M}_{-i}$. 

The training phase is required for actions in $\tilde{{\cal L}}_i$, while only exploration and exploitation phases are required for actions in $\hat{{\cal L}}_i$. 
When agent $i$ chooses an action $k_i \in \tilde{{\cal L}}_i$, the agents $j \in {\cal M}_{i}(k_i)$ recommend $m_j(k_i)$ items from ${\cal F}_j$ to agent $i$. Recall that ${\cal N}_i(t)$ denotes the set of $N$ items that is recommended to agent $i$'s user at time $t$, based on the actions chosen by agent $i$ and the recommendations chosen by other agents for agent $i$.
\comment{
Therefore, the reward agent $i$ gets at time $t$ is 
%
$Q_i(t) := S_i(t) + O_i(t)$,
%
where 
\begin{align*}
S_i(t) := \hspace{-0.2in} \sum_{f \in {\cal N}_i(t) - {\cal F}_i} \hspace{-0.2in} c_{i,j(f)} I(f \in F_i(t)) + \hspace{-0.2in} \sum_{f \in {\cal F}_i \cap {\cal N}_i(t) } \hspace{-0.2in} p^i_f I(f \in F_i(t)),
\end{align*}
and
\begin{align*}
O_i(t) := \sum_{j \in {\cal M}_{-i}} O_{i,j}(t),
\end{align*}
where 
\begin{align*}
O_{i,j}(t) := \sum_{f \in {\cal F}_i \cap {\cal N}_j(t)} (p^i_f - c_{j,i}) I(f \in F_j(t)),
\end{align*}
and $F_i(t)$ is the set of items purchased by agent $i$'s user at time $t$.
}
For $k_i \in {\cal L}_i$, let $N^i_{k_i,l}(t)$ be the number of times action $k_i$ is selected in response to a context arriving to the set $I_l \in {\cal I}_T$ by time $t$.
Since agent $i$ does not know ${\cal F}_j$ and the purchase rates for $j \in {\cal M}_{i}(k_i)$,
before forming estimates about the reward of action $k_i$, it needs to make sure that $j$ will almost always recommend its set of $m_j(k_i)$ items with the highest purchase rate for agent $i$'s user. 
Otherwise, agent $i$'s estimate about the reward of action $k_i$ might deviate a lot from the correct reward of action $k_i$ with a high probability, resulting in agent $i$ choosing suboptimal actions most of the time, hence resulting in a high regret.
This is why the training phase is needed for actions in $\tilde{{\cal L}}_i$. 

\begin{figure}[htb]
\fbox {
\begin{minipage}{0.95\columnwidth}
{\fontsize{10}{10}\selectfont
\flushleft{Context Based Multiple Recommendations for dependent purchase probabilities (CBMR-d for agent $i$):}
\begin{algorithmic}[1]
\STATE{Input: $D_1(t)$, $D_{2,k}(t)$, $k \in \tilde{{\cal L}_i}$, $D_3(t)$, $T$, $m_T$}
\STATE{Initialize: Partition $[0,1]^d$ into $(m_T)^d$ sets, indexed by the set ${\cal I}_T = \{ 1,2,\ldots, (m_T)^d\}$. }
\STATE{$N^i_{k,l}=0, \forall k \in {\cal L}_i, l \in {\cal I}_T$, $N^i_{1,k,l}=0$, $N^i_{2,k,l}=0$, $\forall k \in \tilde{{\cal L}}_{i}, l \in {\cal I}_T$.}
\STATE{$\boldsymbol{N}^i_l = \{ N^i_{k,l}\}_{k \in {\cal L}_i}$, $\boldsymbol{N}^i_{2,l} = \{ N^i_{2,k,l}\}_{k \in \tilde{{\cal L}}_i}$.}
\STATE{$\bar{r}^i_{k,l}=0$, $\bar{\pi}^i_{k,l}=0$, for all $k \in {\cal L}_i, I_l \in {\cal I}_T$.}
\STATE{$\boldsymbol{\bar{r}}^i_{l} = \{ \bar{r}^i_{k,l} \}_{l \in {\cal L}_i}$, $\boldsymbol{\bar{\pi}}^i_{l} = \{ \bar{\pi}^i_{k,l} \}_{l \in {\cal L}_i}$.}
\WHILE{$t \geq 1$}
\STATE{Algorithm $\alpha_i$ (Send recommendations to own users)}
\FOR{$l=1,\ldots,(m_T)^d$}
\IF{$x_i(t) \in I_l$}
\IF{$\exists k \in \hat{{\cal L}}_i \textrm{ such that } N^i_{k,l} \leq D_1(t)$}
\STATE{Run {\bf Explore}($k$, $N^i_{k,l}$, $\bar{r}^i_{k,l}$, $t$)}
\ELSIF{$\exists k \in \tilde{{\cal L}}_{i} \textrm{ such that } N^i_{1,k,l} \leq D_{2,k}(t)$}
\STATE{Run {\bf Train}($k$, $N^i_{1,k,l}$, $t$)}
\ELSIF{$\exists k \in \tilde{{\cal L}}_{i} \textrm{ such that } N^i_{2,k,l} \leq D_3(t)$}
\STATE{Run {\bf Explore}($k$, $N^i_{2,k,l}$, $\bar{r}^i_{k,l}$, $t$)}
\ELSE
\STATE{Run {\bf Exploit}($\boldsymbol{N}^i_l, \boldsymbol{N}^i_{2,l}$, $\boldsymbol{\bar{r}}^i_l$, $t$)}
\ENDIF
\ENDIF
\ENDFOR
\STATE{Algorithm $\beta_i$ (Send recommendations to other agents' users)}
\FOR{$j \in {\cal M}_{-i}$ such that $i \in {\cal M}_j(\alpha_j(t))$}
\STATE{Receive $x_j(t)$ and $\boldsymbol{m}(\alpha_j(t))$}
\FOR{$l=1,\ldots,(m_T)^d$}
\IF{$x_j(t) \in I_l$}
\IF{$\exists$ $k \in {\cal B}_i(\boldsymbol{m}(\alpha_j(t)))$ such that  $N^i_{k,l} \leq D_1(t)$}
\STATE{Run {\bf Explore2}($k$, $N^i_{k,l}$, $\bar{\pi}^i_{k,l}$, $j$, $t$)}
\ELSE
\STATE{Run {\bf Exploit2}($\boldsymbol{N}^i_l$, $\boldsymbol{\bar{\pi}}^i_{l}$, $\boldsymbol{m}(\alpha_j(t))$, $j$, $t$)}
\ENDIF
\ENDIF
\ENDFOR
\ENDFOR
\STATE{$t=t+1$}
\ENDWHILE
\end{algorithmic}
}
\end{minipage}
} \caption{Pseudocode of CBMR-d for agent $i$.} \label{fig:CBMR}
\vspace{-0.2in}
\end{figure}
\begin{figure}[htb]
\fbox {
\begin{minipage}{0.95\columnwidth}
{\fontsize{10}{10}\selectfont
{\bf Train}($k$, $n$, $t$):
\begin{algorithmic}[1]
\STATE{Select action $\alpha_i(t)=k$.}
\STATE{Receive reward $\tilde{r}_k(t) := O^i_{\boldsymbol{\alpha}, \boldsymbol{\beta}}(t, x_i(t))$ given in (\ref{eqn:thisreward}).}
\STATE{ $n++$.}
\end{algorithmic}
{\bf Explore}($k$, $n$, $r$, $t$):
\begin{algorithmic}[1]
\STATE{Select action $\alpha_i(t)=k$.}
\STATE{Receive reward $\tilde{r}_k(t) := O^i_{\boldsymbol{\alpha}, \boldsymbol{\beta}}(t, x_i(t))$ given in (\ref{eqn:thisreward}).}
\STATE{$r = (n r + \tilde{r}_k(t))/(n + 1)$.}
\STATE{$n++$.}
\end{algorithmic}
{\bf Exploit}($\boldsymbol{n}$, $\boldsymbol{r}$, $t$):
\begin{algorithmic}[1]
\STATE{Select action $\alpha_i(t) \in \argmax_{k \in {\cal L}_i} r_k$.}
\STATE{Receive reward $\tilde{r}_{\alpha_i(t)}(t) := O^i_{\boldsymbol{\alpha}, \boldsymbol{\beta}}(t, x_i(t))$ given in (\ref{eqn:thisreward}).}
\STATE{$r_{\alpha_i(t)} = (n_{\alpha_i(t)} r_{\alpha_i(t)} + \tilde{r}_{\alpha_i(t)}(t))/(n_{\alpha_i(t)} + 1)$.}
\STATE{$n_{\alpha_i(t)}++$.}
\end{algorithmic}
{\bf Explore2}($k$, $n$, $\pi$, $j$, $t$):
\begin{algorithmic}[1]
\STATE{Recommend set of items ${\cal F}_i(k)$ to agent $j$.}
\STATE{Receive purchase feedback $\tilde{\pi}_k(t) := U^{i}_{\boldsymbol{\alpha}, \boldsymbol{\beta}}(t, x_j(t))$ given in (\ref{eqn:thiscom}) and the commission.}
\STATE{$\pi = (n \pi + \tilde{\pi}_k(t))/(n + 1)$.}
\STATE{$n++$.}
\end{algorithmic}
{\bf Exploit2}($\boldsymbol{n}$, $\boldsymbol{\pi}$, $\boldsymbol{m}$, $j$, $t$):
\begin{algorithmic}[1]
\STATE{Select action $\beta_i(t) \in \argmax_{k \in {\cal B}_i(\boldsymbol{m})} \pi_k$.}
\STATE{Recommend set of items ${\cal F}_i(\beta_i(t))$ to agent $j$.}
\STATE{Receive purchase feedback $\tilde{\pi}_{\beta_i(t)}(t) := U^{i}_{\boldsymbol{\alpha}, \boldsymbol{\beta}}(t, x_j(t))$ given in (\ref{eqn:thiscom}) and the commission.}
\STATE{$\pi_{\beta_i(t)} = (n_{\beta_i(t)} \pi_{\beta_i(t)} + \tilde{\pi}_{\beta_i(t)}(t))/(n_{\beta_i(t)} + 1)$.}
\STATE{$n_{\beta_i(t)}++$.}
\end{algorithmic}
}
\end{minipage}
} \caption{Pseudocode of the training, exploration and exploitation modules for agent $i$.} \label{fig:mtrain}
\vspace{-0.2in}
\end{figure}

\comment{
\begin{figure}[htb]
\fbox {
\begin{minipage}{0.95\columnwidth}
{\fontsize{9}{7}\selectfont
{\bf Explore2}($k$, $n$, $r$):
\begin{algorithmic}[1]
\STATE{Select action $k$. Receive reward $\tilde{r}_k(t) = S_i(t)$. $r = \frac{n r + \tilde{r}_k(t) }{n + 1}$.  $n++$. }
\end{algorithmic}
{\bf Exploit2}($\boldsymbol{n}$, $\boldsymbol{r}$, ${\cal K}_i$):
\begin{algorithmic}[1]
\STATE{Select action $k \in \argmax_{k' \in {\cal L}_i} r_j$. If $k \in \tilde{{\cal L}}_i$, ask learners in ${\cal M}(k)$ to send their recommendations. Recommend items ${\cal N}_i(t)$. Receive reward $\tilde{r}_k(t) = S_i(t)$. $\bar{r}_{k} = \frac{n_k \bar{r}_{k} + \tilde{r}_k(t) }{n_k + 1}$. $n_k++$.  }
\end{algorithmic}
}
\end{minipage}
} \caption{Pseudocode of the exploration2 and exploitation2 modules.} \label{fig:mexplore2}
\vspace{-0.1in}
\end{figure}
}

\comment{
\begin{figure}[htb]
\fbox {
\begin{minipage}{0.95\columnwidth}
{\fontsize{9}{9}\selectfont
{\bf Explore}($k$, $n$, $r$):
\begin{algorithmic}[1]
\STATE{select arm $k$}
\STATE{Receive reward $r_k(t) = I(k(x_i(t)) = y_t) - d_{k(x_i(t))}$}
\STATE{$r = \frac{n r + r_k(t)}{n + 1}$}
\STATE{$n++$}
\end{algorithmic}
}
\end{minipage}
} \caption{Pseudocode of the exploration module} \label{fig:mexplore}
\end{figure}
}

\comment{
\begin{figure}[htb]
\fbox {
\begin{minipage}{0.9\columnwidth}
{\fontsize{9}{9}\selectfont
{\bf Exploit}($\boldsymbol{n}$, $\boldsymbol{r}$, ${\cal K}_i$):
\begin{algorithmic}[1]
\STATE{select arm $k\in \argmax_{j \in {\cal K}_i} r_j$}
\STATE{Receive reward $r_k(t) = I(k(x_i(t)) = y_t) - d_{k(x_i(t))}$}
\STATE{$\bar{r}_{k} = \frac{n_k \bar{r}_{k} + r_k(t)}{n_k + 1}$}
\STATE{$n_k++$}
\end{algorithmic}
}
\end{minipage}
} \caption{Pseudocode of the exploitation module} \label{fig:mexploit}
\end{figure}
}

In order to separate training, exploration and exploitation phases, agent $i$ keeps two counters for actions $k_i \in \tilde{{\cal L}}_i$ for each $I_l \in {\cal I}_T$.
The first one, i.e., $N^i_{1,k_i,l}(t)$, counts the number of user arrivals to agent $i$ with context in set $I_l$ in the training phases of agent $i$ by time $t$, in which it had chosen action $k_i$.
The second one, i.e., $N^i_{2,k_i,l}(t)$, counts the number of user arrivals to agent $i$ with context in set $I_l$ in the exploration and the exploitation phases of agent $i$ by time $t$, in which it had chosen action $k_i$.
The observations made at the exploration and the exploitation phases are used to estimate the expected reward of agent $i$ from taking action $k_i$.
In addition to the counters $N^i_{k_i,l}(t)$, $N^i_{1,k_i,l}(t)$ and $N^i_{2,k_i,l}(t)$, agent $i$ also keeps {\em control functions}, $D_1(t)$, $D_{2,k_i}(t)$, $k_i \in \tilde{{\cal L}}_i$ and $D_{3}(t)$, which are used together with the counters determine when to train, explore or exploit. 
The control functions are deterministic and non-decreasing functions of $t$ that are related to the minimum number of observations of an action that is required so that the estimated reward of that action is sufficiently accurate to get a low regret in exploitations. 
We will specify the exact values of these functions later when we prove our regret results.

In addition to the counters and control functions mentioned above, for each $k_i \in {\cal L}_i$ and $I_l \in {\cal I}_T$, agent $i$ keeps a {\em sample mean reward} $\bar{r}^i_{k_i,l}(t)$ which is the sample mean of the rewards (sum of prices of sold items and commissions divided by number of times $k_i$ is selected except the training phase) agent $i$ obtained from its recommendations when it selected action $k_i$ in its exploration and exploitation phases for its own user with context in $I_l$ by time $t$.
Agent $i$ also keeps a {\em sample mean purchase rate} $\bar{\pi}^i_{k_i,l}(t)$ which is the sample mean of the number of agent $i$'s own items in ${\cal F}_i(k_i)$ sold to a user (either agent $i$'s own user or user of another agent) with context in set $I_l$ when agent $i$ selects action $k_i$ for that user.\footnote{We will explain how agent $i$ selects an action $k_i$ when agent $j$ requests items from $i$ when we describe $\beta^{\textrm{CMBR-d}}_i$.}
%
Note that when agent $i$ chooses action $k_i$, agent $j \in {\cal M}_{-i}$ has ${|{\cal F}_j| \choose m_j(k_i)}$ different sets of items to recommend to agent $i$. 
In order for agent $j$ to recommend to agent $i$ its best set of items that maximizes the commission agent $i$ will get, agent $j$ must have accurate {\em sample mean purchase rates} $\bar{\pi}^j_{k_j,l}(t)$ for its own actions $k_j \in {\cal B}_j(\boldsymbol{m}(k_i))$.
Therefore, when a user with context $x_i(t) \in I_l$ arrives at time $t$, agent $i$ checks if the following set is nonempty:
%
\begin{align*}
&{\cal S}_{i,l}(t) := \left\{ k_i \in \hat{{\cal L}}_i \textrm{ such that } N^i_{k_i,l}(t) \leq D_1(t)  \textrm{ or } k_i \in \tilde{{\cal L}}_i  \right. \\
& \left. \textrm{ such that } N^i_{1,k_i,l}(t) \leq D_{2,k_i}(t) \textrm{ or } N^i_{2,k_i,l}(t) \leq D_{3}(t)   \right\},
\end{align*} 
in order to make sure that it has accurate estimates of the expected rewards of each of its actions $k_i \in {\cal L}_i$, as well as to make sure that other agents have accurate estimates about the {\em purchase rates} of their own set of items for a user with context in set $I_l$.

For $k_i \in \tilde{{\cal L}}_i$, let ${\cal E}^i_{k_i,l}(t)$ be the set of rewards collected from selections of action $k_i$ at times $t' \in [t]$ when agent $i$'s user's context is in set $I_l$, \rev{and all the other agents in ${\cal M}_{i}(k_i)$ are trained sufficiently, i.e., $N^i_{1,k_i,l}(t') > D_{2,k_i}(t')$.} 
%
%
For $k_i \in \hat{{\cal L}}_i$, let ${\cal E}^i_{k_i,l}(t)$ be the set of rewards collected from action $k_i$ by time $t$.
If ${\cal S}_{i,l}(t) \neq \emptyset$, then agent $i$ trains or explores by randomly choosing an action $\alpha_i(t) \in {\cal S}_{i,l}(t)$. If ${\cal S}_{i,l}(t) = \emptyset$, this implies that all actions in ${\cal L}_i$ have been trained and explored sufficiently, so that agent $i$ exploits by choosing the action with the highest sample mean reward, i.e.,
\begin{align}
\alpha_i(t) \in \argmax_{k_i \in {\cal L}_i} \bar{r}^i_{k_i,l}(t). \label{eqn:maximizer}
\end{align}
%
We have
%
$\bar{r}^i_{k_i,l}(t) = (\sum_{r \in {\cal E}^i_{k_i,l}(t)} r)/|{\cal E}^i_{k_i,l}(t)|$.\footnote{Agent $i$ does not need to keep ${\cal E}^i_{k_i,l}(t)$ in its memory. Keeping and updating $\bar{r}^i_{k_i,l}(t)$ online, as new rewards are observed is enough.}
%
\rev{When there is more than one action which has the highest sample mean reward, one of them is randomly selected.} 

The other part of CBMR-d, i.e., $\beta_i$, gives agent $i$ the set of items to recommend to agent $j$ when agent $j$ takes an action $\alpha_j(t) = k_j \in {\cal L}_j$ for which $i \in {\cal M}_j(k_j)$, and sends to agent $i$ its user's context $x_j(t) \in I_l \in {\cal I}_T$ and the recommendation vector $\boldsymbol{m}(k_j)$. 
In order to recommend the set of items with the maximum {\em purchase rate} for the user of agent $j$, agent $i$ should learn the {\em purchase rate} of its own items for the recommendation vector $\boldsymbol{m}(k_j)$. 
Agent $i$ responds to agent $j$'s request in the following way.
If there is any $k_i \in {\cal B}(\boldsymbol{m}(k_j))$ for which the {\em purchase rate} is under-explored, i.e., $N^i_{k_i,l}(t) \leq D_{1}(t)$, then agent $i$ recommends to agent $j$ the set of items ${\cal F}_i(k_i)$.
Otherwise if all {\em purchase rates} of actions in ${\cal B}(\boldsymbol{m}(k_j))$ are explored sufficiently, i.e., $N^i_{k_i,l}(t) > D_{1}(t)$ for all $k_i \in {\cal B}(\boldsymbol{m}(k_j))$, then agent $i$ recommends to agent $j$ the set of its own items which maximizes the {\em sample mean purchase rate}, i.e., ${\cal F}_i(\hat{k}^*_i(k_j,l,t))$, where
\begin{align*}
\hat{k}^*_i(k_j,l,t) = \argmax_{k_i \in {\cal B}(\boldsymbol{m}(k_j))} \bar{\pi}^i_{k_i,l}(t).
\end{align*}
%


%
%
%
%
%
In the following subsection we prove an upper bound on the regret of CBMR-d.

\vspace{-0.1in}
\subsection{Analysis of the regret of CBMR-d}\label{sec:analCBMR_d}

For each $I_l \in {\cal I}_T$ and $k_i \in {\cal L}_i$, let 
%
$\overline{\mu}_{i,k_i,l} := \sup_{x \in I_l} \mu_{i,k_i}(x)$,
$\underline{\mu}_{i,k_i,l} := \inf_{x \in I_l} \mu_{i,k_i}(x)$,
$\overline{\pi}_{i,k_i,l} := \sup_{x \in I_l} \pi_{i,k_i}(x)$ and
$\underline{\pi}_{i,k_i,l} := \inf_{x \in I_l} \pi_{i,k_i}(x)$.
%
Let $x^*_l$ be the context at the center of the set $I_l$. We define the {\em optimal reward} action of agent $i$ for set $I_l$ as
\begin{align*}
k^*_i(l) := \argmax_{k_i \in {\cal L}_i} \mu_{i,k_i}(x^*_l),
\end{align*}
and the {\em optimal purchase rate} action of agent $j$ for agent $i$ given agent $i$ selects action $k_i$ for set $I_l$ as
\begin{align*}
k^*_j(k_i,l) := \argmax_{k_j \in {\cal B}_j(\boldsymbol{m}(k_i))} \pi_{j,k_j}(x^*_l).
\end{align*}
Let
\begin{align*}
{\cal K}^i_{\theta,a_1,l}(t) := \left\{ k_i \in {\cal L}_i : \underline{\mu}_{i,k_i^*(l),l} - \overline{\mu}_{i,k_i,l} > a_1 t^{\theta} \right\},
\end{align*}
be the set of {\em suboptimal reward} actions for agent $i$ at time $t$, and 
\begin{align*}
{\cal Y}^{j,k_i}_{\theta,a_1,l}(t) := \left\{ k_j \in {\cal B}_j(\boldsymbol{m}(k_i)) : \underline{\pi}_{i,k^*_j(k_i,l) ,l} - \overline{\pi}_{i,k_j,l} > a_1 t^{\theta} \right\},
\end{align*}
be the set of {\em suboptimal purchase rate} actions of agent $j$ for agent $i$, given agent $i$ chooses action $k_i$ at time $t$, where $\theta<0$, $a_1 > 0$.
The agents are not required to know the values of the parameters $\theta$ and $a_1$. They are only used in our analysis of the regret. First, we will give regret bounds that depend on values of $\theta$ and $a_1$, and then we will optimize over these values to find the best bound. 
Let $Y_R := \sup_{x \in {\cal X}} \mu_{i,k^*_i(x)}(x)$, which is the maximum expected loss agent $i$ can incur for a time slot in which it chooses a suboptimal action.

The regret given in (\ref{eqn:regret}) can be written as a sum of three components: 
\begin{align*}
R^i(T) = E[R^i_e(T)] + E[R^i_s(T)] + E[R^i_n(T)],
\end{align*}
where $R^i_e(T)$ is the regret due to training and explorations by time $T$, $R^i_s(T)$ is the regret due to suboptimal action selections in exploitations by time $T$, and $R^i_n(T)$ is the regret due to near optimal action ($k_i \in {\cal L}_i - {\cal K}^i_{\theta,a_1,l}(t)$) selections in exploitations by time $T$ of agent $i$, which are all random variables. In the following lemmas we will bound each of these terms separately. The following lemma bounds $E[R^i_e(T)]$. \newc{Due to space limitations we give the lemmas in this and the following subsections without proofs. The proofs can be found in our online appendix \cite{jstsp_online}.}
%
%
\begin{lemma} \label{lemma:explorations}
When CBMR-d is run with parameters $D_1(t) = D_3(t) = t^{z} \log t$,
%
$D_{2,k_i}(t) = \max_{j \in {\cal M}_{i}(k_i)}  {F_{\max} \choose m_j(k_i)}  t^{z} \log t$,
%
and $m_T = \left\lceil T^{\gamma} \right\rceil$,\footnote{For a number $r \in \mathbb{R}$, let $\lceil r  \rceil$ be the smallest integer that is greater than or equal to $r$.} where $0<z<1$ and $0<\gamma<1/d$, we have
%
\begin{align*}
E[R^i_e(T)]  
& \leq Y_R 2^d \left( |{\cal L}_i| + |\tilde{{\cal L}}_i| {F_{\max} \choose \left\lceil F_{\max}/2   \right\rceil} \right) T^{z+\gamma d} \log T \\
&+ Y_R 2^d |{\cal L}_i| T^{\gamma d}.
\end{align*}
\end{lemma}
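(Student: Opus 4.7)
The plan is to decompose $R^i_e(T)$ into three pieces corresponding to the three conditions in CBMR-d that can put agent $i$ into a non-exploit slot: (a) under-exploration of an action $k \in \hat{{\cal L}}_i$ (the test $N^i_{k,l}\le D_1(t)$), (b) under-training of an action $k \in \tilde{{\cal L}}_i$ (the test $N^i_{1,k,l}\le D_{2,k}(t)$), and (c) under-exploration of an action $k\in\tilde{{\cal L}}_i$ (the test $N^i_{2,k,l}\le D_3(t)$). Since at any time slot agent $i$'s obtained reward is non-negative while the benchmark reward is at most $Y_R := \sup_x \mu_{i,k_i^*(x)}(x)$, every training or exploration slot contributes at most $Y_R$ to $R^i_e(T)$. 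So the task reduces to counting such slots.

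Next I would bound the number of hypercubes in the partition: since $m_T=\lceil T^\gamma\rceil$, the partition has $(m_T)^d\le(T^\gamma+1)^d\le 2^d T^{\gamma d}$ sets. Fix $I_l$ and an action $k\in\hat{{\cal L}}_i$. Because $D_1$ is non-decreasing and $N^i_{k,l}$ increments only when the test fires, the test can trigger at most $\lceil D_1(T)\rceil+1$ times; substituting $D_1(T)=T^{z}\log T$ gives a bound of order $T^z\log T$. The same argument applied to $D_3$ gives an identical bound for each $k\in\tilde{{\cal L}}_i$ in case (c). For case (b), the bound for action $k$ is $\lceil D_{2,k}(T)\rceil+1$, and using the definition of $D_{2,k}$ together with the fact that the binomial coefficient $\binom{F_{\max}}{m}$ is maximized at $m=\lceil F_{\max}/2\rceil$, each training count is at most $\binom{F_{\max}}{\lceil F_{\max}/2\rceil}T^{z}\log T+1$.

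Now I would combine: summing over actions in each set, the per-set slot count from (a) plus (c) is at most $|{\cal L}_i|(T^z\log T+1)$ since $|\hat{{\cal L}}_i|+|\tilde{{\cal L}}_i|=|{\cal L}_i|$, and the per-set slot count from (b) is at most $|\tilde{{\cal L}}_i|\binom{F_{\max}}{\lceil F_{\max}/2\rceil}(T^z\log T+1)$. Multiplying by the per-set loss $Y_R$ and by the number of sets $2^d T^{\gamma d}$, the $T^z\log T$ pieces add to give the first summand in the claim, while the leftover "$+1$" pieces collapse into the additive $Y_R 2^d|{\cal L}_i|T^{\gamma d}$ term (absorbing the lower-order $|\tilde{{\cal L}}_i|\binom{F_{\max}}{\lceil F_{\max}/2\rceil}$ constants into the leading coefficient of the first summand).

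The only delicate point is the binomial-maximization step used in (b), which requires noting that ${\cal M}_i(k_i)$ is non-empty for $k_i\in\tilde{{\cal L}}_i$ so the max in the definition of $D_{2,k_i}$ is attained, and that each $m_j(k_i)\in\{0,\ldots,N\}\subseteq\{0,\ldots,F_{\max}\}$ so $\binom{F_{\max}}{m_j(k_i)}\le\binom{F_{\max}}{\lceil F_{\max}/2\rceil}$ uniformly. Everything else is straightforward bookkeeping; no probabilistic concentration is needed here, since this lemma is purely a worst-case count of deterministic trigger events.
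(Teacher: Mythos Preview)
Your proposal is correct and follows exactly the same approach as the paper's own proof, which is simply the one-line observation that each training/exploration slot contributes at most $Y_R$ and then a worst-case count of such slots per hypercube times the number of hypercubes $(m_T)^d\le 2^d T^{\gamma d}$. Your write-up is in fact more detailed than the paper's; the only loose end is the ``absorbing'' step for the leftover $+1$ contributions from case (b), which does not literally fit into the stated additive term $Y_R 2^d|{\cal L}_i|T^{\gamma d}$---but this is a lower-order constant discrepancy that the paper itself glosses over, so it is not a substantive gap.
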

\delete{
\begin{proof}
We sum over all exploration and training steps by time $T$. The contribution to the regret is at most $Y_R$ in each of these steps.
\end{proof}
}

\remove{
\begin{proof}
Since time step $t$ is a training or an exploration step if and only if ${\cal S}_{i,l}(t) \neq \emptyset$, up to time $T$, there can be at most   $\left\lceil T^{z} \log T \right\rceil$ exploration steps in which an arm in $k_i \in {\cal F}_i$ is selected by agent $i$, 
$\left\lceil F_{\max} T^{z} \log T \right\rceil$ training steps in which agent $i$ selects agent $j_i \in {\cal M}_{-i}$, $\left\lceil T^{z} \log T \right\rceil$ exploration steps in which agent $i$ selects agent $j_i \in {\cal M}_{-i}$. Result follows from summing these terms and the fact that $(m_T)^d \leq 2^d T^{\gamma d}$ for any $T \geq 1$.
\end{proof}
}

From Lemma \ref{lemma:explorations}, we see that the regret due to explorations is linear in the number of sets in partition ${\cal I}_T$, i.e., $(m_T)^d$, hence exponential in parameters $\gamma$ and $z$. We conclude that $z$ and $\gamma$ should be small to achieve sublinear regret in training and exploration phases.

For any $k_i \in {\cal L}_i$ and $I_l \in {\cal I}_T$, the sample mean $\bar{r}^i_{k_i,l}(t)$ represents a random variable which is the average of the independent samples in set ${\cal E}^i_{k_i,l}(t)$. 
Different from classical finite-armed bandit theory \cite{auer}, these samples are not identically distributed. 
In order to facilitate our analysis of the regret, we generate two different artificial i.i.d. processes to bound the probabilities related to $\bar{r}^i_{k_i,l}(t)$, $k_i \in {\cal L}_i$.
The first one is the {\em best} process in which rewards are generated according to a bounded i.i.d. process with expected reward $\overline{\mu}_{i,k_i,l}$, the other one is the {\em worst} process in which rewards are generated according to a bounded i.i.d. process with expected reward $\underline{\mu}_{i,k_i,l}$. 
Let $r^{\textrm{best}}_{k_i,l}(\tau)$ denote the sample mean of the $\tau$ samples from the best process and $r^{\textrm{worst}}_{k_i,l}(\tau)$ denote the sample mean of the $\tau$ samples from the worst process.\com{Isn't z less than 1? Check notation. Cem: Is there a notation overlap? I changed it to $\tau$.} We will bound the terms $E[R^i_n(T)]$ and $E[R^i_s(T)]$ by using these artificial processes along with the \newc{H\"{o}lder condition} given in Assumption \ref{ass:dependent}.
Details of this are given in \cite{jstsp_online}.
The following lemma bounds $E[R^i_s(T)]$. 
\begin{lemma} \label{lemma:suboptimal1}
When CBMR-d is run with parameters $D_1(t) = D_3(t) = t^{z} \log t$,
%
$D_{2,k_i}(t) = \max_{j \in {\cal M}_{i}(k_i)} {F_{\max} \choose m_j(k_i)} t^{z} \log t$,
%
and $m_T = \left\lceil T^{\gamma} \right\rceil$, where $0<z<1$ and $0<\gamma<1/d$, given that
%
$2 L Y_R ( \sqrt{d})^\alpha t^{- \gamma \alpha} + 2 (Y_R + 2) t^{-z/2} \leq a_1 t^\theta$, 
we have
%
\begin{align*}
E[R^i_s(T)] \leq   2^{d+1} Y_R |\hat{{\cal L}}_i| \beta_2 T^{\gamma d} 
+ 2^{d+2} Y_R |\tilde{{\cal L}}_i| \beta_2 T^{\gamma d + z/2}/z .
\end{align*}
\end{lemma}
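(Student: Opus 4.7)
My plan is to bound, for each $I_l\in{\cal I}_T$, the probability that an exploitation step $t$ with $x_i(t)\in I_l$ picks a suboptimal action $\alpha_i(t)\in{\cal K}^i_{\theta,a_1,l}(t)$, and then sum this probability over $t$, over the suboptimal actions, and over the sets $l\in{\cal I}_T$ (of which there are at most $2^d T^{\gamma d}$). At each such step the incurred loss is at most $Y_R$. The whole argument hinges on showing that the suboptimality gap $a_1 t^\theta$ dominates the sum of two error terms: a bias term coming from the non-constancy of $\mu_{i,k_i}(\cdot)$ over $I_l$, and a concentration term coming from the sample mean. The hypothesis
\[
2LY_R(\sqrt d)^\alpha t^{-\gamma\alpha} + 2(Y_R+2)t^{-z/2} \le a_1 t^\theta
\]
is precisely tailored to this.

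First I would use the H\"older assumption together with the diameter bound $\mathrm{diam}(I_l)\le \sqrt d/m_T\le \sqrt d\, t^{-\gamma}$ to show that, for every $k_i$, the means of the \textbf{best} process $r^{\textrm{best}}_{k_i,l}$ and the \textbf{worst} process $r^{\textrm{worst}}_{k_i,l}$ differ from $\mu_{i,k_i}(x)$, $x\in I_l$, by at most $LY_R(\sqrt d\, t^{-\gamma})^{\alpha}$; the factor $Y_R$ absorbs the sum of the $N$ prices/commissions appearing in $\mu_{i,k_i}$. Next, since in exploitation $|{\cal E}^i_{k_i,l}(t)|>D_3(t)=t^z\log t$, a Chernoff–Hoeffding inequality applied to both artificial processes gives
\[
P\bigl(|r^{\textrm{best/worst}}_{k_i,l}(\tau)-\textrm{mean}|>t^{-z/2}\bigr)\le 2/t^2,
\]
which explains the second term of the hypothesis. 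Combining the two bounds, on the intersection of these high-probability events the sample mean $\bar r^i_{k^*_i(l),l}(t)$ strictly exceeds $\bar r^i_{k_i,l}(t)$ for every $k_i\in{\cal K}^i_{\theta,a_1,l}(t)$, so the action chosen by (\ref{eqn:maximizer}) cannot be suboptimal.

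For $k_i\in\hat{\cal L}_i$ the samples in ${\cal E}^i_{k_i,l}(t)$ are genuinely i.i.d.\ and sandwiched between the best and worst processes, so the union bound over the two ``bad'' concentration events gives $P(\alpha_i(t)=k_i,\,\text{exploit})\le 4/t^2$. Summing $\sum_t 4/t^2\le 4\beta_2$, then multiplying by $Y_R$, by $|\hat{\cal L}_i|$, and by $|{\cal I}_T|\le 2^d T^{\gamma d}$ delivers the first summand $2^{d+1}Y_R|\hat{\cal L}_i|\beta_2 T^{\gamma d}$.

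The main obstacle is the case $k_i\in\tilde{\cal L}_i$, where the samples are not i.i.d.\ because the reward at each past exploitation step depends on the bundle chosen by every called agent $j\in{\cal M}_i(k_i)$. I would condition on the event $A_t$ that every such $j$, at every past collection time, had already completed its own exploration of each bundle in ${\cal B}_j(\boldsymbol m(k_i))$ (ensured by $D_{2,k_i}(t)\ge\binom{F_{\max}}{m_j(k_i)}t^z\log t$) and chose $k^*_j(k_i,l)$. On $A_t$ the sandwich argument above goes through unchanged; on $A_t^c$ I apply a second Chernoff–Hoeffding bound to the purchase-rate sample means $\bar\pi^j_{k_j,l}$, using $N^j_{k_j,l}(\tau)>D_1(\tau)=\tau^z\log\tau$, to control the probability that $j$ picks a bundle in ${\cal Y}^{j,k_i}_{\theta,a_1,l}(\tau)$. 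Carefully accounting for the number of past sample times at which this ``bad bundle'' event can contaminate $\bar r^i_{k_i,l}$, and integrating $\int_1^T \tau^{-1}\,d(\tau^z)\approx T^{z/2}/z$ type expressions, produces the extra $T^{z/2}/z$ blow-up. Summing as before over $t$, $l\in{\cal I}_T$ and $k_i\in\tilde{\cal L}_i$, and bounding the per-step loss by $Y_R$, yields the second summand $2^{d+2}Y_R|\tilde{\cal L}_i|\beta_2 T^{\gamma d+z/2}/z$. The chief technical difficulty is this intertwining of agent $i$'s own concentration with each called agent $j$'s internal concentration, which is what forces the more delicate bookkeeping.
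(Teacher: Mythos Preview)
Your treatment of $k_i\in\hat{\cal L}_i$ is correct and matches the paper: sandwich the true sample mean between the best and worst artificial processes, apply Chernoff--Hoeffding with $\tau\ge t^z\log t$ samples, and sum $2/t^2$ over $t$, over $|\hat{\cal L}_i|$ actions, and over $|{\cal I}_T|\le 2^dT^{\gamma d}$ cells.

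For $k_i\in\tilde{\cal L}_i$, however, your proposed decomposition has a genuine gap. You condition on the event $A_t$ that \emph{every} past sample in ${\cal E}^i_{k_i,l}(t)$ came from a time at which each called agent $j$ selected its optimal bundle $k^*_j(k_i,l)$. The problem is that $P(A_t^c)$ is only bounded by a constant (a union bound over past collection times gives $\sum_\tau O(\tau^{-2})=O(\beta_2)$), so summing $P(A_t^c)$ over $t\le T$ yields a term linear in $T$, not $T^{z/2}$. Your claimed integral $\int_1^T \tau^{-1}\,d(\tau^{z})=z\int_1^T \tau^{z-2}\,d\tau$ is bounded for $z<1$ and does not produce $T^{z/2}/z$ either, which confirms that the bookkeeping you allude to has not actually been carried out.

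The missing idea, and what the paper does, is to relax $A_t$ to the event ${\cal B}^i_{k_i,l}(t)$ that at most $t^{\phi}$ of the samples in ${\cal E}^i_{k_i,l}(t)$ are contaminated, with $\phi:=1-z/2$. On ${\cal B}^i_{k_i,l}(t)$ the contaminated samples perturb the empirical mean by at most $t^{\phi}/|{\cal E}^i_{k_i,l}(t)|\le t^{\phi-1}=t^{-z/2}$, which is exactly the slack built into the hypothesis $2LY_R(\sqrt d)^\alpha t^{-\gamma\alpha}+2(Y_R+2)t^{-z/2}\le a_1 t^\theta$, so the sandwich argument still closes. For the complement, Markov's inequality gives
\[
P\bigl({\cal B}^i_{k_i,l}(t)^c\bigr)\ \le\ \frac{E[\text{number of contaminated samples}]}{t^{\phi}}\ \le\ \frac{C\beta_2}{t^{1-z/2}},
\]
because the expected number of times a called agent picks a suboptimal bundle is $O(\beta_2)$ (this is where your ``second Chernoff--Hoeffding on $\bar\pi^j_{k_j,l}$'' actually enters). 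Now summing $\sum_{t=1}^T t^{-(1-z/2)}\le 2T^{z/2}/z$ is what produces the $T^{\gamma d+z/2}/z$ factor in the bound. Without this Markov step at threshold $t^{1-z/2}$, the argument does not close.
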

\comment{
\begin{proof}
Let $\Omega$ denote the space of all possible outcomes, and $w$ be a sample path. The event that the algorithm exploits at time $t$ is given by
%
${\cal W}^i_{l}(t) := \{ w : S_{i,l}(t) = \emptyset  \}$.
%
We will bound the probability that the algorithm chooses a suboptimal arm in an exploitation step. Using that we can bound the expected number of times a suboptimal arm is chosen by the algorithm. Note that every time a suboptimal arm in ${\cal K}_i$ is chosen by agent $i$, since $\pi_k(x) - d_k \in [-1,1]$ for all $k \in {\cal K}_i$, the loss is at most $2$. Therefore $2$ times the expected number of times a suboptimal arm is chosen in an exploitation step bounds the regret due to suboptimal choices in exploitation steps.
Let ${\cal V}^i_{k,l}(t)$ be the event that a suboptimal action $k \in {\cal K}_i$ is chosen at time $t$ by agent $i$. We have
%
$R_s(T) \leq \sum_{l \in {\cal P}_T} \sum_{t=1}^T \sum_{k \in {\cal L}^i_\theta(t)} I({\cal V}^i_{k,l}(t), {\cal W}^i_{l}(t) )$.
%
Taking the expectation
\begin{align}
E[R_s(T)] \leq \sum_{l \in {\cal P}_T} \sum_{t=1}^T \sum_{k \in {\cal L}^i_\theta(t)} P({\cal V}^i_{k,l}(t), {\cal W}^i_{l}(t) ). \label{eqn:subregret}
\end{align}

Let ${\cal B}^i_{j_i,l}(t)$ be the event that at most $t^{\phi}$ samples in ${\cal E}^i_{j_i,l}(t)$ are collected from suboptimal selections made by agent $j_i$. For notational simplicity, we define ${\cal B}^i_{k_i,l}(t) := \Omega$ for $k_i \in {\cal F}_i$.
For any $k \in {\cal K}_i$, we have
\begin{align}
& \{ {\cal V}^i_{k,l}(t), {\cal W}^i_{l}(t)\} 
\subset \left\{ \bar{r}_{k,l}(t) \geq \bar{r}_{k^*(l),l}(t), {\cal W}^i_{l}(t), {\cal B}^i_{k,l}(t) \right\}  \notag \\
& \cup \left\{ \bar{r}_{k,l}(t) \geq \bar{r}_{k^*(l),l}(t), {\cal W}^i_{l}(t), {\cal B}^i_{k,l}(t)^c \right\} \notag \\
&\subset \left\{ \bar{r}_{k,l}(t) \geq \overline{\mu}_{k,l} + H_t, {\cal W}^i_{l}(t), {\cal B}^i_{k,l}(t)  \right\} \notag \\
&\cup \left\{ \bar{r}_{k^*(l),l}(t) \leq \underline{\mu}_{k^*(l),l} - H_t, {\cal W}^i_{l}(t), {\cal B}^i_{k,l}(t)  \right\} \notag \\
& \cup \left\{ \bar{r}_{k,l}(t) \geq \bar{r}_{k^*(l),l}(t), 
\bar{r}_{k,l}(t) < \overline{\mu}_{k,l} + H_t, \right. \notag \\
& \left. \bar{r}_{k^*(l),l}(t) > \underline{\mu}_{k^*(l),l} - H_t,
{\cal W}^i_{l}(t) ,{\cal B}^i_{k,l}(t)  \right\} 
\cup {\cal B}^i_{k,l}(t)^c , \label{eqn:vkt}
\end{align}
for some $H_t >0$. This implies that 
\begin{align}
& P \left( {\cal V}^i_{k,l}(t), {\cal W}^i_{l}(t) \right) 
\leq P \left( \bar{r}_{k,l}(t) \geq \overline{\mu}_{k,l} + H_t, {\cal W}^i_{l}(t), {\cal B}^i_{k,l}(t)  \right) \notag  \\
&+ P \left( \bar{r}_{k^*(l),l}(t) \leq \underline{\mu}_{k^*(l),l} - H_t, {\cal W}^i_{l}(t), {\cal B}^i_{k,l}(t) \right) + P({\cal B}^i_{k,l}(t)^c) \notag  \\
&+ P \left( \bar{r}_{k,l}(t) \geq \bar{r}_{k^*(l),l}(t), 
\bar{r}_{k,l}(t) < \overline{\mu}_{k,l} + H_t, \right. \notag \\
& \left. \bar{r}_{k^*(l),l}(t) > \underline{\mu}_{k^*(l),l} - H_t,
{\cal W}^i_{l}(t), {\cal B}^i_{k,l}(t)  \right) . \label{eqn:ubound1}
\end{align}
We have for any suboptimal arm $k \in {\cal K}_i$,
\begin{align}
& P \left( \bar{r}_{k,l}(t) \geq \bar{r}_{k^*(l),l}(t), 
\bar{r}_{k,l}(t) < \overline{\mu}_{k,l} + H_t, \right. \notag \\
& \left. \bar{r}_{k^*(l),l}(t) > \underline{\mu}_{k^*(l),l} - H_t,
{\cal W}^i_{l}(t), {\cal B}^i_{k,l}(t)  \right) \notag \\
&\leq P \left( \bar{r}^{\textrm{best}}_{k,l}(|{\cal E}^i_{k,l}(t)|) 
\geq \bar{r}^{\textrm{worst}}_{k^*(l),l}(|{\cal E}^i_{k^*(l),l}(t)|)
-  t^{\phi-1} , \right. \notag \\
& \left. \bar{r}^{\textrm{best}}_{k,l}(|{\cal E}^i_{k,l}(t)|) < \overline{\mu}_{k,l} + L \left( \sqrt{d}/m_T \right)^\alpha + H_t +  t^{\phi-1}, \right. \notag \\
& \left. \bar{r}^{\textrm{worst}}_{k^*(l),l}(|{\cal E}^i_{k^*(l),l}(t)|) > \underline{\mu}_{k^*(l),l} - L \left( \sqrt{d}/m_T \right)^\alpha - H_t,
{\cal W}^i_{l}(t)    \right). \notag
\end{align}
Since $k$ is a suboptimal arm, when
\begin{align}
2 L \left( \sqrt{d}/m_T \right)^\alpha + 2H_t + 2t^{\phi-1} - a_1 t^\theta \leq 0,
\label{eqn:boundcond}
\end{align}
the three inequalities given below
\begin{align*}
& \underline{\mu}_{k^*(l),l} - \overline{\mu}_{k,l} > a_1 t^{\theta},\\
& \bar{r}^{\textrm{best}}_{k,l}(|{\cal E}^i_{k,l}(t)|) < \overline{\mu}_{k,l} + L \left(  \sqrt{d}/m_T \right)^\alpha + H_t + t^{\phi-1} ,\\
& \bar{r}^{\textrm{worst}}_{k^*(l),l}(|{\cal E}^i_{k,l}(t)|) > \underline{\mu}_{k^*(l),l} - L \left(  \sqrt{d}/m_T \right)^\alpha - H_t,
\end{align*}
together imply that 
%
$\bar{r}^{\textrm{best}}_{k,l}(|{\cal E}^i_{k,l}(t)|) < \bar{r}^{\textrm{worst}}_{k^*(l),l}(|{\cal E}^i_{k,l}(t)|) -  t^{\phi-1}$,
%
which implies that for a suboptimal arm $k \in {\cal K}_i$, we have
\begin{align}
& P \left( \bar{r}_{k,l}(t) \geq \bar{r}_{k^*(l),l}(t), 
\bar{r}_{k,l}(t) < \overline{\mu}_{k,l} + H_t, \right. \notag \\
& \left. \bar{r}_{k^*(l),l}(t) > \underline{\mu}_{k^*(l),l} - H_t,
{\cal W}^i_{l}(t), {\cal B}^i_{k,l}(t)  \right) = 0. \label{eqn:vktbound1}
\end{align}
Let $H_t = 2 t^{\phi-1}$. A sufficient condition that implies (\ref{eqn:boundcond}) is
\begin{align}
&2 L( \sqrt{d})^\alpha t^{- \gamma \alpha} + 6 t^{\phi-1} \leq a_1 t^\theta. \label{eqn:maincondition}
\end{align}
Assume that (\ref{eqn:maincondition}) holds for all $t \geq 1$.
Using a Chernoff-Hoeffding bound, for any $k \in {\cal L}^i_{\theta}(t)$, since on the event ${\cal W}^i_{l}(t)$, $|{\cal E}^i_{k,l}(t)| \geq t^z \log t$, we have
\begin{align}
& P \left( \bar{r}_{k,l}(t) \geq \overline{\mu}_{k,l} + H_t, {\cal W}^i_{l}(t), {\cal B}^i_{k,l}(t) \right) \notag \\\
&\leq P \left( \bar{r}^{\textrm{best}}_{k,l}(|{\cal E}^i_{k,l}(t)|) \geq \overline{\mu}_{k,l} + H_t, {\cal W}^i_{l}(t) \right) \notag \\
&\leq e^{-2 (H_t)^2 t^z \log t}  = e^{-8 t^{2\phi-2} t^z \log t} , \label{eqn:vktbound2}
\end{align}
and
\begin{align}
&P \left( \bar{r}_{k^*(l),l}(t) \leq \underline{\mu}_{k^*(l),l} - H_t, {\cal W}^i_{l}(t), {\cal B}^i_{k,l}(t) \right) \notag \\
&\leq P \left( \bar{r}^{\textrm{worst}}_{k^*(l),l}(|{\cal E}^i_{k^*(l),l}(t)|)  \leq \underline{\mu}_{k^*(l),l} - H_t +  t^{\phi-1}, {\cal W}^i_{l}(t) \right) \notag \\
&\leq e^{-2 (H_t -  t^{\phi-1})^2 t^z \log t} = e^{-2 t^{2\phi-2} t^z \log t}. \label{eqn:vktbound3}
\end{align}
In order to bound the regret, we will sum (\ref{eqn:vktbound2}) and (\ref{eqn:vktbound3}) for all $t$ up to $T$. For regret to be small we want the sum to be sublinear in $T$. This holds when $2\phi -2 +z \geq 0$. We want $z$ to be small since regret due to explorations increases with $z$, and we also want $\phi$ to be small since we will show that our regret bound increases with $\phi$. Therefore we set $2\phi -2 +z =0$, hence 
\begin{align}
\phi = 1-z/2. \label{eqn:maincondition2}
\end{align}
When (\ref{eqn:maincondition2}) holds we have
\begin{align}
P \left( \bar{r}_{k,l}(t) \geq \overline{\mu}_{k,l} + H_t, {\cal W}^i_{l}(t), {\cal B}^i_{k,l}(t) \right) \leq \frac{1}{t^2}, \label{eqn:vktbound22}
\end{align}
and
\begin{align}
P \left( \bar{r}_{k^*(l),l}(t) \leq \underline{\mu}_{k^*(l),l} - H_t, {\cal W}^i_{l}(t), {\cal B}^i_{k,l}(t) \right) \leq \frac{1}{t^2}. \label{eqn:vktbound32}
\end{align}

Finally, for $k_i \in {\cal F}_i$ obviously we have $P({\cal B}^i_{k_i,l}(t)^c)=0$. For $j_i \in {\cal M}_{-i}$, let $X^i_{j_i,l}(t)$ denote the random variable which is the number of times a suboptimal arm for agent $j_i$ is chosen in exploitation steps when the context is in set $P_l$ by time $t$. We have $\{ {\cal B}^i_{j_i,l}(t)^c, {\cal W}^i_l(t)  \} = \{ X^i_{j_i,l}(t) \geq t^\phi \}$. Applying the Markov inequality we have
%
$P({\cal B}^i_{j_i,l}(t)^c, {\cal W}^i_l(t)) \leq E[X^i_{j_i,l}(t)]/t^\phi$.
%
Let $\Xi^i_{j_i,l}(t)$ be the event that a suboptimal arm $k_{j_i} \in {\cal F}_{j_i}$ is called by agent $i_j \in {\cal M}_{-i}$, when it is called by agent $i$ for the $t$-th time in the exploitation phase of agent $i$. 
We have 
%
$X^i_{j_i,l}(t) = \sum_{t'=1}^{{\cal E}^i_{j_i,l}(t)} I(\Xi^i_{j_i,l}(t'))$,
%
and
\begin{align*}
& P \left( \Xi^i_{j_i,l}(t) \right) 
\leq \sum_{m \in {\cal L}^{j_i}_\theta} P \left( \bar{r}_{m,l}(t) \geq \bar{r}^{*j_i}_{l}(t) \right) \\
&\leq \sum_{m \in {\cal L}^{j_i}_\theta}
\left(  P \left( \bar{r}_{m,l}(t) \geq \overline{\mu}_{m,l} + H_t, {\cal W}^i_{l}(t) \right) \right. \\  
& \left. + P \left( \bar{r}^{*j_i}_{l}(t) \leq \underline{\mu}^{*j_i}_{l} - H_t, {\cal W}^i_{l}(t) \right)  + P \left( \bar{r}_{m,l}(t) \geq \bar{r}^{*j_i}_{l}(t), \right. \right. \\
& \left. \left. \bar{r}_{m,l}(t) < \overline{\mu}_{m,l} + H_t,
 \bar{r}^{*j_i}_{l}(t) > \underline{\mu}^{*j_i}_{l} - H_t ,
{\cal W}^i_{l}(t) \right)  \right),
\end{align*}
where $*j_i$ denotes the best arm in ${\cal F}_{j_i}$.
When (\ref{eqn:maincondition}) holds, since $\phi = 1 - z/2$, the last probability in the sum above is equal to zero while the first two probabilities are upper bounded by $e^{-2(H_t)^2 t^z \log t}$. This is due to the second phase of the exploration algorithm which requires at least $t^z \log t$ samples from the second exploration phase for all agents before the algorithm exploits any agent. Therefore, we have
%
$P \left( \Xi^i_{j_i,l}(t) \right) \leq \sum_{m \in {\cal L}^{j_i}_\theta} 2 e^{-2(H_t)^2 t^z \log t} \leq 2 |{\cal F}_{j_i}|/t^2$.
%
These together imply that 
%
$E[X^i_{j_i,l}(t)] \leq \sum_{t'=1}^{\infty} P(\Xi^i_{j_i,l}(t')) \leq 2 |{\cal F}_{j_i}| \sum_{t'=1}^\infty 1/t^2$.
%
Therefore from the Markov inequality we get
\begin{align}
P({\cal B}^i_{j_i,l}(t)^c, {\cal W}^i_l(t)) = P(X^i_{j_i,l}(t) \geq t^\phi) \leq \frac{2 |{\cal F}_{j_i}| \beta_2}{t^{1-z/2}}. \label{eqn:selectionbound}
\end{align}
Then, using (\ref{eqn:vktbound1}), (\ref{eqn:vktbound22}), (\ref{eqn:vktbound32}) and (\ref{eqn:selectionbound}), we have 
%
$P \left( {\cal V}^i_{j_i,l}(t), {\cal W}^i_l(t)  \right) \leq 2/t^{2} + (2 |{\cal F}_{j_i}| \beta_2)/t^{1-z/2}$,
%
for any $j_i \in {\cal M}_{-i}$, and
%
$P \left( {\cal V}^i_{k_i,l}(t), {\cal W}^i_l(t)  \right) \leq 2/t^{2}$,
%
for any $k_i \in {\cal F}_i$. By (\ref{eqn:subregret}), and by the result of Appendix \ref{app:seriesbound}, we get the stated bound for $E[R_s(T)]$.
%
%
%
\end{proof}
}
\delete{
\begin{proof}
This proof is similar to the proof of Lemma 2 in \cite{cem2013deccontext}. The difference is that instead of bounding the probabilities that a suboptimal arm in ${\cal F}_i$ will be chosen or another agent $j \in {\cal M}_{-i}$ chooses a suboptimal arm in ${\cal F}_j$ when called by agent $i$, we bound the probabilities that a suboptimal action in $\hat{{\cal L}}_i$ is chosen and a suboptimal action in $\tilde{{\cal L}}_i$ is chosen. Another difference is that every time a suboptimal action is chosen, the one-step regret can be at most $Y_R$.
\end{proof}
}

\rev{From Lemma \ref{lemma:suboptimal1}, we see that the regret increases exponentially with parameters $\gamma$ and $z$, similar to the result of Lemma \ref{lemma:explorations}. These two lemmas suggest that $\gamma$ and $z$ should be as small as possible, given that the condition
$2 L Y_R ( \sqrt{d})^\alpha t^{- \gamma \alpha} + 2 (Y_R + 2) t^{-z/2} \leq a_1 t^\theta$ 
is satisfied. 
}

When agent $i$ chooses an action $k_i \in {\cal L}_i$ such that $j \in {\cal M}_i(k_i)$,
there is a positive probability that agent $j$ will choose a suboptimal set of $m_j(k_i)$ items to recommend to agent $i$'s user, i.e., it will choose a {\em suboptimal purchase rate action} for agent $i$'s action. 
Because if this, even if agent $i$ chooses a near optimal action $k_i \in {\cal L}_i - {\cal K}^i_{\theta,a_1,l}(t)$ it can still get a low reward.
We need to take this into account in order to bound $E[R^i_n(T)]$. 
\add{The following lemma gives the bound on $E[R^i_n(T)]$.}
\remove{
For $j_i \in {\cal M}_{-i}$, let $X^i_{j_i,l}(t)$ denote the random variable which is the number of times a suboptimal arm of agent $j_i$ is chosen in exploitation steps of agent $i$ in which agent $i$ calls agent $j_i$ when the context is in set $P_l$ by time $t$. The next lemma bounds the expected number of times a suboptimal arm is chosen when agent $i$ calls another agent.
\begin{lemma} \label{lemma:callother}
When CLUP is run with parameters $D_1(t) = t^{z} \log t$, $D_2(t) = F_{\max} t^{z} \log t$, $D_3(t) = t^{z} \log t$ and $m_T = \left\lceil T^{\gamma} \right\rceil$, where $0<z<1$ and $0<\gamma<1/d$, given that
%
$2 L( \sqrt{d})^\alpha t^{- \gamma \alpha} + 6 t^{-z/2} \leq a_1 t^\theta$, 
%
we have $E[X^i_{j_i,l}(t)] \leq 2 F_{\max} \beta_2$ for $j_i \in {\cal M}_{-i}$.

\end{lemma}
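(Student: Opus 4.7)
The plan is to decompose the expected count $E[X^i_{j_i,l}(t)]$ by summing over the individual calls made to agent $j_i$, and then to bound the probability of each call resulting in a suboptimal selection by a constant over $t'^{2}$. Specifically, let $\Xi^i_{j_i,l}(t')$ denote the event that when agent $i$ calls agent $j_i$ for the $t'$-th time while in an exploitation step with context in $P_l$, agent $j_i$ returns a suboptimal arm $m\in{\cal F}_{j_i}$. Then
\begin{align*}
E[X^i_{j_i,l}(t)] \;\le\; \sum_{t'=1}^{\infty} P\!\left(\Xi^i_{j_i,l}(t')\right),
\end{align*}
so it suffices to bound $P(\Xi^i_{j_i,l}(t'))$ by $2|{\cal F}_{j_i}|/t'^2$ and use $\beta_2 = \sum_{t'\ge 1}1/t'^2$.

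The first step is to observe that agent $i$ only exploits once ${\cal S}_{i,l}(t)=\emptyset$, which in particular requires $N^i_{1,k,l}(t)>D_2(t)=F_{\max}t^z\log t$ for any action $k$ that calls $j_i$. Because training an action that calls $j_i$ forces agent $j_i$ to serve a recommendation for context $x\in I_l$, this guarantees that by the time of the $t'$-th exploitation call, every arm $m\in{\cal F}_{j_i}$ has been played by $j_i$ at least $D_1(t')=t'^z\log t'$ times for context in $P_l$. Consequently, when $j_i$ responds, either it is in its own exploration mode (which cannot happen for any arm at this point) or it is in its exploitation mode and returns $\arg\max_m \bar\pi^{j_i}_{m,l}(t')$.

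The second step follows the same three-way union decomposition used in the proof of Lemma~\ref{lemma:suboptimal1}. For any suboptimal $m$, the event $\{\bar\pi_m\ge \bar\pi_{*j_i}\}$ is contained in the union of $\{\bar\pi_m\ge\overline{\pi}_{m,l}+H_{t'}\}$, $\{\bar\pi_{*j_i}\le\underline{\pi}_{*j_i,l}-H_{t'}\}$, and a third event that requires the gap $\underline{\pi}_{*j_i,l}-\overline{\pi}_{m,l}$ to be closed by $2L(\sqrt{d}/m_T)^\alpha+2H_{t'}$. Choosing $H_{t'}=2t'^{\phi-1}$ with $\phi=1-z/2$, the hypothesis $2L(\sqrt d)^\alpha t^{-\gamma\alpha}+6t^{-z/2}\le a_1 t^\theta$ precisely rules out the third event, making its probability zero. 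The remaining two probabilities are handled by a Chernoff--Hoeffding bound on the sample mean of at least $t'^z\log t'$ i.i.d.\ surrogates (the standard \emph{best}/\emph{worst} coupling), giving
\begin{align*}
e^{-2H_{t'}^2 t'^{z}\log t'} \;=\; e^{-8\log t'} \;\le\; \frac{1}{t'^{2}}.
\end{align*}
Taking a union bound over the at most $|{\cal F}_{j_i}|\le F_{\max}$ suboptimal arms yields $P(\Xi^i_{j_i,l}(t'))\le 2F_{\max}/t'^{2}$, and summing across $t'$ delivers $E[X^i_{j_i,l}(t)]\le 2F_{\max}\beta_2$.

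The main subtlety I expect to have to argue carefully is the ``surrogate sample'' reduction in the second step: the samples contributing to $\bar\pi^{j_i}_{m,l}(t')$ are not i.i.d.\ because contexts in $I_l$ can vary over the cube, so I would introduce best/worst i.i.d.\ couplings (as in Section~\ref{sec:analCBMR_d}) whose means bracket $\pi_{j_i,m}(x)$ up to the H\"older slack $L(\sqrt d/m_T)^\alpha$; once this slack is absorbed into the gap condition via $\phi=1-z/2$, the Chernoff--Hoeffding bound applies verbatim to the coupled processes. The rest of the argument is bookkeeping.
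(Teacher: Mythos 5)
Your proposal follows essentially the same route as the paper's own proof: the same decomposition $E[X^i_{j_i,l}(t)]\le\sum_{t'}P(\Xi^i_{j_i,l}(t'))$, the same three-event union bound with $H_{t'}=2t'^{-z/2}$ making the gap event impossible under the stated condition, the same Chernoff--Hoeffding bound $e^{-2H_{t'}^2t'^z\log t'}\le 1/t'^2$ applied to the best/worst i.i.d.\ couplings, and the same union bound over the at most $F_{\max}$ suboptimal arms of agent $j_i$. The argument is correct; the only cosmetic difference is a slightly different bookkeeping of the slack terms (you absorb $4t'^{-z/2}$ where the paper's condition allows $6t^{-z/2}$), which is harmless.
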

\remove{
\begin{proof}
Let $\Xi^i_{j_i,l}(t)$ be the event that a suboptimal arm $m \in {\cal F}_{j_i}$ is called by agent $j_i$, when it is called by agent $i$ for the $t$-th time in the exploitation phase of agent $i$. 
We have 
%
$X^i_{j_i,l}(t) = \sum_{t'=1}^{{\cal E}^i_{j_i,l}(t)} I(\Xi^i_{j_i,l}(t'))$,
%
and
\begin{align*}
&P \left( \Xi^i_{j_i,l}(t) \right) 
\leq \sum_{m \in {\cal L}^{j_i}_\theta} P \left( \bar{r}_{m,l}(t) \geq \bar{r}^{*j_i}_{l}(t) \right) \\
&\leq \sum_{m \in {\cal L}^{j_i}_\theta}
\left(  P \left( \bar{r}_{m,l}(t) \geq \overline{\mu}_{m,l} + H_t, {\cal W}^i_{l}(t) \right) \right. \\  
& \left. + P \left( \bar{r}^{*j_i}_{l}(t) \leq \underline{\mu}^{*j_i}_{l} - H_t, {\cal W}^i_{l}(t) \right) 
 + P \left( \bar{r}_{m,l}(t) \geq \bar{r}^{*j_i}_{l}(t), \right. \right. \\ 
& \left. \left. \bar{r}_{m,l}(t) < \overline{\mu}_{m,l} + H_t,
 \bar{r}^{*j_i}_{l}(t) > \underline{\mu}^{*j_i}_{l} - H_t ,
{\cal W}^i_{l}(t) \right)  \right),
\end{align*}
where $*j_i$ is the optimal arm in ${\cal F}_{j_i}$.
Let $H_t = 2 t^{-z/2}$. Similar to the proof of Lemma \ref{lemma:suboptimal1}, the last probability in the sum above is equal to zero while the first two probabilities are upper bounded by $e^{-2(H_t)^2 t^z \log t}$. This is due to the second phase of the exploration algorithm which requires at least $t^z \log t$ samples from the second exploration phase for all agents before the algorithm exploits any agent. Therefore, we have
%
$P \left( \Xi^i_{j_i,l}(t) \right) \leq \sum_{m \in {\cal L}^{j_i}_\theta} 2 e^{-2(H_t)^2 t^z \log t} \leq 2 |{\cal F}_{j_i}|/t^2$.
%
These together imply that 
%
$E[X^i_{j_i,l}(t)] \leq \sum_{t'=1}^{\infty} P(\Xi^i_{j_i,l}(t')) \leq 2 |{\cal F}_{j_i}| \sum_{t'=1}^\infty 1/t^2$.
\end{proof}
}

We will use Lemma \ref{lemma:callother} in the following lemma to bound $E[R_n(T)]$.
}
\begin{lemma} \label{lemma:nearoptimal}
When CBMR-d is run with parameters $D_1(t) = D_3(t) = t^{z} \log t$,
%
$D_{2,k_i}(t) = \max_{j \in {\cal M}_{i}(k_i)} {F_{\max} \choose m_j(k_i)}   t^{z}  \log t$,
%
and $m_T = \left\lceil T^{\gamma} \right\rceil$, where $0<z<1$ and $0<\gamma<1/d$, given that
%
$2 L Y_R ( \sqrt{d})^\alpha t^{- \gamma \alpha} + 2 (Y_R + 2) t^{-z/2} \leq a_1 t^\theta$, 
we have
\begin{align*}
E[R^i_n(T)] \leq (2 a_1 T^{1+\theta})/(1+\theta) + 2 Y_R {F_{\max} \choose \left\lceil F_{\max}/2   \right\rceil} \beta_2.
\end{align*}
\end{lemma}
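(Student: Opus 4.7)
The plan is to decompose the near-optimal regret $R^i_n(T)$ into two distinct sources: (i) the direct gap incurred because agent $i$ plays an action $k_i \notin {\cal K}^i_{\theta,a_1,l}(t)$ whose expected reward is within $a_1 t^\theta$ of the optimum, and (ii) the additional loss incurred because, whenever such a $k_i$ calls some agent $j \in {\cal M}_i(k_i)$, agent $j$ may respond with an action in ${\cal B}_j(\boldsymbol{m}(k_i))$ that is not the optimal-purchase-rate action $k^*_j(k_i,l)$. Source (i) is bounded deterministically once the event ``exploitation'' is fixed, while source (ii) requires a concentration argument analogous to the one used in Lemma~\ref{lemma:suboptimal1}, but applied to the sample mean purchase rates $\bar{\pi}^j_{\cdot,l}(t)$ stored by agent $j$.

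For source (i) I would simply note that on every exploitation step in which a near-optimal action is played, the per-step contribution to the regret is bounded by $\underline{\mu}_{i,k^*_i(l),l} - \overline{\mu}_{i,k_i,l} + $ (H\"older slack across the hypercube $I_l$) $\le a_1 t^\theta$; summing over $t = 1,\ldots,T$ gives $\sum_{t=1}^T a_1 t^\theta \le a_1 T^{1+\theta}/(1+\theta)$. The factor $2$ in the stated bound then absorbs the symmetric contribution of the near-optimality slack that propagates through the called agent's reply: even when $j$ returns the optimal-purchase-rate set $k^*_j(k_i,l)$ for a near-optimal $k_i$, the gap to $\mu_{i,k^*_i(x_i(t))}(x_i(t))$ is still at most $a_1 t^\theta$, so an additional summand of the same form appears, doubling the constant.

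For source (ii), I would introduce, for each called agent $j$ and each set $I_l$, the event $\Xi^{j}_{k_i,l}(t)$ that $j$ returns an action in ${\cal Y}^{j,k_i}_{\theta,a_1,l}(t)$ during an exploitation step of agent $i$. Under the control-function choice in the statement, for every $k_i$ and every called $j$ the counter $N^j_{k_j,l}$ has exceeded $D_1(t) = t^z \log t$ for all $k_j \in {\cal B}_j(\boldsymbol{m}(k_i))$ (this is enforced by the training phase through $D_{2,k_i}$). Then a Chernoff–Hoeffding argument identical in structure to the one used for Lemma~\ref{lemma:suboptimal1}, but applied to the i.i.d.\ best/worst surrogates for $\bar{\pi}^j_{\cdot,l}$, yields $P(\Xi^{j}_{k_i,l}(t)) \le 2|{\cal B}_j(\boldsymbol{m}(k_i))|/t^2 \le 2\binom{F_{\max}}{\lceil F_{\max}/2 \rceil}/t^2$ under the hypothesis $2LY_R(\sqrt{d})^\alpha t^{-\gamma\alpha} + 2(Y_R+2)t^{-z/2} \le a_1 t^\theta$. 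Summing this over $t$ produces the constant $\beta_2$ factor, and each such suboptimal response costs at most $Y_R$, giving the second summand $2 Y_R \binom{F_{\max}}{\lceil F_{\max}/2 \rceil}\beta_2$.

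The main obstacle, as in Lemma~\ref{lemma:suboptimal1}, is handling the non-i.i.d.\ nature of the samples that feed $\bar{\pi}^j_{k_j,l}(t)$: because agent $j$'s purchase-rate observations for $k_j \in {\cal B}_j(\boldsymbol{m}(k_i))$ are drawn at times when other agents' actions (and hence the accompanying recommendation vector) may vary, one must carefully invoke the H\"older continuity in Assumption~\ref{ass:dependent} together with the best/worst coupling $r^{\textrm{best}}_{\cdot}$, $r^{\textrm{worst}}_{\cdot}$ already used in the preceding lemma to justify the Chernoff–Hoeffding estimate. Once that coupling is in place the calculation is mechanical, and the two bounds combine to yield the statement.
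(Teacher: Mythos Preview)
Your proposal is correct and follows essentially the same decomposition as the paper: bound the per-step loss by $2a_1 t^\theta$ when both agent $i$'s action and every called agent's reply are near-optimal (summing to $(2a_1 T^{1+\theta})/(1+\theta)$), and separately bound the expected number of times a called agent $j$ returns a suboptimal-purchase-rate action by a Chernoff--Hoeffding argument on $\bar{\pi}^j_{\cdot,l}$, yielding at most $2\binom{F_{\max}}{\lceil F_{\max}/2\rceil}\beta_2$ such events, each costing at most $Y_R$. Your explicit mention of the best/worst coupling and the role of the training control $D_{2,k_i}$ in guaranteeing $N^j_{k_j,l}(t)>D_1(t)$ for all $k_j\in{\cal B}_j(\boldsymbol{m}(k_i))$ is exactly the mechanism the paper relies on.
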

\delete{
\begin{proof}
If a near optimal action in $\hat{{\cal L}}_i$ is chosen at time $t$, the contribution to the regret is at most $a_1 t^{\theta}$. If a near optimal action  $k_i \in \tilde{{\cal L}}_i$ is chosen at time $t$, and if some of the agents in ${\cal M}_i(k_i)$ choose their near optimal items to recommend to agent $i$, then the contribution to the regret is at most $2 a_1 t^{\theta}$.
Therefore, the total regret due to near optimal action selections in ${\cal L}_i$ by time $T$ is upper bounded by 
%
$2 a_1 \sum_{t=1}^T t^{\theta}  \leq (2 a_1 T^{1+\theta})/(1+\theta)$,
by using the result in Appendix A in \cite{cem2013deccontext}. 
Each time a near optimal action in $k_i \in \tilde{{\cal L}}_i$ is chosen in an exploitation step, there is a small probability that an item chosen by some agent $j \in {\cal M}_i(k_i)$ is a suboptimal one. Using a result similar to Lemma 3 in \cite{cem2013deccontext}, the expected number of times a suboptimal arm is chosen is bounded by $2 {F_{\max} \choose \left\lceil F_{\max}/2   \right\rceil} \beta_2$. Each time a suboptimal arm is chosen, the regret can be at most $Y_R$.
\end{proof}
}

\rev{From Lemma \ref{lemma:nearoptimal}, we see that the regret due to near optimal actions depends exponentially on $\theta$ which is related to the negative of $\gamma$ and $z$. Therefore, $\gamma$ and $z$ should be chosen as large as possible to minimize the regret due to near optimal actions.} 
Combining the above lemmas, we obtain the finite time regret bound for agents using CBMR-d, which is given in the following theorem.
\begin{theorem}\label{theorem:cos}
When CBMR-d is run with parameters $D_1(t) = D_3(t) = t^{2\alpha/(3\alpha+d)} \log t$,
%
$D_{2,k_i}(t) = \max_{j \in {\cal M}_{i}(k_i)} {F_{\max} \choose m_j(k_i)}  t^{2\alpha/(3\alpha+d)} \log t$,
%
and $m_T = \lceil T^{1/(3\alpha + d)} \rceil$, we have
\begin{align*}
R^i(T) &\leq T^{\frac{2\alpha+d}{3\alpha+d}}
\left( \frac{4 (Y_R (L d^{\alpha/2} +1)+1)}{(2\alpha+d)/(3\alpha+d)} + Y_R 2^d Z_i \log T \right) \\
&+ T^{\frac{\alpha+d}{3\alpha+d}} \frac{2^{d+2} Y_R |\tilde{{\cal L}}_i| \beta_2}{2\alpha/(3\alpha+d)} \\
&+ T^{\frac{d}{3\alpha+d}} 2^d Y_R (2  |\tilde{{\cal L}}_i|  \beta_2 
+ |{\cal L}_i|) + 2 Y_R {F_{\max} \choose \left\lceil F_{\max}/2   \right\rceil} \beta_2,
\end{align*}
i.e., $R_i(T) = O \left(Z_i T^{\frac{2\alpha+d}{3\alpha+d}} \right)$,
where $Z_i = |{\cal L}_i| + |\tilde{{\cal L}}_i| {F_{\max} \choose \left\lceil F_{\max}/2   \right\rceil} $.
\end{theorem}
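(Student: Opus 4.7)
The plan is to decompose the regret as $R^i(T) = E[R^i_e(T)] + E[R^i_s(T)] + E[R^i_n(T)]$, apply Lemmas \ref{lemma:explorations}, \ref{lemma:suboptimal1}, and \ref{lemma:nearoptimal}, and then pick the free parameters $z$, $\gamma$, $\theta$, $a_1$ so that the dominant powers of $T$ across the three components balance and the feasibility condition required by Lemmas \ref{lemma:suboptimal1} and \ref{lemma:nearoptimal} is satisfied for every $t \geq 1$.

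From Lemmas \ref{lemma:explorations} and \ref{lemma:suboptimal1}, the leading $T$-exponents in $E[R^i_e(T)]$ and $E[R^i_s(T)]$ are $z+\gamma d$ and $\gamma d + z/2$ respectively, while Lemma \ref{lemma:nearoptimal} contributes $T^{1+\theta}$. The condition $2 L Y_R (\sqrt{d})^\alpha t^{-\gamma\alpha} + 2(Y_R+2) t^{-z/2} \leq a_1 t^\theta$ forces $\theta \geq \max(-\gamma\alpha, -z/2)$, with $a_1$ chosen large enough to dominate the leading constants on the left-hand side.

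I would therefore set $z/2 = \gamma\alpha$ so that the two rate-limiting exponents in the feasibility condition coincide, take $\theta = -\gamma\alpha$, and pick $a_1 = 2L Y_R d^{\alpha/2} + 2(Y_R+2)$. With $z = 2\gamma\alpha$, the exploration and suboptimal exponents collapse to $\gamma(2\alpha+d)$ and $\gamma(\alpha+d)$, while the near-optimal exponent becomes $1-\gamma\alpha$. Balancing $\gamma(2\alpha+d) = 1-\gamma\alpha$ gives $\gamma = 1/(3\alpha+d)$ and $z = 2\alpha/(3\alpha+d)$, precisely the choices in the theorem, and yields the common leading rate $T^{(2\alpha+d)/(3\alpha+d)}$.

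The remainder is bookkeeping: substitute these values of $z$, $\gamma$, $\theta$, $a_1$ into the three lemma bounds, use $\lceil T^{\gamma}\rceil \leq 2 T^{\gamma}$ to bound $(m_T)^d$ by $2^d T^{\gamma d}$, keep the $\log T$ factor attached only to the exploration term from Lemma \ref{lemma:explorations}, and collect the three lower-order polynomial pieces into the stated $T^{(\alpha+d)/(3\alpha+d)}$ and $T^{d/(3\alpha+d)}$ terms along with the constant $2 Y_R \binom{F_{\max}}{\lceil F_{\max}/2 \rceil} \beta_2$ residual from Lemma \ref{lemma:nearoptimal}. I expect the only delicate point to be correctly tracking the coefficient from $2 a_1/(1+\theta)$ and grouping the combinatorial factor $\binom{F_{\max}}{\lceil F_{\max}/2 \rceil}$ so that it matches the definition of $Z_i = |{\cal L}_i| + |\tilde{{\cal L}}_i| \binom{F_{\max}}{\lceil F_{\max}/2 \rceil}$; no new probabilistic argument is needed beyond the three lemmas.
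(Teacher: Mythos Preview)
Your proposal is correct and follows essentially the same route as the paper: decompose $R^i(T)$ into $E[R^i_e(T)]+E[R^i_s(T)]+E[R^i_n(T)]$, invoke Lemmas \ref{lemma:explorations}--\ref{lemma:nearoptimal}, and optimize $z,\gamma,\theta,a_1$ subject to the feasibility constraint, arriving at $\theta=-z/2$, $\gamma=z/(2\alpha)$, $z=2\alpha/(3\alpha+d)$, and $a_1=2Y_R(Ld^{\alpha/2}+1)+4$ (which coincides with your $2LY_Rd^{\alpha/2}+2(Y_R+2)$). Your explicit exponent-balancing argument is in fact more detailed than the paper's own proof, which simply asserts the optimizing values and sums the three lemma bounds.
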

\begin{proof}
The highest orders of regret come from explorations and near optimal arms, which are $O(T^{\gamma d + z})$ and $O(T^{1+\theta})$ respectively. We need to optimize them with respect to the constraint
%
$2 L Y_R ( \sqrt{d})^\alpha t^{- \gamma \alpha} + 2 (Y_R + 2) t^{-z/2} \leq a_1 t^\theta$, 
which is assumed in Lemmas \ref{lemma:suboptimal1} and \ref{lemma:nearoptimal}.
The values that minimize the regret for which this constraint holds are $\theta = -z/2$, $\gamma = z/(2\alpha)$ $a_1 = 2 Y_R ( Ld^{\alpha/2} + 1) + 4$ and $z = 2\alpha/(3\alpha+d)$. 
The result follows from summing the bounds in Lemmas \ref{lemma:explorations}, \ref{lemma:suboptimal1} and \ref{lemma:nearoptimal}. 
\end{proof}

\newc{
\begin{remark}
A uniform partition of the context space such as ${\cal I}_T$ may not be very efficient when user arrivals have contexts that are concentrated into some (unknown) regions of the context space. 
In such a case, it is better to explore and train these regions more frequently than regions with few context arrivals.
%
%
Algorithms that start with the entire context space as a single set and then adaptively partition the context space into smaller regions as more and more users arrive may achieve faster convergence rates, i.e., smaller regret, for the types of arrivals mentioned above. Such an algorithm that will work for $N=1$ is given in \cite{cem2013deccontext}.
\end{remark}
}

\newc{Theorem \ref{theorem:cos} indicates that agent $i$ can achieve sublinear regret with respect to the best ``oracle" recommendation strategy which knows the purchase probabilities of all the items in ${\cal F}$. However, the learning rate of CBMR-d can be slow when $M$, $N$ and ${\cal F}$ are large since the set of actions is combinatorial in these parameters. As a final remark, since regret is sublinear, the average reward of CBMR-d converges to the average reward of the best ``oracle" recommendation strategy, i.e., $\lim_{T \rightarrow \infty} R^i(T)/T = 0$.
}

\comment{
\subsection{Analysis of the regret under sales commission}

In this section we analyze the regret of CBMA, assuming that the purchase probability of an item is independent of the purchase probabilities of other recommended items. A discussion when purchase probability depends on the set of items recommended together is given in Section \ref{}

\subsection{Analysis of the regret under referral commission}
}

\vspace{-0.1in}
\subsection{Definition of CBMR-ind} \label{sec:contextindep}

In this subsection we describe the learning algorithm CBMR-ind. We assume that Assumption \ref{ass:independent} holds.
let ${\cal J}_{i,j} := \{ 1_j, 2_j, \ldots, N_j\}$ denote the set of the number of recommendations agent $i$ can request from agent $j$, where we use the subscript $j$ to denote that the recommendations are requested from agent $j$.
Let $\tilde{{\cal J}}_i  :=\cup_{j \in {\cal M}_{-i}} {\cal J}_{i,j}$, and ${\cal J}_i := {\cal F}_i \cup \tilde{{\cal J}}_i$ be the set of {\em arms} of agent $i$. 
We have $|{\cal J}_i| = |{\cal F}_i| + (M-1)N$.
We denote an arm of agent $i$ by index $u$. 
For arm $u \in \tilde{{\cal J}}_i$, let $j(u)$ be the agent that is called for item recommendations to agent $i$'s user, and let $n(u)$ be the number of requested items from agent $j(u)$.\footnote{$j(u)$, $u \in \tilde{{\cal J}}_i$ is different from $j(f)$, $f \in {\cal F}$, which denotes the agent that owns item $f$, and $n(u)$ is different from $n_j(k_i)$, which denotes the number of items agent $j$ should recommend when agent $i$ chooses action $k_i \in {\cal L}_i$.}
For $u \in {\cal F}_i$, $j(u)=i$ and $n(u)=1$.
In CBMR-ind, at each time $t$, agent $i$ chooses a set of arms such that the total number of item recommendations it makes to its user is $N$.
It is evident that choosing such a set of arms is equivalent to choosing an action $k_i \in {\cal L}_i$. 
Every action $k_i\in {\cal L}_i$ maps to a unique set of arms. 
Thus, for an action $k_i \in {\cal L}_i$, let ${\cal A}(k_i)$ be the set of arms corresponding to $k_i$.

Let ${\cal G}_{j,n}(x) \subset {\cal F}_j$ be the set of $n$ items in ${\cal F}_j$ with the highest purchase probabilities for a user with context $x$.
For an arm $u \in {\cal F}_i$, let its {\em purchase rate} for a user with context $x$ be $\nu_{i,u}(x):=q_u(x)$, and for an arm $u \in \tilde{{\cal J}_i}$, let it be $\nu_{i,u}(x) := \sum_{f \in {\cal G}_{j(u),n(u)}  } q_f(x)$.
Since Assumption \ref{ass:independent} holds, we have
\begin{align*}
\mu_{i, k_i}(x) = \sum_{u \in {\cal F}_i \cap {\cal A}(k_i)} p^i_u \nu_{i,u}(x) 
+ \sum_{u \in {\cal A}(k_i) - {\cal F}_i} c_{i,j(u)} \nu_{i,u}(x) .
\end{align*}
Different from CBMR-d, which estimates the reward of each action in ${\cal L}_i$ separately, CBMR-ind estimates the {\em purchase rates} of the arms in ${\cal J}_i$, and uses them to construct the estimated rewards of actions in ${\cal L}_i$.
%
The advantage of CBMR-ind is that the purchase rate estimate of an arm $u \in {\cal J}_i$ can be updated based on the purchase feedback whenever any action ${\cal L}_i$ that contains arm $u$ is selected by agent $i$. 
%

The pseudocode of CBMR-ind is given in Fig. \ref{fig:CBMRind}. CMBR-ind partitions the context space in the same way as CBMR-d.
Unlike CBMR-d, CBMR-ind does not have exploration, exploitation and training phases for actions $k_i \in {\cal L}_i$. Rather than that, it has exploration and exploitation phases for each arm $u \in {\cal F}_i$, and exploration, exploitation and training phases for each arm $u \in \tilde{{\cal J}_i}$. 
Since a combination of arms is selected at each time slot, selected arms can be in different phases.
For each $u \in {\cal J}_i$ and $I_l \in {\cal I}_T$, CBMR-ind keeps the {\em sample mean purchase rate} $\bar{\nu}^i_{u,l}(t)$, which is the sample mean of the number of purchased items corresponding to arm $u$, purchased by users with contexts in $I_l$ in exploration and exploitation phases of agent $i$ by time $t$.

Similar to the counters and control functions of CBMR-d, CBMR-ind also keeps counters and control functions for each $u \in {\cal J}_i$ and $I_l \in {\cal I}_T$. 
Basically, for $u \in {\cal J}_i$, $N^i_{u,l}(t)$ counts the number of times arm $u$ is selected by agent $i$ to make a recommendation to a user with context in $I_l$ by time $t$. Counters $N^i_{1,u,l}(t)$ and $N^i_{2,u,l}(t)$ are only kept for arms $u \in \tilde{{\cal J}}_i$. The former one counts the number of times arm $u$ is trained by agent $i$ for times its users had contexts in $I_l$ by time $t$, while the latter one counts the number of times arm $u$ is explored and exploited by agent $i$ for times its users had contexts in $I_l$ by time $t$.
Let 
\begin{align}
&{\cal S}^{\textrm{ind}}_{i,l}(t) := \left\{ u \in {\cal F}_i \textrm{ such that } N^i_{u,l}(t) \leq D_1(t)  \textrm{ or } u \in \tilde{{\cal J}}_i  \right. \notag \\
& \left. \textrm{ such that } N^i_{1,u,l}(t) \leq D_{2,u}(t) \textrm{ or } N^i_{2,u,l}(t) \leq D_{3}(t)  \right\}, \label{eqn:theset} 
\end{align} 
where $D_1(t)$, $D_{2,u}(t)$, $u \in \tilde{{\cal J}}_i$ and $D_3(t)$ are counters similar to the counters of CBMR-d. Assume that $x_i(t) \in I_l$.
If ${\cal S}^{\textrm{ind}}_{i,l}(t) \neq \emptyset$, then agent $i$ randomly chooses an action $\alpha_i(t) \in {\cal L}_i$ from the set of actions for which ${\cal A}(k_i) \cap {\cal S}^{\textrm{ind}}_{i,l}(t) \neq \emptyset$.
Else if ${\cal S}^{\textrm{ind}}_{i,l}(t) =\emptyset$, then agent $i$ will choose the action
\begin{align}
\alpha_i(t) = \argmax_{k_i \in {\cal L}_i} \sum_{u \in {\cal F}_i \cap {\cal A}(k_i)} 
\hspace{-0.2in} p^i_u \bar{\nu}^i_{u,l}(t)
+ \hspace{-0.2in}  \sum_{u \in {\cal A}(k_i) - {\cal F}_i}  \hspace{-0.2in}  c_{i,j(u)} \bar{\nu}^i_{u,l}(t). \label{eqn:indepmaxim}
\end{align}
After action $\alpha_i(t)$ is chosen, the counters and sample mean purchase rates of arms in ${\cal A}(\alpha_i(t))$ are updated based on in which phase they are in.

Agent $i$ responds to item requests of other agents with users with contexts in $I_l \in {\cal I}_T$ in the following way. Any under-explored item $u \in {\cal F}_i$, i.e., $N^i_{u,l}(t) \leq D_1(t)$ is given priority to be recommended. If the number of under-explored items is less than the number of requested items, then the remaining items are selected from the set of items in $u \in {\cal F}_i$ with the highest sample mean purchase rates such that $N^i_{u,l}(t) > D_1(t)$. The pseudocode of this is not given due to limited space.

\begin{figure}[htb]
\fbox {
\begin{minipage}{0.95\columnwidth}
{\fontsize{10}{10}\selectfont
\flushleft{Context Based Multiple Recommendations for independent purchase probabilities (CBMR-ind for agent $i$):}
\begin{algorithmic}[1]
\STATE{Input: $D_1(t)$, $D_{2,u}(t)$, $u \in \tilde{{\cal J}}_{i}$, $D_3(t)$, $T$, $m_T$.}
\STATE{Initialize: Partition $[0,1]^d$ into $(m_T)^d$ sets, indexed by the set ${\cal I}_T = \{ 1,2,\ldots, (m_T)^d\}$.}
\STATE{$N^i_{u,l}=0, \forall u \in {\cal J}_i, I_l \in {\cal I}_T$, $N^i_{1,u,l}=0, N^i_{2,u,l}=0, \forall u \in \tilde{{\cal J}}_{i}, l \in {\cal I}_T$.}
\STATE{$\bar{\nu}^i_{u,l}=0, \forall u \in {\cal J}_i, I_l \in {\cal I}_T$.}
\WHILE{$t \geq 1$}
\FOR{$l=1,\ldots,(m_T)^d$}
\IF{$x_i(t) \in I_l$}
\IF{${\cal S}^{\textrm{ind}}_{i,l}(t) \neq \emptyset$ (given in (\ref{eqn:theset}))}
\STATE{Choose $\alpha_i(t)$ randomly from the subset of ${\cal L}_i$ such that ${\cal A}(\alpha_i(t)) \cap {\cal S}^{\textrm{ind}}_{i,l}(t) \neq \emptyset$.}
\ELSE
\STATE{Choose $\alpha_i(t)$ as in (\ref{eqn:indepmaxim}).}
\ENDIF
\ENDIF
\ENDFOR
\STATE{Receive reward $O^i_{\boldsymbol{\alpha}, \boldsymbol{\beta}}(t, x_i(t))$ given in (\ref{eqn:thisreward}).}
\FOR{$u \in {\cal A}(\alpha_i(t))$}
\IF{$u \in \tilde{{\cal J}}_i$ and $N^i_{1,u,l} \leq D_{2,u}(t)$}
\STATE{$N^i_{1,u,l}++$.}
\ELSIF{$u \in \tilde{{\cal J}}_i$ and $N^i_{1,u,l} > D_{2,u}(t)$}
\STATE{$\bar{\nu}^i_{u,l} = \frac{N^i_{2,u,l} \bar{\nu}^i_{u,l} + U^{j(u)}_{\boldsymbol{\alpha}, \boldsymbol{\beta}}(t, x_{i}(t))}{N^i_{2,u,l} +1}$ (given in (\ref{eqn:thiscom})).}
\STATE{$N^i_{2,u,l}++$.}
\ELSE
\STATE{$\bar{\nu}^i_{u,l} = (N^i_{u,l} \bar{\nu}^i_{u,l} + Y^{\alpha_i}_u(t))/(N^i_{2,u,l} +1)$.}
\STATE{$N^i_{u,l}++$.}
\ENDIF
\ENDFOR
\STATE{$t=t+1$.}
\ENDWHILE
\end{algorithmic}
}
\end{minipage}
} \caption{Pseudocode of CBMR-ind for agent $i$.} \label{fig:CBMRind}
\vspace{-0.2in}
\end{figure}

\comment{
\begin{figure}[htb]
\fbox {
\begin{minipage}{0.95\columnwidth}
{\fontsize{10}{10}\selectfont
{\bf Choose}($N$, ${\cal N}$, $\boldsymbol{r}$):
\begin{algorithmic}[1]
\STATE{Select arms  $u \in {\cal J}_i - {\cal N}$ such that $u \notin  {\cal J}_{i,j}$ if $\exists$ $u' \in {\cal N} \cap {\cal J}_{i,j}$ for $j \in {\cal M}_{-i}$, $\sum_{u} n_u \leq N$ and $\sum_{u} r_u$ is maximized.}
\end{algorithmic}
{\bf Play}(${\cal N}$, ${\cal N}_1$, ${\cal N}_2$, ${\cal N}_3$ $\boldsymbol{N}$, $\boldsymbol{r}$):
\begin{algorithmic}[1]
\STATE{Take action ${\cal N}$, get the recommendations of other agents, recommend ${\cal N}_i(t)$ to the user.}
\FOR{$u \in {\cal N}$}
\IF{$u \in {\cal N}_i(t) \cap {\cal F}_i$}
\STATE{Receive reward $r_u(t) = I(u \in F_i(t))$. $r_u = \frac{N_{u,l} r_u + r_u(t) }{N_{u,l} + 1}$, $N_{u,l}++$.}
\ELSIF{$u \in ({\cal N} - {\cal F}_i) \cap {\cal N}^t_i$}
\STATE{Receive reward $r_u(t) = \sum_{f \in {\cal F}_j} I(f \in F_i(t))$, $N_{1,u,l}++$ }
\ELSE
\STATE{Receive reward $r_u(t) = \sum_{f \in {\cal F}_j} I(f \in F_i(t))$, $r_u = \frac{N_{2,u,l} r_u + r_u(t) }{N_{2,u,l} + 1}$, $N_{2,u,l}++$ $N_{2,u,l}++$}
\ENDIF
\ENDFOR
\end{algorithmic}
}
\end{minipage}
} \caption{Pseudocode of the choose and play modules for agent $i$.} \label{fig:mtrain2}
\vspace{-0.3in}
\end{figure}
}

\vspace{-0.1in}
\subsection{Analysis of the regret of CBMR-ind}\label{sec:CBMRindan}

In this subsection we bound the regret of CBMR-ind. Under Assumption \ref{ass:independent}, the expected reward of an item $f$ to agent $i$ for a user with context $x$ is $\kappa_{i,f}(x) := p^i_f q_f(x)$ for $f \in {\cal F}_i$ and $\kappa_{i,f}(x) := c_{i,j} q_f(x)$ for $f \in {\cal F}_j$. 
%
%
%
For a set of items ${\cal N}$, let $f_n({\cal N})$ denote the item in ${\cal N}$ with the $n$th highest expected reward for agent $i$. 
For an item $f \in {\cal F}$ and $I_l \in {\cal I}_T$, let $\underline{\kappa}_{i,f,l} := \inf_{x \in I_l} \kappa_{i,f}(x)$, and $\bar{\kappa}_{i,f,l} := \sup_{x \in I_l} \kappa_{i,f}(x)$.
For the set $I_l$ of the partition ${\cal I}_T$, the set of suboptimal arms of agent $i$ at time $t$ is given by
\begin{align*}
{\cal U}^i_{\theta,a_1,l}(t) &:= \left\{ u \in {\cal F}_i: \underline{\kappa}_{i, f_{N}({\cal F}), l}  - \bar{\kappa}_{i, u, l} \geq a_1 t^{\theta} \right\} \\
& \cup
\left\{  u \in \tilde{{\cal J}}_i :
\underline{\kappa}_{i, f_{N}({\cal F}), l} - \bar{\kappa}_{i, f_{n(u)}({\cal F}_{j(u)}),l} \geq a_1 t^{\theta} \right\}.
\end{align*}
We will optimize over $a_1$ and $\theta$ as we did in Section \ref{sec:analCBMR_d}.
The set of near-optimal arms at time $t$ for $I_l$ is ${\cal J}_i - {\cal U}^i_{\theta,a_1,l}(t)$.
Similar to the approach we took in Section \ref{sec:analCBMR_d}, we divide the regret into three parts: $R^i_e(T)$, $R^i_s(T)$ and $R^i_n(T)$, and bound them individually. 
Different from the analysis of CBMR-d, here $R^i_s(T)$ denotes the regret in time slots in which all selected arms are exploited and at least one them is suboptimal, while $R^i_n(T)$ denotes the regret in time slots in which all selected arms are exploited and all of them are near-optimal. 
In the following lemma, we bound the regret of CBMR-ind due to explorations and trainings. 
 
\begin{lemma}\label{lemma:explorationsind}
When CBMR-ind is run by the agents with $D_1(t) = t^{z} \log t$,
%
$D_{2,u}(t) =  {F_{\max} \choose n(u)} t^{z} \log t$,
%
$D_3(t) = t^{z} \log t$ and $m_T = \left\lceil T^{\gamma} \right\rceil$,
where $0<z<1$ and $0<\gamma<1/d$, we have
\add{\vspace{-0.1in}}
\begin{align*}
& E[R^i_e(T)] \leq
Y_R 2^d (|{\cal J}_i| + (M-1) N) T^{\gamma d} \\
&+Y_R 2^d \left( |{\cal J}_i| + (M-1) \sum_{z=1}^N {F_{\max} \choose z} \right) T^{z+\gamma d} \log T.
\end{align*}
\com{$(m_T)^d ({\cal J}_i + (M-1) N)$ comes from the fact that we have to explore/train when we have the number of explorations/trainings is equal to the control function.}
\end{lemma}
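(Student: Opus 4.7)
The plan is to decompose the exploration/training regret by the partition sets of ${\cal I}_T$, bound the number of training/exploration slots per set, and then aggregate. The structure mirrors Lemma \ref{lemma:explorations} but accounts for the combinatorial arm structure of CBMR-ind.

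First, I fix a set $I_l \in {\cal I}_T$ and focus only on time slots $t$ with $x_i(t) \in I_l$. At any such slot the algorithm is in training/exploration mode iff ${\cal S}^{\textrm{ind}}_{i,l}(t) \neq \emptyset$; when this holds, the chosen action $\alpha_i(t)$ is constrained to contain at least one arm $u \in {\cal S}^{\textrm{ind}}_{i,l}(t)$. Inspecting the counter-update loop in the pseudocode of CBMR-ind, the fact that $u$ appears in ${\cal A}(\alpha_i(t))$ means that one of $N^i_{u,l}$ (if $u \in {\cal F}_i$), $N^i_{1,u,l}$ or $N^i_{2,u,l}$ (if $u \in \tilde{{\cal J}}_i$) is incremented at that slot. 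Hence every training/exploration slot contributes at least one unit to the aggregate counter total across all arms in ${\cal J}_i$, giving a one-to-many map from slots to increments that upper-bounds the number of training/exploration slots by the total counter budget.

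Second, each counter is capped by its control function: $N^i_{u,l}(T) \leq D_1(T)+1 = T^z \log T + 1$ for $u \in {\cal F}_i$; $N^i_{1,u,l}(T) \leq D_{2,u}(T)+1 = \binom{F_{\max}}{n(u)} T^z \log T + 1$ and $N^i_{2,u,l}(T) \leq D_3(T)+1 = T^z \log T + 1$ for $u \in \tilde{{\cal J}}_i$. Since $\tilde{{\cal J}}_i = \bigcup_{j \in {\cal M}_{-i}} {\cal J}_{i,j}$ with $|{\cal J}_{i,j}| = N$ and the arms in ${\cal J}_{i,j}$ range over $n(u) \in \{1,\ldots,N\}$, summing over all arms gives a per-set bound on the number of training/exploration slots of
\begin{align*}
& \underbrace{\left(|{\cal F}_i| + (M-1)N\right)}_{=|{\cal J}_i|} T^z \log T + (M-1)\sum_{n=1}^{N} \binom{F_{\max}}{n} T^z \log T  \\
& + |{\cal J}_i| + (M-1)N.
\end{align*}

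Third, the one-slot regret at any exploration or training slot is at most $Y_R$ (the reward is non-negative and the optimal reward is at most $Y_R$), and the number of sets obeys $(m_T)^d \leq 2^d T^{\gamma d}$. Multiplying the per-set slot bound by $Y_R \cdot 2^d T^{\gamma d}$ produces exactly the two terms in the stated inequality. The main obstacle is the counting step in the second paragraph: because a single training/exploration slot can increment counters for several arms simultaneously (one action typically involves multiple arms), a direct per-arm summation overcounts, but this overcounting is harmless for an upper bound and we only need to be careful to charge each arm $u \in \tilde{{\cal J}}_i$ to its own training threshold $D_{2,u}$ (which depends on $n(u)$) when doing the sum.
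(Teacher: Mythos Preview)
Your decomposition and counting scheme are essentially the same as the paper's own proof, which simply asserts that for each $I_l$ the exploration regret is at most $|{\cal J}_i|\lceil T^z\log T\rceil$ and the training regret at most $(M-1)\sum_{n=1}^N\binom{F_{\max}}{n}\lceil T^z\log T\rceil$, then sums over the $(m_T)^d\leq 2^d T^{\gamma d}$ sets and multiplies by $Y_R$.

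There is, however, one genuine slip in your write-up. You assert that ``each counter is capped by its control function: $N^i_{u,l}(T)\leq D_1(T)+1$'', and similarly for $N^i_{1,u,l}$ and $N^i_{2,u,l}$. This is false: the counter-update loop in the pseudocode runs for \emph{every} arm in ${\cal A}(\alpha_i(t))$ at \emph{every} time step, including exploitation steps, so a counter can exceed its control function by an arbitrary amount. What is actually bounded is the number of increments that occur \emph{while the counter is still at or below its threshold}. The correct argument is: at each training/exploration slot pick one arm $u\in{\cal A}(\alpha_i(t))\cap{\cal S}^{\textrm{ind}}_{i,l}(t)$ and charge the slot to the relevant counter of $u$ (namely $N^i_{u,l}$ if $u\in{\cal F}_i$; $N^i_{1,u,l}$ if $u\in\tilde{{\cal J}}_i$ and $N^i_{1,u,l}(t)\leq D_{2,u}(t)$; otherwise $N^i_{2,u,l}$). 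Each time a counter is charged it is incremented and was $\leq$ its threshold at that moment, so it can be charged at most $D_\bullet(T)+1$ times by monotonicity of the control functions. Summing these per-counter charge caps gives exactly the per-set totals you wrote down. With this correction your proof goes through and matches the paper's.
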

\delete{
\begin{proof}
For a set $I_l \in {\cal I}_T$, the regret due to explorations is bounded by $|{\cal J}_i| \left\lceil T^z \log T \right\rceil$. Agent $i$ spends at most $\sum_{z=1}^N {F_{\max} \choose z} \left\lceil T^z \log T \right\rceil$ time steps to train agent $j$. Note that this is the worst-case number of trainings for which agent $j$ does not learn about the purchase probabilities of its items in set $I_l$ from its own users, and from the users of agents other than agent $i$. 
The result follows from summing over all sets in ${\cal I}_T$.
\end{proof}
}
\comment{\com{The regret can be much better than this. For example when arm $1_j$ is trained, the regret due to $D_{2,1_j}$ is at most $F_{\max} T^z \log T$. I think we can write the regret due to trainings as 
%
(M-1) \sum_{z=1}^N {F_{\max} \choose z} T^z \log T.
\end{align*}
}}

\rev{In the next lemma, we bound $E[R^i_s(T)]$.}

\begin{lemma} \label{lemma:suboptimal1ind}
When CBMR-ind is run by the agents with $D_1(t) = t^{z} \log t$,
%
$D_{2,u}(t) = {F_{\max} \choose n(u)} t^{z} \log t$,
%
$D_3(t) = t^{z} \log t$ and $m_T = \left\lceil T^{\gamma} \right\rceil$, where $0<z<1$ and $0<\gamma<1/d$, given that
%
$2 L Y_R ( \sqrt{d})^\alpha t^{- \gamma \alpha} + 2 (Y_R + 2) t^{-z/2} \leq a_1 t^\theta$, 
we have
%
\begin{align*}
E[R^i_s(T)] \leq   2^{d+1} Y_R |{\cal J}_i| \beta_2 T^{\gamma d} 
+ 2^{d+2} N Y_R |\tilde{\cal J}_i| \beta_2 T^{\gamma d + z/2}/z .
\end{align*}
\end{lemma}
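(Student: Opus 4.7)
The plan is to mirror the proof of Lemma \ref{lemma:suboptimal1} but carry out the analysis at the arm level rather than the action level, exploiting the fact that under Assumption \ref{ass:independent} the purchase rate of each arm $u \in {\cal J}_i$ can be estimated independently. First I would characterize the exploitation event at time $t$ for context in $I_l$ by ${\cal S}^{\textrm{ind}}_{i,l}(t) = \emptyset$; on this event every arm $u \in {\cal J}_i$ has been sampled at least $t^z \log t$ times in non-training phases, and every $u \in \tilde{{\cal J}}_i$ has been trained at least $D_{2,u}(t)$ times by the agent $j(u)$. This guarantees that all sample mean purchase rates compared in (\ref{eqn:indepmaxim}) are concentrated around their expectations.

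Since the chosen action $\alpha_i(t)$ is suboptimal only if at least one arm in ${\cal A}(\alpha_i(t))$ is suboptimal, a union bound yields
\begin{align*}
E[R^i_s(T)] \leq Y_R \sum_{l \in {\cal I}_T} \sum_{t=1}^T \sum_{u \in {\cal U}^i_{\theta,a_1,l}(t)} P(u \in {\cal A}(\alpha_i(t)),\, {\cal S}^{\textrm{ind}}_{i,l}(t) = \emptyset).
\end{align*}
For each suboptimal arm $u \in {\cal F}_i$, the inclusion of $u$ in the argmax forces its sample mean contribution to dominate that of some excluded near-optimal arm. I would control this by introducing the best/worst i.i.d.\ surrogate processes (as defined in the paragraph preceding Lemma \ref{lemma:suboptimal1}), applying a Chernoff--Hoeffding bound with slack $H_t = 2 t^{-z/2}$, and invoking the H\"older condition in Assumption \ref{ass:independent} together with the hypothesis $2 L Y_R (\sqrt{d})^\alpha t^{-\gamma\alpha} + 2(Y_R + 2) t^{-z/2} \leq a_1 t^{\theta}$ to force a strict separation of sample means. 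This yields a per-$t$ probability of order $1/t^2$ for own arms.

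The main obstacle lies in arms $u \in \tilde{{\cal J}}_i$, for which $\bar\nu^i_{u,l}(t)$ depends not only on whether agent $i$ has enough samples but also on whether agent $j(u)$ has learned to return its best $n(u)$ items from ${\cal F}_{j(u)}$. Following the auxiliary-event technique of Lemma \ref{lemma:suboptimal1}, I would define an event ${\cal B}^i_{u,l}(t)$ stating that the number of suboptimal sets returned by $j(u)$ by time $t$ is at most $t^{1-z/2}$; its complement is handled by Markov's inequality applied to the count of $j(u)$'s suboptimal recommendations, whose expectation is $O(\beta_2)$ because $j(u)$'s own sample mean purchase rates are sharp after $D_1(t)$ exploitation samples per item (a second Chernoff--Hoeffding bound, bootstrapped exactly as in the analysis for $u \in {\cal F}_i$). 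This gives $P({\cal B}^i_{u,l}(t)^c, {\cal S}^{\textrm{ind}}_{i,l}(t) = \emptyset) = O(\beta_2 / t^{1-z/2})$. The factor $N$ in the bound is absorbed here because each external arm $u$ contributes up to $n(u) \leq N$ items to the recommendation set, so a single misstep by $j(u)$ can corrupt up to $N$ slots.

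Finally, I would sum the per-$t$ estimates using $\sum_{t \geq 1} 1/t^2 \leq \beta_2$ for the Chernoff terms and $\sum_{t=1}^T t^{-(1-z/2)} \leq 2 T^{z/2}/z$ for the Markov terms, multiply by the number of sets $(m_T)^d \leq 2^d T^{\gamma d}$, and by $|{\cal J}_i|$ or $|\tilde{{\cal J}}_i|$ respectively; collecting the two contributions and absorbing $Y_R$ gives the claimed bound.
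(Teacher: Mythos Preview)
Your proposal is correct and follows essentially the same approach as the paper's own proof: union-bound over suboptimal arms on the exploitation event, Chernoff--Hoeffding with the best/worst surrogate processes for own arms giving $O(1/t^2)$, and for external arms $u \in \tilde{{\cal J}}_i$ the additional Markov-inequality argument on the count of bad recommendations from $j(u)$ giving the extra $O(N\beta_2/t^{1-z/2})$ term, then summing over $t$ and the $(m_T)^d$ cells. The only minor imprecision is your explanation of the factor $N$: in the paper it arises from a union bound over the $n(u)\le N$ items that agent $j(u)$ must return (bounding the probability that at least one is not among its top $n(u)$), not from ``corrupting up to $N$ slots''; but this does not affect the structure or the final bound.
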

\delete{
\begin{proof}
Let $\Omega$ denote the space of all possible outcomes, and let $w$ be a sample path. Let ${\cal W}^i_{l}(t) := \{ w : S_{i,l}(t) = \emptyset \}$ denote the event that CBMR-ind is in the exploitation phase at time $t$.
The idea is to bound the probability that agent $i$ selects at least one
suboptimal arm in an exploitation step, and then using this to bound the expected number of times a suboptimal arm is selected by agent $i$.
Let ${\cal V}^i_{u,l}(t)$ be the event that a suboptimal action $u \in {\cal J}_i$ is chosen at time $t$ by agent $i$. We have
%
$E[R^i_s(T)] \leq Y_R \sum_{l \in {\cal I}_T} \sum_{t=1}^T \sum_{u \in {\cal U}^i_{\theta,l}(t)} P({\cal V}^i_{u,l}(t), {\cal W}^i_{l}(t) )$.

Similar to the proof of Lemma 2 in \cite{cem2013deccontext}, by using a Chernoff bound it can be shown that for any $u \in {\cal U}^i_{\theta,l}(t) \cap {\cal F}_i$, we have $P({\cal V}^i_{u,l}(t), {\cal W}^i_{l}(t) ) \leq 2/t^2$, and for any $u \in {\cal U}^i_{\theta,l}(t) \cap \tilde{{\cal J}}_i$, we have $P({\cal V}^i_{u,l}(t), {\cal W}^i_{l}(t) ) \leq 2/t^2 + 2 N |{\cal F}_{u(j)}| \beta_2/ t^{1-z/2}$.
Here different from Lemma 2 in \cite{cem2013deccontext}, $N$ comes from a union bound which is required to bound the probability that agent $j$ will recommend an item which is not in the set of the best $n(u)$ items of agent $j$ when agent $i$ chooses arm $u$.
\newc{We get the final regret result by summing the $P({\cal V}^i_{u,l}(t), {\cal W}^i_{l}(t) )$s over the set of suboptimal arms in each set in the partition ${\cal I}_T$, over the sets in the partition ${\cal I}_T$, and over time, and then by using the result in Appendix A in \cite{cem2013deccontext}.} 
\end{proof}
}

Different from Lemma \ref{lemma:suboptimal1}, $E[R^i_s(T)]$ is linear in $|{\cal J}_i|$ instead of in $|{\cal L}_i|$.
In the next lemma, we bound the $E[R^i_n(T)]$.

\begin{lemma} \label{lemma:nearoptimalind}
When CBMR-ind is run by the agents with $D_1(t) = t^{z} \log t$,
%
$D_{2,u}(t) = {F_{\max} \choose n(u)} t^{z} \log t$,
%
$D_3(t) = t^{z} \log t$ and $m_T = \left\lceil T^{\gamma} \right\rceil$, where $0<z<1$ and $0<\gamma<1/d$, given that
%
$2 L Y_R ( \sqrt{d})^\alpha t^{- \gamma \alpha} + 2 (Y_R + 2) t^{-z/2} \leq a_1 t^\theta$,
we have
\begin{align*}
E[R^i_n(T)] \leq (2 N a_1 T^{1+\theta})/(1+\theta) + 2 Y_R N F_{\max} \beta_2.
\end{align*}
\end{lemma}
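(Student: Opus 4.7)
The plan is to decompose $R^i_n(T)$ into two contributions, mirroring the approach of Lemma \ref{lemma:nearoptimal} but adapted to the arm-based structure of CBMR-ind, and then bound each part separately. The first part accounts for the definitional gap of near-optimal arms; the second accounts for the possibility that an external agent $j(u)$, when queried via a near-optimal arm $u \in \tilde{{\cal J}}_i$, still returns a set of items that is not among the top-$n(u)$ items in ${\cal F}_{j(u)}$ for the current context.

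For the first part, I would observe that on any exploitation slot in which every selected arm is near-optimal, by definition of ${\cal U}^i_{\theta,a_1,l}(t)$, each chosen arm $u$ has expected-reward deficit at most $a_1 t^\theta$ relative to the corresponding slot in the optimal action $k^*_i(x_i(t))$. Since the chosen action $\alpha_i(t)$ contains arms whose requested items sum to $N$, summing these deficits over the arms in the action yields at most $N a_1 t^\theta$. For arms $u \in \tilde{{\cal J}}_i$, even when $u$ itself is near-optimal, agent $j(u)$'s \emph{best-response} recommendation (the top-$n(u)$ set in ${\cal F}_{j(u)}$) is what enters the definition of the arm's true purchase rate, and this can be an additional $a_1 t^\theta$ away from the optimum that would be achieved if $j(u)$ recommended the truly best set for context $x_i(t)$ (not just for the center of $I_l$). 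Combining these two sources doubles the per-slot gap, giving the bound $2 N a_1 t^\theta$, and summing over $t \leq T$ with the standard integral estimate from Appendix A of \cite{cem2013deccontext} gives the first term $2 N a_1 T^{1+\theta}/(1+\theta)$.

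For the second part, one must bound the expected number of exploitation slots on which $j(u)$ actually returns items that lie outside the top-$n(u)$ set of ${\cal F}_{j(u)}$. Under the standing hypothesis $2 L Y_R (\sqrt{d})^\alpha t^{-\gamma\alpha} + 2(Y_R+2) t^{-z/2} \leq a_1 t^\theta$, the control function $D_1(t) = t^{z}\log t$ forces every arm of $j(u)$ in the relevant set $I_l$ to have accumulated at least $t^z \log t$ purchase-rate samples before any exploitation can begin. A Chernoff-Hoeffding argument identical in structure to the one used in the proof of Lemma \ref{lemma:suboptimal1ind}, together with a union bound over the at most $N$ external arms present in any action and the at most $F_{\max}$ suboptimal items in ${\cal F}_{j(u)}$, then gives per-slot probability $O(N F_{\max}/t^2)$ of a suboptimal return. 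Summing yields an expected count bounded by $2 N F_{\max} \beta_2$, and since each such event costs at most $Y_R$ in reward, the additive contribution is $2 Y_R N F_{\max} \beta_2$.

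The main obstacle will be the careful bookkeeping across the multiple arms selected in a single exploitation slot: one must separately isolate the ``near-optimality gap'' event from the ``$j(u)$ returns a suboptimal set'' event for each arm, so that the two sources of regret are charged exactly once and the union bound factor of $N$ appears in the correct term. Once this accounting is made clean, the remaining steps — the geometric summation $\sum_{t=1}^T t^\theta \leq T^{1+\theta}/(1+\theta)$ and the $\sum_t 1/t^2 \leq \beta_2$ estimate — are routine and parallel those of Lemmas \ref{lemma:nearoptimal} and \ref{lemma:suboptimal1ind}.
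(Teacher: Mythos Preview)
Your proposal is correct and follows essentially the same approach as the paper's own proof: decompose the near-optimal exploitation regret into (i) the per-slot gap when every selected arm and every item returned by a called agent is near-optimal, bounded by $2Na_1 t^\theta$ and summed via $\sum_{t=1}^T t^\theta \leq T^{1+\theta}/(1+\theta)$, and (ii) the residual event that some called agent $j(u)$ returns an item outside its top-$n(u)$ set, whose expected count is bounded by $2NF_{\max}\beta_2$ using the Chernoff--Hoeffding argument (the paper invokes ``a result similar to Lemma~3 in \cite{cem2013deccontext}'') and then multiplied by the worst-case per-slot loss $Y_R$. Your justification for the doubling factor is phrased slightly differently---the paper attributes the extra $a_1 t^\theta$ to the called agent selecting a near-optimal (rather than exactly optimal) set of items, not to a center-of-cell discrepancy---but the conclusion and the overall structure match.
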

\delete{
\begin{proof}
Since agent $i$ can choose at most $N$ arms at each time step, 
if all arms chosen at time $t$ by agent $i$ and by the other agents called by agent $i$ are near-optimal, then the contribution to the regret is bounded by $2 N a_1 t^{\theta}$. This is because for a near optimal arm $u \in {\cal F}_i$, the contribution to the regret is at most $a_1 t^{\theta}$, while for a near optimal arm in $u \in \tilde{{\cal J}}_i$, if agent $j(u)$ chooses its near optimal items to recommend to agent $i$, the contribution to the regret is at most $2 a_1 t^{\theta}$.
Therefore, the total regret due to near optimal arm selections by agent $i$ and the agents $i$ calls by time $T$ is upper bounded by 
$2 N a_1 \sum_{t=1}^T t^{\theta}  \leq (2 N a_1 T^{1+\theta})/(1+\theta)$
from the result in Appendix A in \cite{cem2013deccontext}. 

However, each time a near optimal arm in $u \in \tilde{{\cal J}}_i$ is chosen in an exploitation step, there is a small probability that an item chosen recommended by agent $j(u)$ is a suboptimal one. Using a result similar to Lemma 3 in \cite{cem2013deccontext}, the expected number of times a suboptimal arm is chosen by the called agent is bounded by $ n(u) F_{\max} \beta_2$. Each time a suboptimal item is chosen by the called agent $j(u)$, the regret can be at most $Y_R$. The result follows from summing these terms.
\com{Why not the expected number of times a suboptimal arm is chosen by the called agent is bounded by ${F_{\max} \choose n(u)} \beta_2$? Since agent $j$ wants to choose $n(u)$ arms with the highest purchase probabilities, by using a union bound it can be shown that probability that at least one of the chosen arms is suboptimal is bounded by $n(u) F_{\max} \beta_2$.}
\end{proof}
}

Combining the above lemmas, we obtain the regret bound for agents using CBMR-ind.
\begin{theorem}\label{theorem:cosind}
When CBMR-ind is run by the agents with $D_1(t) = t^{2\alpha/(3\alpha+d)} \log t$,
%
$D_{2,u}(t) = { F_{\max} \choose n(u) } t^{2\alpha/(3\alpha+d)} \log t$,
%
$D_3(t) = t^{2\alpha/(3\alpha+d)} \log t$ and $m_T = \lceil T^{1/(3\alpha + d)} \rceil$, we have
\begin{align*}
& R^i(T) \leq T^{\frac{2\alpha+d}{3\alpha+d}}
\left( \frac{4 N (Y_R (L d^{\alpha/2} +1)+1)}{(2\alpha+d)/(3\alpha+d)} + Y_R 2^d Z'_i \log T \right) \\
&+ T^{\frac{\alpha+d}{3\alpha+d}} \frac{2^{d+2} N Y_R |\tilde{{\cal J}}_i| \beta_2}{2\alpha/(3\alpha+d)} \\
&+ T^{\frac{d}{3\alpha+d}} 2^{d} Y_R (2 |{\cal J}_i|  \beta_2 + |{\cal J}_i| + (M-1)N  )
 + 2 Y_R N F_{\max} \beta_2,
\end{align*}
i.e., $R^i(T) = O \left( Z'_i T^{\frac{2\alpha+d}{3\alpha+d}} \right)$,
where $Z'_i = |{\cal J}_i| + (M-1) \sum_{z=1}^N {F_{\max} \choose z}$.
\end{theorem}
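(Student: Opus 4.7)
The plan is to follow the same template as the proof of Theorem \ref{theorem:cos}: decompose the regret as $R^i(T) = E[R^i_e(T)] + E[R^i_s(T)] + E[R^i_n(T)]$, substitute the bounds from Lemmas \ref{lemma:explorationsind}, \ref{lemma:suboptimal1ind}, and \ref{lemma:nearoptimalind}, and then choose the parameters $z$, $\gamma$, $\theta$, and $a_1$ that minimize the order of the resulting sum subject to the feasibility constraint of the last two lemmas.

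First I would read off the dominant orders from each lemma. Lemma \ref{lemma:explorationsind} contributes an exploration/training term of order $T^{z+\gamma d}\log T$ (the $T^{\gamma d}$ term is of strictly lower order). Lemma \ref{lemma:suboptimal1ind} contributes a suboptimal-exploitation term of order $T^{\gamma d + z/2}$. Lemma \ref{lemma:nearoptimalind} contributes a near-optimal term of order $T^{1+\theta}$, times a factor $N$ that comes from the fact that up to $N$ arms are selected per time slot. Crucially, these $N$ factors (and the $|{\cal J}_i|$ or $Z'_i$ factors) enter the constants but not the exponents, so the exponent-balancing step is structurally identical to the one in Theorem \ref{theorem:cos}.

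Next I would impose the constraint $2 L Y_R(\sqrt{d})^\alpha t^{-\gamma\alpha} + 2(Y_R+2) t^{-z/2} \leq a_1 t^\theta$. To make this hold for all $t\geq 1$ with as large $\theta$ (i.e. as small $|\theta|$) as possible, I would set the two negative exponents equal, choosing $-\gamma\alpha = -z/2$, hence $\gamma = z/(2\alpha)$, and $\theta = -z/2$, with $a_1 = 2Y_R(Ld^{\alpha/2}+1)+4$ absorbing the constants. Substituting into the three orders gives $T^{z(1+d/(2\alpha))}$ (explorations), $T^{z/2 + zd/(2\alpha)}$ (suboptimals, strictly smaller), and $T^{1-z/2}$ (near-optimal). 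Balancing the exploration and near-optimal exponents, $z(2\alpha+d)/(2\alpha) = 1-z/2$, yields $z = 2\alpha/(3\alpha+d)$, hence $\gamma = 1/(3\alpha+d)$, matching the stated choices of $D_1, D_{2,u}, D_3$ and $m_T$, and the dominant order $T^{(2\alpha+d)/(3\alpha+d)}$.

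Finally I would substitute these parameter values back into the three lemma bounds and collect the constants to obtain the explicit expression in the theorem statement, noting in particular that the $Z'_i = |{\cal J}_i| + (M-1)\sum_{z=1}^N \binom{F_{\max}}{z}$ factor appears in front of the $T^{(2\alpha+d)/(3\alpha+d)}\log T$ term because it is the coefficient of the $T^{z+\gamma d}\log T$ contribution in Lemma \ref{lemma:explorationsind}. The main obstacle here is essentially bookkeeping rather than a new conceptual step: since the exponent-balancing calculation is identical to that of Theorem \ref{theorem:cos}, the only care needed is to track the multiplicative $N$ from simultaneous selection of up to $N$ arms (which scales the near-optimal constant and the suboptimal constant but not the exponents) and to correctly identify $Z'_i$ as the analogue of $Z_i$ in the independent-purchase-probability setting.
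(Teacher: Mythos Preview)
Your proposal is correct and follows essentially the same approach as the paper's own proof: decompose the regret into the three pieces from Lemmas \ref{lemma:explorationsind}, \ref{lemma:suboptimal1ind}, \ref{lemma:nearoptimalind}, balance the dominant orders $T^{z+\gamma d}$ and $T^{1+\theta}$ subject to the constraint $2LY_R(\sqrt{d})^\alpha t^{-\gamma\alpha}+2(Y_R+2)t^{-z/2}\le a_1 t^\theta$, obtain $\theta=-z/2$, $\gamma=z/(2\alpha)$, $a_1=2Y_R(Ld^{\alpha/2}+1)+4$, $z=2\alpha/(3\alpha+d)$, and sum the bounds. Your additional remarks about tracking the $N$ and $Z'_i$ factors are accurate bookkeeping that the paper leaves implicit.
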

\begin{proof}
The highest orders of regret come from $E[R^i_e(T)]$ and $E[R^i_n(T)]$, which are $O(T^{\gamma d + z})$ and $O(T^{1+\theta})$, respectively. We need to optimize them with respect to the constraint
%
$2 L Y_R ( \sqrt{d})^\alpha t^{- \gamma \alpha} + 2 (Y_R + 2) t^{-z/2} \leq a_1 t^\theta$, 
which is assumed in Lemmas \ref{lemma:suboptimal1ind} and \ref{lemma:nearoptimalind}.
This gives us $\theta = -z/2$, $\gamma = z/(2\alpha)$ $a_1 = 2 Y_R (Ld^{\alpha/2} +1) + 4$ and $z = 2\alpha/(3\alpha+d)$. 
The result follows from summing the bounds in Lemmas \ref{lemma:explorationsind}, \ref{lemma:suboptimal1ind} and \ref{lemma:nearoptimalind}. 
\end{proof}

\newc{The result of Theorem \ref{theorem:cosind} indicates that the regret of CMBR-ind is sublinear in time and has the same time order as the regret of CBMR-d. However, the regret of CMBR-ind depends linearly on $|{\cal J}_i|$, which is much better than the linear dependence of the regret of CBMR-d on $|{\cal L}_i|$.}

We would also like to note that the number of trainings and explorations, and hence the regret, can be significantly reduced when agent $i$ knows $|{\cal F}_j|$ for all $j \in {\cal M}_{-i}$. 
In this case, agent $i$ can use the control function $D_{2,u}(t) := {|{\cal F}_{j(u)}| \choose n(u)} t^{2\alpha/(3\alpha+d)} \log t$ to decide if it needs to train arm $u$.

\vspace{-0.1in}
\subsection{Comparison of CBMR-d and CBMR-ind}

In this subsection we compare CBMR-d and CBMR-ind in terms of their regret bounds, training and exploration rates, and memory requirements. 
Note that the regret bound of CBMR-d depends on the size of the action space ${\cal L}_i$, which grows combinatorially with $M$ and $N$, and is an $N$ degree polynomial of $|{\cal F}_i|$. In contrast the size of the arm space ${\cal J}_i$ is just  $|{\cal F}_i| + (M-1)N$.
This is due to the fact that CBMR-d explores and exploits each action without exploiting the correlations between different actions. When the purchase probabilities depend on the set of items offered together, in the worst case there may be no correlation between rewards of different actions, and therefore the best one can do is to form independent sample mean estimates for each action in ${\cal K}_i$. 
\newc{
However, when the purchase probabilities are independent of the items offered together, since the expected reward of agent $i$ from an action $k_i \in {\cal L}_i$ is the sum of the expected rewards of the individual arms chosen by agent $i$ in that action, substantial improvement over the regret bound is possible due to smaller number of explorations and trainings.
}
%

Another advantage of CBMR-ind is that it requires a significantly smaller amount of memory than CBMR-d. CBMR-d needs to keep sample mean rewards and sample mean purchase rates for all actions and for all partitions of the context space, while CBMR-ind only needs to keep sample mean purchase rates for all arms and for all partitions of the context space.
%
%

\comment{
\subsection{Cooperative learning without commissions}

Our algorithms CBMR-d and CBMR-ind can be amended to answer the following interesting question: When can the agents cooperate without commissions?
%
One interesting way is to cooperate through reciprocity. This means that if agent $i$ recommends $j$'s item to its user, then agent $j$ will recommended $i$'s item to its user. 
By having a simple reciprocity agreement with agent $j$, agent $i$ allocates one slot in its set of recommended items to agent $j$. Agent $i$ will prefer reciprocity agreement over commissions $c_{i,j}$ and $c_{j,i}$ if the expected reward it makes by always recommending one item to $j$'s users and always using one of its own slots for $j$'s item is greater than the expected reward it can get from selling $j$'s item for commission $c_{i,j}$ and paying $c_{j,i}$ to agent $j$ for each item of agent $i$ sold by $j$.

This can be generalized to the case when agent $i$ can cooperate with more than one agent and allocate more than a single recommendation slot for each agent. 
If the users' context arrival distribution to agents are unknown but time-invariant, then agents can use CBMR-d and CBMR-ind together with the estimate of users' context arrival distributions to estimate if cooperation by reciprocity is better than commissions. Then agents can switch from commissions to reciprocity (and from reciprocity to commissions) if all agree to do so.
}


%
%
%

\comment{
\begin{theorem}\label{thm:reciprocity}
When context arrivals for agent $i$ are drawn from an unknown distribution $G_i$ on ${\cal X}$ and when Assumption \ref{ass:independent} holds, cooperation through simple reciprocity is profitable for agent $i$ if and only if
\begin{align}
c_{i,j} \int_{x \in {\cal X}} q_{f^*_j(x)} G_i(dx) 
+ \int_{x \in {\cal X}} (p_{f^*_{i}(x)} - c_{j,i}) q_{f^*_{i}(x)} G_j(dx) > \int_{x \in {\cal X}} p_{f^N_{i}(x)} q_{f^N_{i}(x)}, \label{eqn:recipro}
\end{align}
where $f^*_i(x)$ is the item in ${\cal F}_i$ that has the highest probability of being purchased by a user with context $x$, and $f^N_i(x)$ is the item in ${\cal F}_i$ with the highest expected reward for agent $i$ for a user with context $x$.
\end{theorem}
\begin{proof}
By having a simple reciprocity agreement with agent $j$, agent $i$ allocates one slot in its set of recommended items to agent $j$. Agent $i$ will only benefit from this cooperation if the expected commission agent $i$ obtains from agent $j$ and the expected profit agent $i$ makes through selling products to $j$'s users exceeds the expected profit of agent $i$ from recommending its own product for that recommendation slot. 
\end{proof}

\newc{The right hand side of (\ref{eqn:recipro}) is increasing in $c_{i,j}$, $q_{f^*_j(x)}$ while decreasing in $c_{j,i}$, and the left hand side of (\ref{eqn:recipro}) is increasing in $q_{f^N_i(x)}$  . This results in a tradeoff between the commissions of agent $i$ and agent $j$. Whether the simple reciprocity will be profitable to both $i$ and $j$ or not also depends on the context distributions.}

The result in Theorem \ref{thm:reciprocity} can also be generalized to the case when agent $i$ can cooperate with more than one agent and allocate more than a single recommendation slot for each agent. Since the agents do not know the user distributions $G_i$ and $G_j$ initially, they should learn over time if the reciprocity rule is profitable for them or not. Agent $i$ can use CBMR-ind together with sample mean estimation of the user arrival process for $i$ and $j$ to form an estimated version of the expressions in (\ref{eqn:recipro}), and then at any time $t$ which is an exploitation step for both agents, agents $i$ and $j$ will follow the simple reciprocity rule if (\ref{eqn:recipro}) holds for both of them. Otherwise, agents $i$ and $j$ will not cooperate. 
}

\section{Performance as a Function of the Network} \label{sec:discuss}


\comment{
\vspace{-0.2in}
\subsection{Adaptively choosing the partition of the context space}

For both CBMR-d and CBMR-ind, the context space ${\cal X}$ is uniformly partitioned into ${\cal I}_T$, and the best action or arms is selected for each set $l \in {\cal I}_T$ independently from the other sets in ${\cal I}_T$. The similarity information given in Assumption \ref{ass:dependent} is exploited to control the number of sets in the partition ${\cal I}_T$ that is necessary to bound the regret sublinearly over time. 
Uniform partitioning of the context may not be very efficicent when the context arrivals are concentrated into some (unknown) regions of the context space. Since the greatest contribution to the regret will come from user arrivals whose contexts are densely located at a specific region of the context space, it is better to explore and exploit that region of the context space by using more sets with smaller diameters, since the similarity information states that the closer the contexts are to each other, the closer their rewards should be to each other. 

Algorithms that start with the entire context space as a single set and which adaptively partition the context space into smaller regions as more and more users arrive can be developed to recommend multiple items to the users. Such an algorithm that will work for $N=1$ is given in \cite{cem2013deccontext}.
Note that the DCZA algorithm proposed in \cite{cem2013deccontext} and its regret bounds can be used for the case when each agent only recommends a single item to its user, i.e., $N=1$. Developing a similar algorithm which recommends multiple items will follow steps similar to the development of CBMR-d and CBMR-ind. The regret of the DCZA-like algorithm for dependent purchase probabilities will explicitly depend on the context arrival process, will be slightly larger than the regret of CBMR-d when the density of the context arrivals is uniform over the context space, and will be much smaller than CBMR-d when the context arrivals are densely located in some region of the context space. 
The poor performance of the DCZA-like algorithm for the first case discussed above is due to the fact that DCZA adaptively learns that the best partition of the context space is the uniform one, while CBMR-d starts with it. Therefore, the loss in the performance of DCZA is due to the learning of the optimal partition of the context space. 
}

\comment{
\subsection{Choosing commissions adaptively over time}

In our analysis of CBMR-d and CBMR-ind, the commissions for all agent $i,j \in {\cal M}$ are fixed to $c_{i,j}$ a priori. Although the regret bounds we derived for agent $i$ do not explicitly depend on the commissions, the expected total reward of the agent explicitly depends on them. In order to maximize its total reward from its own users, agent $i$ can adaptively select its commission amounts $c_{i,j}$, $j \in {\cal M}_{-i}$ such that it will still attract other agents to recommend items with high price and high purchase probabilities. 
While an agent learns about other agents' price and item purchase probabilities, it can periodically adjust its commissions. 
%
\com{is it true that both can benefit by adjusting commissions? Cem: Changed this sentence a little} Commissions can even be made item and context specific such that when agent $j$ recommends item $f_j$ to agent $i$, agent $i$ can learn over time to set its commission to $q_{f_j}(x) p_{f_j}$. \rev{This way agent $i$ can learn to maximize its reward from its own users over all possible commissions it can charge to other agents.}

Each agent can run CBMR-d and CBMR-ind in rounds of length $\tau$ time steps, where the commissions are fixed in each round. At the end of each round, based on the estimated reward it gets from agent $j$'s recommendations, agent $i$ can adjust the commission rate $c_{i,j}$. Since agents are cooperative agent $j$ will recommend the item $k_j$ to agent $i$ with the highest probability of being purchased along with the set of recommended items of agent $i$ given that $p_{k_j} \geq c_{i,j}$. Therefore $c_{i,j}$ acts as a constraint on the set of available items that agent $j$ can offer to agent $i$. Since this set is non-increasing in $c_{i,j}$, and since the item with the maximum purchase probability is non-increasing in the size of this set, there is an optimal value of commission $c^*_{i,j}$ which agent $i$ can charge to agent $j$ \rev{to maximize its total reward from its own users}. \com{optimal for which agent?} This can be learned online by the method described above. 
}

In our analysis in the previous section we assumed that all agents are directly connected to each other.
In reality some agents may not be connected to each other. For example, agents $i$ and $j$ can both be connected to agent $j'$, but there may not exist a direct link between agent $i$ and agent $j$. This can happen when, for example, a pair of companies has a trade agreement between each other, but there is no trade agreement between companies $i$ and $j$. We assume that whenever a trade link exists between agents $i$ and $j$ it is bidirectional. 
In this case, even though $i$ cannot ask $j$ for items to recommend, $i$ can ask $j'$ and $j'$ can ask $j$ for an item. Then, if $i$ sells $j$'s item, agent $i$ will get commission $c_{i,j'}$ from $j'$, while agent $j'$ will get commission $c_{j', j} + c_{i,j'}$ from agent $j$ so that it recovers the payment it made to agent $i$ from agent $j$. We call agent $j'$ the {\em relay agent}.
Let ${\cal M}^{\textrm{dir}}_i$ denote the set of agents that are directly connected to agent $i$.

If agent $i$ only cooperates with the agents in ${\cal M}^{\textrm{dir}}_i$ and all cooperating agents use CBMR-d or CBMR-ind, then agent $i$ can achieve the sublinear regret bounds given in Theorems \ref{theorem:cos} and \ref{theorem:cosind}, with respect to the optimal distributed recommendation policy involving only the agents in $\{i\} \cup {\cal M}^{\textrm{dir}}_i$.
However, since agent $i$ cannot exploit the advantage of the items of other agents which are in ${\cal M}_{-i} - {\cal M}^{\textrm{dir}}_i$, its regret can be linear with respect to the optimal distributed recommendation policy involving all the agents.

CBMR-d and CBMR-ind can be modified in the following way to account for agents distributed over the network. Let ${\cal M}_{i,j}(p)$ be the set of relay agents between agents $i$ and $j$ when they are connected through path $p$. Let $D_p := |{\cal M}_{i,j}(p)|$. Let $h_{1}(p)$ be the agent that is connected directly to agent $i$ in path $p$, let $h_2(p)$ be the agent that is connected directly to agent $h_1(p)$, and so on. Then, the commission framework is modified such that when agent $i$ sells agent $j$'s item it gets a commission $c_{i,h_1(p)}$ from agent $h_1(p)$, agent $h_1(p)$ gets a commission $c_{h_1(p), h_2(p)} + c_{i,h_1(p)}$ from agent $h_2(p)$, and so on, such that 
%
$c_{h_{D_p}(p), j} \leq p^j_{f}$,
%
where $f \in {\cal F}_j$ is the item recommended by agent $j$ to agent $i$'s user.
Using this scheme, all agents benefit from the series of transactions described above, thus it is better for them to cooperate based on the rules of the commission scheme than to act on their own.
We assume that agent $j$ will not recommend an item to agent $h_{D_p}(p)$ if $c_{h_{D_p}(p), j} > p^j_{f}$ for all $f \in {\cal F}_j$.
%
Assume a connected network of agents $G({\cal M}, E)$ in which the maximum degree is $D_k$ and the longest path has $D_h$ hops, where $E$ denotes the set of direct links between the agents. Assume that the commissions $c_{i,j}>0$ are given between any agent $i$ and $j$ with direct links. 
\newc{We define agent $i$'s regret $R^i_{G({\cal M},E)}(T)$ to be the difference between the expected total reward of the optimal policy for which agent $i$ has access to the set of all the items of all agents in the network $G({\cal M},E)$ (but if the item is in ${\cal F}_j$, it gets the commission from agent $j'$, which is the agent that is directly connected to agent $i$ in the lowest commission path from agent $i$ to agent $j$) and the expected total reward of the learning algorithm used by agent $i$.}
The exploration and training control functions are run using $(D_k)^{D_h} F_{\max}$ instead of $F_{\max}$. This way agent $i$ will help the relay agents learn the other agent's recommendations accurately such that sublinear regret bounds can be achieved. The following theorem gives a bound on the regret of agents when they use the modified version of CBMR-ind discussed above, which we call CBMR-ind-N. A similar bound can also be proven for the modified version of CMBR-d.

\begin{theorem}\label{theorem:cosnetwork}
When Assumption \ref{ass:independent} holds, if CBMR-ind-N is run by all agents in $G({\cal M},E)$, with agent $i$ running CBMR-ind-N with the set of arms ${\cal J}_i$  as described in Section \ref{sec:contextindep}, defined using ${\cal M}^{\textrm{dir}}_i$ and ${\cal F}_i$ instead of ${\cal M}$ and ${\cal F}_i$,
control functions $D_1(t) = D_3(t) = t^{2\alpha/(3\alpha+d)} \log t$,
%
$D_{2,u}(t) = { (D_k)^{D_h} F_{\max} \choose n(u) } t^{2\alpha/(3\alpha+d)} \log t$,
%
and $m_T = \lceil T^{1/(3\alpha + d)} \rceil$; \newc{and if  commissions are such that all agents $j \in {\cal M}_{-i}$ will get a positive reward when their items are sold by agent $i$},\footnote{If the commission an agent needs to pay to sell an item to the user of agent $i$ is greater than the price of all the items of that agent, then that agent can be removed from the set of agents which agent $i$ can cooperate with. Then, our results will hold for the remaining set of agents.} we have,
\begin{align*}
& R^i_{G({\cal M},E)}(T) \leq  T^{\frac{\alpha+d}{3\alpha+d}} \frac{2^{d+2} N Y_R |\tilde{{\cal J}}_i| \beta_2}{2\alpha/(3\alpha+d)} \\
&+   2 Y_R N (D_k)^{D_h} F_{\max} \beta_2  \\
&+ T^{\frac{2\alpha+d}{3\alpha+d}}
\left( \frac{4 N (Y_R (L d^{\alpha/2} +1)+1)}{(2\alpha+d)/(3\alpha+d)} 
+ Y_R 2^d Z'_i \log T \right) \\ 
&+ T^{\frac{d}{3\alpha+d}} 2^{d} Y_R (2 |{\cal J}_i|  \beta_2 + |{\cal J}_i| + (M-1)N  ),
\end{align*}
i.e., $R^i_{G({\cal M},E)}(T) = O \left( Z'_i T^{\frac{2\alpha+d}{3\alpha+d}} \right)$,
where $Z'_i := |{\cal J}_i| + (M-1) \sum_{z=1}^N {(D_k)^{D_h} F_{\max} \choose z}$.
\end{theorem}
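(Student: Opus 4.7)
The plan is to extend the three-part regret decomposition and concentration argument used for Theorem \ref{theorem:cosind} to the network setting. From agent $i$'s perspective, each direct neighbor $j' \in {\cal M}^{\textrm{dir}}_i$ functions as an ``aggregated arm'' that, through the relay mechanism, may return items not only from ${\cal F}_{j'}$ but from any agent reachable in the subnetwork rooted at $j'$. Since the graph has maximum degree $D_k$ and the longest path has $D_h$ hops, the effective pool that any direct neighbor can draw from is bounded in size by $(D_k)^{D_h} F_{\max}$. I would therefore define the arm set ${\cal J}_i$ using ${\cal M}^{\textrm{dir}}_i$ as in Section \ref{sec:contextindep} and reuse the oracle benchmark from Theorem \ref{theorem:cosind}, but interpret it against the enlarged item pool reachable through all paths.

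First, I would decompose agent $i$'s regret as $E[R^i_e(T)] + E[R^i_s(T)] + E[R^i_n(T)]$. For $E[R^i_e(T)]$, the exploration count $D_1(t)$ on items in ${\cal F}_i$ is unchanged, but the training count $D_{2,u}(t)$ is inflated by the factor $\binom{(D_k)^{D_h} F_{\max}}{n(u)}$ instead of $\binom{F_{\max}}{n(u)}$. This inflated training is exactly what ensures that every relay on every length-at-most-$D_h$ path out of $j(u)$ has itself completed sufficient training for its candidate subsets, so that an exploitation decision by agent $i$ is made against accurate estimates on every level. Summing the at most $(m_T)^d \leq 2^d T^{\gamma d}$ context cells and the per-cell training/exploration budget yields the stated $Z'_i T^{(2\alpha+d)/(3\alpha+d)} \log T$ term with $Z'_i = |{\cal J}_i| + (M-1) \sum_{z=1}^N \binom{(D_k)^{D_h} F_{\max}}{z}$.

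For $E[R^i_s(T)]$ and $E[R^i_n(T)]$ I would mirror Lemmas \ref{lemma:suboptimal1ind} and \ref{lemma:nearoptimalind}. The Chernoff--Hoeffding argument showing that a sample mean of $t^z \log t$ independent bounded observations concentrates within $2 t^{-z/2}$ of its mean with probability $1 - O(1/t^2)$ is unchanged. The only modification is in the event ``a relay returns a suboptimal set,'' whose probability must be bounded recursively across the path: by the inflated $D_{2,u}(t)$, the probability that any single relay chooses suboptimally in an exploitation step is $O(1/t^2)$, and a union bound over at most $D_h$ hops and $(D_k)^{D_h} F_{\max}$ candidate items on each hop is absorbed into the constant in $Z'_i$. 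The same optimization as before, $\theta = -z/2$, $\gamma = z/(2\alpha)$, $a_1 = 2 Y_R(L d^{\alpha/2}+1)+4$, $z = 2\alpha/(3\alpha+d)$, balances the dominant terms.

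The main obstacle is the incentive-propagation step: one must verify that under the propagated commission scheme $c_{i,h_1(p)}, c_{h_1(p),h_2(p)} + c_{i,h_1(p)}, \ldots$ each relay $h_k(p)$ has strict incentive to recommend the item that maximizes the commission received by the requesting neighbor $h_{k-1}(p)$, so that the recursive best-response analysis reduces cleanly to the single-hop case handled in Theorem \ref{theorem:cosind}. Under the assumption that every agent receives a positive net reward when its item is sold (equivalently, $c_{h_{D_p}(p),j} \leq p^j_f$ and every intermediate commission is positive), this incentive compatibility holds at each hop. Combining the three regret components with the optimized parameters and the enlarged training budget then gives the claimed $O(Z'_i T^{(2\alpha+d)/(3\alpha+d)})$ bound.
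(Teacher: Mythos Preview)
Your proposal is correct and follows essentially the same route as the paper. The paper's own proof is a one-line sketch stating that the argument is ``similar to the proof of Theorem~\ref{theorem:cosind}, but more explorations and trainings are required to ensure that all agents in all paths learn the rewards of their arms accurately for each set in ${\cal I}_T$, before exploiting them''; your three-part decomposition, inflated training budget $\binom{(D_k)^{D_h} F_{\max}}{n(u)}$, recursive concentration across relays, and parameter optimization $\theta=-z/2$, $\gamma=z/(2\alpha)$, $z=2\alpha/(3\alpha+d)$ are exactly the intended elaboration of that sketch.
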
%
\begin{proof}
The proof is similar to the proof of Theorem \ref{theorem:cosind}, but more explorations and trainings are required to ensure that all agents in all paths learn the rewards of their arms accurately for each set in ${\cal I}_T$, before exploiting them.
\end{proof}

Theorem \ref{theorem:cosnetwork} indicates that the regret increases exponentially in $D_h$ and is an $D_h N$th degree polynomial in $D_k$. While this makes CBMR-ind-N impractical in non-small world networks, CBMR-ind-N can achieve small regret in small world networks in which most agents are not directly connected to each other but all agents can be reached a with small number of hops.
\newc{Because of the additional commission the relay agent gets, an item which $j$ will recommend when it is directly connected to agent $i$ may not be recommended by $j$ when it is connected via agent $j'$. This results in sub-optimality compared to the case when all agents are connected. In Section \ref{sec:num} we numerically compare the effect of the agents' commissions on the performance.}

More refined versions of Theorem \ref{theorem:cosnetwork} can be derived if we focus on specific networks. One interesting network structure is when there is a monopolist agent which is directly connected to all of the other agents, while other agents are only directly connected to the monopolist agent. Corollary \ref{cor:monopolist} gives the regret bound for agent $i$ when it is the monopolist agent, and Corollary \ref{cor:notmonopolist} gives the regret bound for agent $i$ when it is not the monopolist agent.

\begin{corollary}\label{cor:monopolist}
When agent $i$ is the monopolist agent, if it runs CBMR-ind-N with $D_{2,u}(t) = {F_{\max} \choose n(u)} t^{2\alpha/(3\alpha+d)} \log t$ and everything else remaining the same as in Theorem \ref{theorem:cosnetwork}, it will achieve $R^i_{G_1({\cal M},E)}(T) = O \left( ({\cal F}_i + (M-1) N) T^{\frac{2\alpha+d}{3\alpha+d}} \right)$.
\end{corollary}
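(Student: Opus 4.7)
The plan is to derive this corollary as a direct specialization of Theorem \ref{theorem:cosnetwork} to the star network $G_1({\cal M}, E)$ in which agent $i$ occupies the hub. The key observation is that the monopolist enjoys a direct link to every other agent, so the longest path from agent $i$ to any $j \in {\cal M}_{-i}$ has $D_h = 1$ hops and no relay agent is ever invoked. Consequently, the conservative factor $(D_k)^{D_h} F_{\max}$ that inflates the training control function in Theorem \ref{theorem:cosnetwork} collapses to just $F_{\max}$, which is precisely why the corollary permits the smaller $D_{2,u}(t) = \binom{F_{\max}}{n(u)} t^{2\alpha/(3\alpha+d)} \log t$.

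Next I would recompute the arm space and the quantity $Z'_i$ under these parameters. Because ${\cal M}^{\textrm{dir}}_i = {\cal M}_{-i}$, the set of arms satisfies $|\tilde{{\cal J}}_i| = (M-1)N$ and hence $|{\cal J}_i| = |{\cal F}_i| + (M-1)N$. The training cost term becomes $(M-1)\sum_{z=1}^N \binom{F_{\max}}{z}$ instead of the exponentially larger $(M-1)\sum_{z=1}^N \binom{(D_k)^{D_h}F_{\max}}{z}$ that would appear in a deep network. I then need to verify that Lemmas \ref{lemma:explorationsind}, \ref{lemma:suboptimal1ind}, and \ref{lemma:nearoptimalind} remain valid with the reduced control function: each peripheral agent $j$ is called only by the monopolist, so the number of distinct subsets of ${\cal F}_j$ it may be asked to recommend for a given set $I_l$ is at most $\binom{F_{\max}}{n(u)}$, which matches $D_{2,u}(t)$.

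Finally, I would substitute these values into the regret decomposition $E[R^i_e(T)] + E[R^i_s(T)] + E[R^i_n(T)]$ exactly as in the proof of Theorem \ref{theorem:cosind} (now applied on the star graph). With $z = 2\alpha/(3\alpha+d)$ and $\gamma = z/(2\alpha)$, the dominant time exponent remains $T^{(2\alpha+d)/(3\alpha+d)}$, whose coefficient is linear in $|{\cal J}_i| = |{\cal F}_i| + (M-1)N$ once $F_{\max}$ and $N$ are treated as constants absorbed into the Big-O. Summing the three lemmas then yields the claimed bound $R^i_{G_1({\cal M},E)}(T) = O\bigl((|{\cal F}_i| + (M-1)N)\, T^{(2\alpha+d)/(3\alpha+d)}\bigr)$.

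The step that requires the most care, though still routine, is the bookkeeping in the training lemma: one must argue that replacing $(D_k)^{D_h} F_{\max}$ by $F_{\max}$ inside the binomial coefficient does not invalidate the concentration arguments used to bound $E[R^i_s(T)]$ and $E[R^i_n(T)]$. This follows because in the star topology each peripheral agent $j$ responds only to requests from agent $i$, so the ``sufficiently trained'' event for $j$ needs to hold only with respect to the $\binom{F_{\max}}{n(u)}$ candidate subsets of ${\cal F}_j$, not the enlarged class arising from multi-hop relays. Once this reduction is justified, the remaining algebra is identical to the proof of Theorem \ref{theorem:cosind}.
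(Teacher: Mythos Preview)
Your approach is correct and matches the paper's intent: the paper states this corollary without proof, treating it as an immediate specialization of Theorem~\ref{theorem:cosnetwork}. Since the monopolist is directly connected to every other agent, i.e., ${\cal M}^{\textrm{dir}}_i = {\cal M}_{-i}$, its situation is identical to the fully connected setting of Theorem~\ref{theorem:cosind}, so the training inflation factor $(D_k)^{D_h}F_{\max}$ can be replaced by $F_{\max}$ and the bound follows with $|{\cal J}_i| = |{\cal F}_i| + (M-1)N$.

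One small point of care: $D_h$ in the paper is defined as a \emph{global} network parameter (the longest path in $G({\cal M},E)$), and in the star graph this equals $2$, not $1$. Your argument should not literally plug $D_h=1$ into $(D_k)^{D_h}F_{\max}$; rather, the correct justification is that because the monopolist never needs a relay agent, the relay-induced inflation of the training control function is unnecessary for \emph{this} agent, and the analysis reverts exactly to that of Theorem~\ref{theorem:cosind}. You already say this in substance, but phrasing it via ``$D_h=1$'' slightly misuses the paper's notation.
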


\begin{corollary}\label{cor:notmonopolist}
When agent $i$ is a non-monopolist agent, if it runs CBMR-ind-N with $D_{2,u}(t) = { M^2 F_{\max} \choose n(u)} t^{2\alpha/(3\alpha+d)} \log t$ and everything else remaining the same as in Theorem \ref{theorem:cosnetwork}, it will achieve $R^i_{G_2({\cal M},E)}(T) = O \left( ({\cal F}_i + N) M^{2N} T^{\frac{2\alpha+d}{3\alpha+d}} \right)$.
\end{corollary}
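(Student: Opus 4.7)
The plan is to specialize Theorem \ref{theorem:cosnetwork} to the star-network topology and then simplify the resulting bound. First I would identify the relevant graph parameters. In a star network, a non-monopolist agent $i$ has exactly one direct neighbor, namely the monopolist, so $|{\cal M}^{\textrm{dir}}_i| = 1$ and consequently $|\tilde{{\cal J}}_i| = N$ and $|{\cal J}_i| = |{\cal F}_i| + N$. Any other non-monopolist agent $j$ is reachable from $i$ in exactly two hops through the monopolist, so the longest relevant path has length $D_h = 2$. The maximum degree in the graph is realized at the monopolist and equals $D_k = M-1 \le M$. In particular $(D_k)^{D_h} \le M^2$, so choosing the training control function $D_{2,u}(t) = {M^2 F_{\max} \choose n(u)} t^{2\alpha/(3\alpha+d)} \log t$ dominates the one prescribed in Theorem \ref{theorem:cosnetwork} for this topology, so that each agent on any two-hop path from $i$ is trained sufficiently often.

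Next I would plug these values into the definition of $Z'_i$ appearing in Theorem \ref{theorem:cosnetwork}: with ${\cal M}^{\textrm{dir}}_i$ replacing ${\cal M}$ in the construction of $\tilde{{\cal J}}_i$, and with the revised $(D_k)^{D_h} F_{\max}$ replaced by $M^2 F_{\max}$, one obtains
\begin{align*}
Z'_i = (|{\cal F}_i| + N) + (M-1)\sum_{z=1}^{N} {M^2 F_{\max} \choose z}.
\end{align*}
Bounding the binomial sum by $N {M^2 F_{\max} \choose N} = O((M^2 F_{\max})^N)$ yields $Z'_i = O((|{\cal F}_i| + N) M^{2N})$, absorbing the constants depending on $F_{\max}$ and $N$ into the $O(\cdot)$. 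Substituting into the finite-time regret bound of Theorem \ref{theorem:cosnetwork} and keeping only the dominant term in $T$ gives $R^i_{G_2({\cal M},E)}(T) = O\bigl((|{\cal F}_i| + N) M^{2N} T^{(2\alpha+d)/(3\alpha+d)}\bigr)$, which is the claim.

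The routine work here is just careful bookkeeping: verifying that with $D_{2,u}(t)$ scaled by $M^2 F_{\max}$, the same three-part decomposition $R^i_e + R^i_s + R^i_n$ used in Lemmas \ref{lemma:explorationsind}--\ref{lemma:nearoptimalind} carries over with only the constants changed, and that the commission chain of length two remains profitable to every intermediate agent under the hypothesis on commissions inherited from Theorem \ref{theorem:cosnetwork}. The main (and only) obstacle I expect is checking that the benchmark used to define the regret is unchanged: the ``oracle'' in $R^i_{G_2({\cal M},E)}(T)$ has access to items in ${\cal F}$, but when it picks an item $f \in {\cal F}_j$ with $j$ non-monopolist, agent $i$ receives only the commission $c_{i,j'}$ from the monopolist $j'$, so the comparison is apples-to-apples with what CBMR-ind-N actually achieves at equilibrium, and no extra loss appears beyond the polynomial-in-$M$ blow-up in the training rate.
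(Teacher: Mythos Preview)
Your approach is exactly what the paper intends: the corollary is stated without proof as an immediate specialization of Theorem \ref{theorem:cosnetwork} to the star topology, and your identification of $D_h=2$, $D_k=M-1\le M$, and $|{\cal J}_i|=|{\cal F}_i|+N$ is correct.

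There is one small bookkeeping slip. You plug directly into the $Z'_i$ formula of Theorem \ref{theorem:cosnetwork}, keeping the factor $(M-1)$ in front of the binomial sum. That factor originates (via Lemma \ref{lemma:explorationsind}) from the number of \emph{direct} neighbors agent $i$ must train; in CBMR-ind-N the arm set is built from ${\cal M}^{\textrm{dir}}_i$, so for a non-monopolist agent this factor is $|{\cal M}^{\textrm{dir}}_i|=1$, not $M-1$. With $(M-1)$ retained, your bound on $Z'_i$ is actually $O(M^{2N+1})$, one power of $M$ worse than the $M^{2N}$ claimed in the corollary. Replacing $(M-1)$ by $|{\cal M}^{\textrm{dir}}_i|=1$ fixes this and yields $Z'_i = (|{\cal F}_i|+N) + \sum_{z=1}^N {M^2 F_{\max} \choose z} = O\bigl((|{\cal F}_i|+N)M^{2N}\bigr)$ as required. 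Everything else in your sketch, including the verification of the benchmark and the commission chain, is sound.
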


\comment{
\begin{figure}
\begin{center}
\includegraphics[width=0.3\columnwidth]{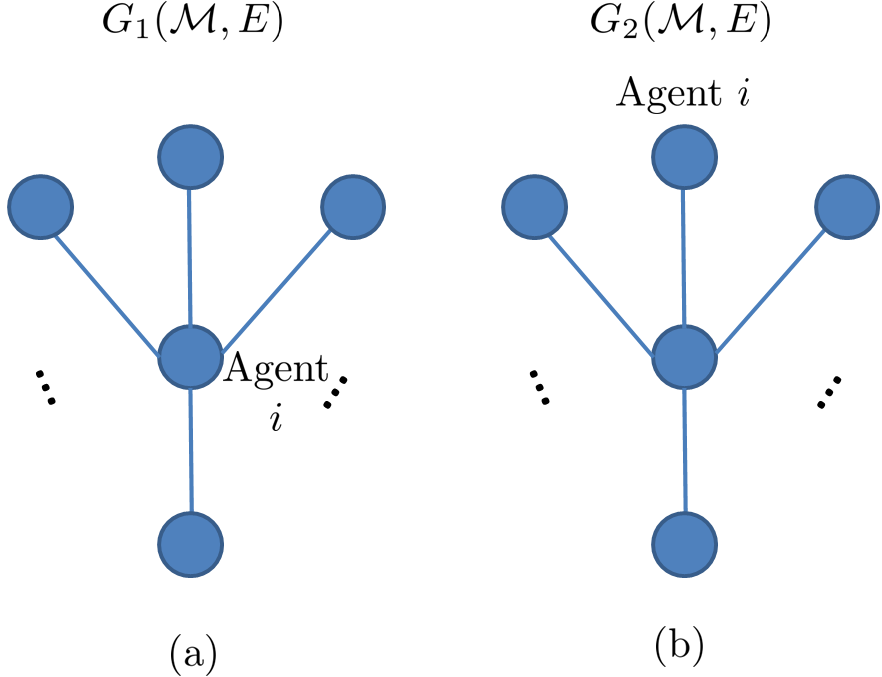}
\caption{Different types of network connectivity from the perspective of agent 1.} 
\label{networkcon2}
\end{center}
\end{figure}
}

Corollaries \ref{cor:monopolist} and \ref{cor:notmonopolist} imply that the learning is much faster, and hence the regret is much smaller when an agent has direct connections with more agents. This is because agent $i$ learns directly about the purchase probabilities of items in agent $j$'s inventory when it is directly connected to it, while the learning is indirect through a relay agent otherwise.

\vspace{-0.1in}
\section{Numerical Results} \label{sec:num}


\subsection{Description of the Data Set}

The Amazon product co-purchasing network data set includes product IDs, sales ranks of the products, and for each product the IDs of products that are frequently purchased with that product. This data is collected by crawling the Amazon website \cite{leskovec2007dynamics} and contains $410,236$ products and $3,356,824$ edges between products that are frequently co-purchased together. 
We simulate CBMR-ind and CBMR-d using the following distributed data sets adapted based on Amazon data. For a set of $N_1$ chosen products from the Amazon data set, we take these products and other $F_1$ products that are frequently co-purchased with the set of $N_1$ products as our set of items.

The set of products that are taken in the first step of the above procedure is denoted by ${\cal C}_h$. The set of all products, i.e., ${\cal F}$, contains these $N_1$ products and the set of products frequently co-purchased  with them, which we denote by ${\cal C}_f$.
We assume that each item has a unit price of $1$, but have different purchase probabilities for different types of users. 
Since user information is not present in the data set, we generate it artificially by assuming that every incoming user searches for a specific item in ${\cal C}_h$. This search query (item) will then be the context information of the user, hence the context space is ${\cal C}_h$. Thus, we set ${\cal I}_T = {\cal C}_h$.
Based on this, the agent that the user arrives to recommends $N$ items to the user.
The agent's goal is to maximize the total number of items sold to its users.

\newc{We generate the purchase probabilities in the following way: For group-dependent purchase probabilities, when $n$ of the products recommended for context $x \in {\cal C}_h$ are in the set of frequently co-purchased products with item $x$, then the purchase probability of each of these products will be $g_c(n) = 1 - an$, where $0<a<1/N$.}
%
%
For the other $N-n$ products which are not frequently co-purchased with item $x$, their purchase probability is $g_{nc} = b$, where $0<b<1-a$. 
For independent purchase probabilities, when a product recommended for context $x$ is in the set of frequently co-purchased products with item $x$, the purchase probability of that product will be $g_c$. When it is not, the purchase probability of that product will be $g_{nc}$, for which we have $g_c > g_{nc}$.

We assume that there are 3 agents and evaluate the performance of agent 1 based on the number of users arriving to agent 1 with a specific context $x^*$, which we take as the first item in set ${\cal C}_h$.
We assume that $T = 100,000$, which means that $100,000$ users with context $x^*$ arrive to agent 1. Since the arrival rate of context $x^*$ can be different for the agents, we assume arrivals with context $x^*$ to other agents are drawn from a random process. We take $N_1 = 20$, $F_1 =2$ and $N=2$. As a result, we get $30$ distinct items in ${\cal F}$ which are distributed among the agents such that $|{\cal F}_i|=10$ for every agent $i$.
Since the context space is discrete we have $d=1$, and there is no H\"{o}lder condition on the purchase probabilities as given in Assumptions \ref{ass:dependent} and \ref{ass:independent}, hence we take $\alpha=1/13$ such that $2\alpha/(3\alpha+d) = 1/8$. Unless otherwise stated, we assume that $g_c = 0.1$, $g_{nc} =0.01$, $a=0.5$ and $c_{i,j} = 0.5$.

%
%

\vspace{-0.1in}
\subsection{Comparison of the reward and regret of CBMR-d and CBMR-ind for group-dependent purchase probabilities}

We run both CBMR-d and CBMR-ind for group-dependent purchase probabilities assuming that both items that are frequently co-purchased with context $x^*$ are in agent 1's inventory. The ``oracle" optimal policy recommends one of the frequently co-purchased items and another item in agent 1's inventory to the user with context $x^*$, instead of recommending the two frequently co-purchased items together. Expected total reward of the ``oracle" optimal policy, total rewards of CBMR-d and CBMR-ind, and the number of trainings of CBMR-d and CBMR-ind are shown in Table \ref{tab:comparetwo} for agent 1. We have $|{\cal L}_1| = 58$ and $|{\cal J}_1| = 14$.

We see that the total reward of CBMR-ind is higher than the total reward of CBMR-d for this case. This is due to the fact that CBMR-d spends more than double the time CBMR-ind spends in training and exploration phases since it trains and explores each action in ${\cal L}_1$ separately. The time averaged regrets of CBMR-d and CBMR-ind are given in Fig. \ref{fig:comparetwo}. It is observed that CBMR-ind performs better than CBMR-d in all time slots, and the time averaged regret goes to $0$. From these results it seems that CBMR-ind is a good alternative to CBMR-d even for group-dependent purchase probabilities.

\begin{table}[b]
\vspace{-0.2in}
\centering
{\fontsize{10}{10}\selectfont
\setlength{\tabcolsep}{.3em}
\begin{tabular}{|l|c|c|c|}
\hline
 & Optimal & CBMR-d & CBMR-ind  \\
\hline
Total Reward & 11000 & 8724 & 9485\\
\hline
Trainings & - & 12391 & 5342 \\
\hline
\end{tabular}
}
\caption{Total rewards of the ``oracle" optimal policy, CBMR-d and CBMR-ind, and number of trainings of CBMR-d and CBMR-ind for group-dependent purchase probabilities.}
\label{tab:comparetwo}
\vspace{-0.1in}
\end{table}

\comment{
\begin{figure}
\centering
\parbox{7cm}{
	\includegraphics[width=7cm]{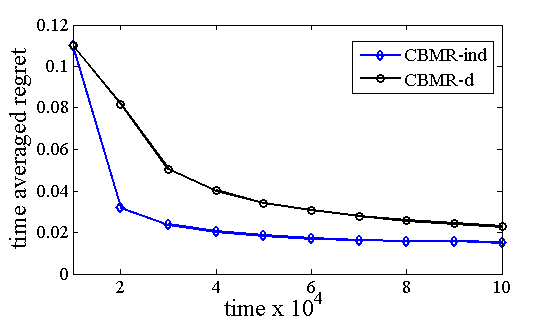}
	\vspace{-0.2in}
\caption{Time averaged regrets of CBMR-d and CBMR-ind for dependent purchase probabilities.} 
\label{fig:comparetwo}}%
\qquad
\begin{minipage}{7cm}
\includegraphics[width=7cm]{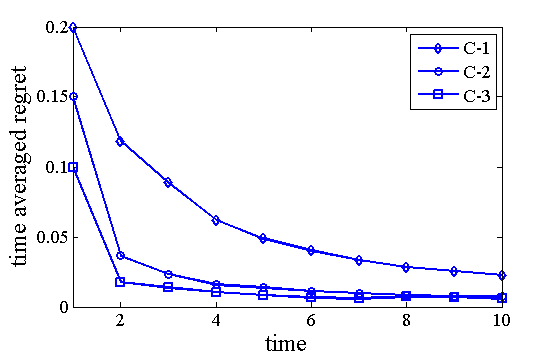}
\caption{Time averaged regret of CBMR-ind for independent purchase probabilities when agent 1 has both frequently co-purchased items (C-1), only one of the frequently co-purchased items (C-2) and none of the frequently co-purchased items (C-3).} 
\label{fig:setitems}
\end{minipage}
\vspace{-0.in}
\end{figure}
}
 
\begin{figure}[b]
\begin{center}
\includegraphics[width=0.9\columnwidth]{comparetwo}
\vspace{-0.1in}
\caption{Time averaged regrets of CBMR-d and CBMR-ind for group-dependent purchase probabilities.} 
\vspace{-0.2in}
\label{fig:comparetwo}
\end{center}
\end{figure}

\vspace{-0.1in}
\subsection{Effect of commission on the performance}

In this subsection we numerically simulate the effect of commissions $c_{1,j}$ that agent $1$ charges to other agents on the total reward of agent $1$. We consider CBMR-ind for independent purchase probabilities.
We assume that agent 1 has one of the frequently co-purchased items for context $x^*$, while agent 3 has the other frequently co-purchased item. The total reward of agent 1 as a function of the commissions $c_{1,2} = c_{1,3} =c$ is given in Table \ref{tab:commission}. We note that there is no significant difference in the total reward when the commission is  from $0$ to $0.1$. This is because $0.1$ commission is not enough to incentivize agent $1$ to recommend other agent's items to its users. 
However, for commissions greater than $0.1$, the optimal policy recommends the two frequently co-purchased items together, hence agent 1 learns that it should get recommendations from agent 3. Therefore, after $0.1$ the total reward of the agent is increasing in the commission. 
Another remark is that $c=0$ corresponds to the case when agent 1 is not connected to the other agents. Therefore, this example also illustrates how the rewards change as network connectivity changes. Since prices of items are set to 1 in this section, the commission agent 1 charges can increase up to 1. But if the prices are different, then the commission cannot exceed the recommended item's price. 
\newc{
In theory, in order to maximize its total reward from its own users, agent $i$ can adaptively select its commissions $c_{i,j}$, $j \in {\cal M}_{-i}$ based on what it learned about the purchase probabilities.  CBMR-d and CBMR-ind can be modified to adaptively select the commissions. Due to the limited space, we leave this as a future research topic.
%
}

\begin{table}[t]
\centering
{\fontsize{10}{10}\selectfont
\setlength{\tabcolsep}{.3em}
\begin{tabular}{|l|c|c|c|c|c|c|}
\hline
Commission $c$ & 0 & 0.1 & 0.2 & 0.3 & 0.4 & 0.5 \\
\hline
Reward  & 10471 & 10422 & 11476 & 12393 & 13340 & 14249  \\
(CBMR-ind) & & & & & & \\
\hline
\end{tabular}
}
\caption{Total reward of agent 1 as a function of the commission it charges to other agents.}
\label{tab:commission}
\vspace{-0.2in}
\end{table}

\vspace{-0.1in}
\subsection{Effect of the set of items of each agent on the performance}

In this subsection we compare three cases for independent purchase probabilities when agents use CBMR-ind. In C-1 agent $1$ has both items that are frequently co-purchased in context $x^*$, in C-2 it has one of the items that is frequently co-purchased in context $x^*$, and in C-3 it has none of the items that are frequently co-purchased in context $x^*$. The total reward of agent 1 for these cases is 17744, 14249 and 9402 respectively, while the total expected reward of the optimal policy is 20000, 15000 and 10000 respectively. Note that the total reward for C-3 is almost half of the total reward for C-1 since the commission agent 1 gets for a frequently co-purchased item is $0.5$. The time averaged regret of CBMR-ind for all these cases is given in Figure \ref{fig:setitems}. We see that the convergence rate for C-1 is slower than C-2 and C-3. This is due to the fact that in all of the training slots in C-1 a suboptimal set of items is recommended, while for C-2 and C-3 in some of the training slots the optimal set of items is recommended.

\begin{figure}[ht]
\begin{center}
\includegraphics[width=0.9\columnwidth]{setitems}
\vspace{-0.1in}
\caption{Time averaged regret of CBMR-ind for independent purchase probabilities when agent 1 has both frequently co-purchased items (C-1), only one of the frequently co-purchased items (C-2) and none of the frequently co-purchased items (C-3).} 
\vspace{-0.3in}
\label{fig:setitems}
\end{center}
\end{figure}

\comment{
\subsection{Effect of network connectivity on the performance}

In this subsection we consider the network topologies given in Fig. \ref{fig:networkcon}, for independent purchase probabilities using CBMR-ind. We assume that the items are randomly distributed to the agents such that their inventory sizes are almost the same. In Fig. \ref{fig:networkcon} (a), agent $1$ has individual agreements with all other agents, while in Fig. \ref{fig:networkcon} (b), agent $1$ only has an agreement with agent $j$, but agent $j$ can provide exchanges between $1$ and other agents, however agent $1$'s commission will be lower than what it gets from direct exchanges.\com{confusing sentence}

We assume that agent $1$ can only request an item from an agent it is connected to. If it sells an item of an agent which it is connected to via agent $j$, then agent $i$ gets commission $c_j = 0.05$, while if it sells an item of an agent which it is directly connected to it gets commission $c_i = 0.2$. In the first case agent $j$ also takes commission from the agent that recommends its item to agent $i$, and thus agent $i$'s commission is small. We compare the total reward agent $i$ gets under these two cases in Table \ref{}. 
}

\vspace{-0.2in}
\section{Conclusion} \label{sec:conc}
In this paper we have presented a novel set of algorithms for multi-agent learning within a decentralized social network, and characterized the effect of different network structures on performance. Our algorithms are able to achieve sublinear regret in all cases, with the regret being much smaller if the user's acceptance probabilities for different items are independent. This paper can be used as a groundwork for agents in many different types of networks to cooperate in a mutually beneficial manner, from companies, charities, and celebrities who wish to generate revenue to artists, musicians, and photographers who simply want to have their work publicized. By cooperating in a decentralized manner and using our algorithms, agents have the benefit of retaining their autonomy and privacy while still achieving near optimal performance. 
\vspace{-0.1in}
\bibliographystyle{IEEE}
\bibliography{OSA}

\comment{
\begin{IEEEbiography}
[{\includegraphics[width=1in, height=1.25in,clip,keepaspectratio]{cembio.jpg}}]{Cem Tekin}
Cem Tekin is a Postdoctoral Scholar at University of California, Los Angeles. He received the B.Sc. degree in electrical and electronics engineering from the Middle East Technical University, Ankara, Turkey, in 2008, the M.S.E. degree in electrical engineering: systems, M.S. degree in mathematics, Ph.D. degree in electrical engineering: systems from the University of
Michigan, Ann Arbor, in 2010, 2011 and 2013, respectively. His research interests include machine learning, multi-armed
bandit problems, data mining, multi-agent systems and game theory. He received the University of Michigan Electrical Engineering Departmental
Fellowship in 2008, and the Fred W. Ellersick award for the best paper in MILCOM 2009.
\end{IEEEbiography}

\begin{IEEEbiography}
[{\includegraphics[width=1in, height=1.25in,clip,keepaspectratio]{simpsonbio.jpg}}]{Simpson Zhang}
Simpson Zhang is a PhD candidate in the UCLA department of Economics. He received his Bachelors degree from Duke University with a double major in math and economics. His current research focuses on reputational mechanisms, multi-armed bandit problems, and network formation and design.
\end{IEEEbiography}

\begin{IEEEbiography}
[{\includegraphics[width=1in, height=1.25in,clip,keepaspectratio]{mihaelabio.jpg}}]{Mihaela van der Schaar}
Mihaela van der Schaar is Chancellor Professor of Electrical Engineering at University of California, Los
Angeles. Her research interests include network economics and game theory, online learning, dynamic multi-user networking and
communication, multimedia processing and systems, real-time stream mining. She is an IEEE Fellow, a Distinguished Lecturer of
the Communications Society for 2011-2012, the Editor in Chief of IEEE Transactions on Multimedia and a member of the Editorial
Board of the IEEE Journal on Selected Topics in Signal Processing. She received an NSF CAREER Award (2004), the Best Paper
Award from IEEE Transactions on Circuits and Systems for Video Technology (2005), the Okawa Foundation Award (2006), the
IBM Faculty Award (2005, 2007, 2008), the Most Cited Paper Award from EURASIP: Image Communications Journal (2006), the
Gamenets Conference Best Paper Award (2011) and the 2011 IEEE Circuits and Systems Society Darlington Award Best Paper Award.
She received three ISO awards for her contributions to the MPEG video compression and streaming international standardization
activities, and holds 33 granted US patents.
\end{IEEEbiography}
}

\end{document}